\documentclass[11pt]{article}

\usepackage[T1]{fontenc}
\usepackage[margin=1in]{geometry}
\usepackage{amsmath,amssymb,amsthm,mathtools}
\usepackage{aliascnt}
\usepackage{newpxtext,newpxmath}
\usepackage{microtype}
\usepackage{enumitem}
\usepackage{xcolor}
\usepackage{natbib}
\usepackage{tikz}
\usetikzlibrary{arrows.meta}
\usepackage{booktabs}
\usepackage[colorlinks=true,
            linkcolor=blue!55!black,
            citecolor=blue!55!black,
            urlcolor=blue!55!black]{hyperref}
\usepackage[nameinlink,capitalise,noabbrev]{cleveref}

\allowdisplaybreaks
\setlist[itemize]{leftmargin=2em,itemsep=2pt,topsep=4pt}
\setlist[enumerate]{leftmargin=2.4em,itemsep=2pt,topsep=4pt}

\usepackage{algorithm}
\usepackage[noend]{algpseudocode}

\newtheorem{theorem}{Theorem}[section]

\newaliascnt{lemma}{theorem}
\newtheorem{lemma}[lemma]{Lemma}
\aliascntresetthe{lemma}

\newaliascnt{proposition}{theorem}
\newtheorem{proposition}[proposition]{Proposition}
\aliascntresetthe{proposition}

\newaliascnt{corollary}{theorem}
\newtheorem{corollary}[corollary]{Corollary}
\aliascntresetthe{corollary}

\newaliascnt{claim}{theorem}

\aliascntresetthe{claim}

\theoremstyle{definition}

\newaliascnt{definition}{theorem}
\newtheorem{definition}[definition]{Definition}
\aliascntresetthe{definition}

\theoremstyle{remark}

\newaliascnt{remark}{theorem}
\newtheorem{remark}[remark]{Remark}
\aliascntresetthe{remark}

\crefname{theorem}{Theorem}{Theorems}
\Crefname{theorem}{Theorem}{Theorems}

\crefname{lemma}{Lemma}{Lemmas}
\Crefname{lemma}{Lemma}{Lemmas}

\crefname{proposition}{Proposition}{Propositions}
\Crefname{proposition}{Proposition}{Propositions}

\crefname{corollary}{Corollary}{Corollaries}
\Crefname{corollary}{Corollary}{Corollaries}

\crefname{claim}{Claim}{Claims}
\Crefname{claim}{Claim}{Claims}

\crefname{definition}{Definition}{Definitions}
\Crefname{definition}{Definition}{Definitions}

\crefname{remark}{Remark}{Remarks}
\Crefname{remark}{Remark}{Remarks}

\newcommand{\E}{\mathbb{E}}
\newcommand{\R}{\mathbb{R}}
\newcommand{\1}{\mathbf 1}
\newcommand{\cF}{\mathcal{F}}
\newcommand{\cG}{\mathcal{G}}
\newcommand{\cP}{\mathcal{P}}

\newcommand{\cR}{\mathcal{R}}
\newcommand{\cB}{\mathcal{B}}
\newcommand{\FB}{\operatorname{FB}}
\newcommand{\SB}{\operatorname{SB}}
\newcommand{\GFT}{\operatorname{GFT}}

\newcommand{\supp}{\operatorname{supp}}
\newcommand{\Beta}{\operatorname{Beta}}
\newcommand{\dd}{\,\mathrm d}
\newcommand{\vct}[1]{\boldsymbol{#1}}

\title{Second-Best Gains from Trade in Matching Markets}
\author{
Xiaohui Bei\thanks{School of Physical and Mathematical Sciences and College of Computing and Data Science, Nanyang Technological University, Singapore.
Email: \href{mailto:xhbei@ntu.edu.sg}{\nolinkurl{xhbei@ntu.edu.sg}}.}
\and
Bo Li\thanks{Department of Computing, The Hong Kong Polytechnic University, Hong Kong.
Email: \href{mailto:comp-bo.li@polyu.edu.hk}{\nolinkurl{comp-bo.li@polyu.edu.hk}}.}
\and
Wenhao Wu\thanks{School of Physical and Mathematical Sciences, Nanyang Technological University, Singapore.
Email: \href{mailto:wenhao008@e.ntu.edu.sg}{\nolinkurl{wenhao008@e.ntu.edu.sg}}.}
\and
Shengwei Zhou\thanks{School of Physical and Mathematical Sciences, Nanyang Technological University, Singapore.
Email: \href{mailto:s.arthur.zhou@gmail.com}{\nolinkurl{s.arthur.zhou@gmail.com}}.}
}
\date{}

\begin{document}

\maketitle

\begin{abstract}
We study gains from trade (GFT) in two-sided matching markets with independent private types and arbitrary downward-closed feasibility constraints.
The second-best benchmark is the maximum expected GFT achievable by a Bayesian incentive compatible, interim individually rational mechanism that is strongly budget balanced at every report profile.
These constraints generally preclude attaining the first-best GFT and raise the question of how much efficiency must be lost.
We prove that the second-best GFT is at least one half of the first-best GFT in every such matching market.
This recovers and generalizes the recent $1/2$ guarantee for bilateral trade by Liu et al.~\cite{LiuQinRenWang2026} to markets with multiple buyers and sellers and arbitrary downward-closed feasibility constraints.
Together with their matching lower bound for bilateral trade, our result establishes a tight worst-case ratio of $1/2$ for this general class of matching markets.
Our proof builds on the virtual-GFT framework of Br\"ustle et al.~\cite{BrustleCaiWuZhao2017} to reduce the problem to a one-parameter Lagrangian. 
The main step is a geometric, edge-by-edge analysis based on first-best edge-selection regions, combined with a randomized contraction in rank space.
\end{abstract}

\section{Introduction}

Two-sided markets play a central role in modern economies, bringing together buyers and sellers in settings ranging from labor and financial markets to online platforms for advertising, ride-sharing, and accommodation.
A central goal in designing such markets is economic efficiency: selecting a feasible set of trades that maximizes the total welfare they generate.
The resulting increase in welfare is known as the \emph{gains from trade} (GFT).
If buyers' values and sellers' costs were publicly known, the market designer would simply choose a feasible matching that maximizes the sum of value-minus-cost surpluses.
This full-information benchmark is the \emph{first best}.
The difficulty is that buyers privately know their values and sellers privately know their costs, so both sides may report strategically to increase their own utility.
This tension makes the first best generally unattainable.
The Myerson--Satterthwaite theorem already shows that, even in bilateral trade with a single buyer and a single seller, ex-post efficiency is incompatible with Bayesian incentive compatibility, individual rationality, and budget balance under standard nondegeneracy conditions~\cite{MyersonSatterthwaite1983}.
This motivates the \emph{second best}: the mechanism that maximizes expected GFT subject to the desired incentive, participation, and budget constraints.
The quantitative question we study is therefore
\begin{quote}
    \emph{What is the worst-case ratio of second-best GFT to first-best GFT?}
\end{quote}

A sequence of works has studied approximations to the first-best GFT in the simpler bilateral trade setting.
Deng et al. established the first constant approximation~\cite{DengMaoSivanWang2022}, and Fei subsequently improved the guarantee to $1/3.15$~\cite{Fei2022}.
Recent work has established that the exact worst-case efficiency ratio of the bilateral second best is one half~\cite{LiuQinRenWang2026}.

Matching markets generalize bilateral trade to multiple buyers and sellers connected by a bipartite graph, with trades allowed only along edges.
Existing work has established strong guarantees for simple mechanisms, specifically, the random-offerer mechanism, which randomizes uniformly between the generalized seller-offering and buyer-offering mechanisms.
Such a mechanism is dominant-strategy incentive compatible (DSIC), ex-post individually rational (IR), and ex-ante weakly budget balanced.
Br\"ustle et al. developed these generalized offering mechanisms and showed that the better of the two achieves one half of the second-best GFT~\cite{BrustleCaiWuZhao2017}.
More recently, Babaioff et al. proved that this equal randomization achieves a $1/3.15$-approximation to the \emph{first-best} GFT in matching markets~\cite{BabaioffRubinsteinTanWang2026}.
These results give strong guarantees for simple mechanisms, but they do not identify the exact gap between the unrestricted second best and the first best.
Together with the bilateral tight examples of Liu et al.~\cite{LiuQinRenWang2026}, they leave the universal worst-case ratio for this matching-market class between $1/3.15$ and $1/2$.

This paper closes that gap: every instance in this class satisfies the same $1/2$ guarantee, and this universal constant is tight over the class already in the bilateral special case.

\subsection{Our results}\label{subsec:our-results}

We consider a matching market with finitely many unit-demand buyers and unit-supply sellers.
Types are mutually independent and supported on $[0,1]$.
A feasible outcome is a matching from a fixed downward-closed family $\cF$.
Let $\FB$ denote the expected GFT of the first-best matching, which is the maximum welfare outcome at each profile, and let $\SB$ denote the maximum GFT attainable by a mechanism that is Bayesian incentive compatible (BIC), interim individually rational (interim IR), and exactly strongly budget balanced at every report profile.

Our main theorem determines the exact efficiency loss.

\begin{theorem}[informal]
For every matching-market instance $\mathcal I$,
\begin{equation*}
  \SB(\mathcal I)\ge \frac12\FB(\mathcal I).
\end{equation*}
Moreover, this ratio is tight.
\end{theorem}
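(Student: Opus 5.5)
The plan has three stages: reduce second-best optimization to a weak-duality bound, collapse the resulting Lagrangian to one parameter, and then bound an averaged virtual objective edge by edge.

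\textbf{Step 1: reduction to virtual GFT.} Tightness comes directly from the bilateral lower-bound instances of Liu et al.~\cite{LiuQinRenWang2026}, since bilateral trade is a matching market with a single edge. So all the work is in the lower bound. I will use the Lagrangian duality of Br\"ustle et al.~\cite{BrustleCaiWuZhao2017}. By Myerson-style payment identities, any BIC, interim IR allocation rule can be implemented with strong budget balance at every profile. The only binding constraint is then the ex-ante budget inequality: the expected virtual surplus $\E[\sum_{(i,j)} x_{ij}(\varphi_i(v_i)-\tau_j(c_j))]\ge 0$ must hold, where $\varphi_i$ and $\tau_j$ are the buyer and seller virtual values. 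A standard AGV/Cr\'emer--McLean-type transfer argument converts this ex-ante balance into ex-post strong budget balance without losing BIC or interim IR.

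Dualizing this constraint with a multiplier $\lambda\in[0,1]$ gives
\[
\SB\ \ge\ \max_{x}\ \E\Bigl[\sum x_{ij}\bigl((v_i-c_j)+\lambda(\varphi_i(v_i)-\tau_j(c_j))\bigr)\Bigr]
\]
subject to feasibility of the budget constraint. Interpolating $\lambda$ then reduces the problem to a one-parameter family. After rescaling, the objective becomes a convex combination of true surplus and virtual surplus, i.e., a single ``$\theta$-virtual'' GFT. Ironing is applied where needed to keep monotonicity, and the existence of a feasible $\lambda$ follows from a continuity argument in $\lambda$.

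\textbf{Step 2: a lower-bound allocation from first-best regions.} To lower bound $\SB$, I will restrict to allocations of the form ``run first best, then keep edge $(i,j)$ only on a sub-region.'' Because $\cF$ is downward closed, dropping edges from the first-best matching keeps it feasible. This makes everything edge-separable: for each edge $e=(i,j)$, let $R_e$ be the set of profiles where the first best selects $e$. Working in rank space $(q_i,r_j)=(F_i(v_i),G_j(c_j))$, I will use the randomized contraction. Draw a scale $t$ at random, and keep $e$ only when the contracted ranks still lie in $R_e$, e.g.\ $(q_i,r_j)\mapsto(1-t(1-q_i),\,t r_j)$. This yields monotone, BIC-implementable allocations. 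Averaging over $t$ with a carefully chosen density, I aim to show edge by edge that
\[
\E\bigl[\text{GFT kept on }e\bigr]\ \ge\ \tfrac12\,\E\bigl[(v_i-c_j)\1_{R_e}\bigr]
\]
while the averaged virtual surplus remains nonnegative. Summing over edges then gives the factor $1/2$.

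\textbf{Main obstacle.} The hardest step is this per-edge inequality. $R_e$ is not a product set: it depends on the other agents, and only downward-closed structure and monotonicity of $R_e$ in $(v_i,c_j)$ are available. My approach is to condition on all other agents' types. Given that conditioning, the first best selects $e$ on a region that is monotone in $v_i$ and $c_j$ (an upper-left staircase in rank space, by exchange arguments on the maximum-weight matching). This reduces the problem to a bilateral-trade-like claim for staircase regions. I would then prove that claim by integrating the virtual-value identity $\int_0^1 \varphi_i\,dq=0$ along each row and column of the staircase, mirroring the bilateral $1/2$ argument of Liu et al. The choice of contraction density, tuned so that the surplus and virtual-surplus terms balance, is where I expect the delicate calculation to lie.
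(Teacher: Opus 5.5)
Your skeleton matches the paper: a virtual-GFT Lagrangian, retaining a random sub-region of each first-best edge-selection region, feasibility by downward closedness, summing over edges, and tightness from the bilateral case. However, the inequality you aim for edge by edge is the wrong one, and the step that constitutes the proof is deferred with a contraction for which the natural calculation fails.

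\textbf{The target.} Relaxing $\cB(q)\ge0$ with a multiplier only gives $\SB\le\inf_{\alpha\ge0}\max_q\{\cG(q)+\alpha\cB(q)\}$, so your displayed ``$\SB\ge\cdots$'' is backwards. The paper uses strong LP duality on the finite-support program, reached via a one-sided rounding reduction: $\SB=\inf_{\alpha\ge0}\max_{q\in\cP}\{\cG(q)+\alpha\cB(q)\}$. So for each fixed $\alpha$ it suffices to exhibit any monotone feasible $q^\alpha$, budget-infeasible allowed, with $\cG(q^\alpha)+\alpha\cB(q^\alpha)\ge\frac12\FB$. Your continuity route also works, but only once this Lagrangian bound holds at the multiplier where virtual GFT vanishes. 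Hence what must be proved, per edge and per external profile, is
\[
\E\bigl[(\text{GFT}+\alpha\cdot\text{virtual GFT})\text{ retained on }e\bigr]\ \ge\ g_\alpha\,\E\bigl[(v_i-c_j)\1\{e\in M^*\}\bigr],\qquad g_\alpha>\tfrac12,
\]
not ``retained GFT $\ge\frac12$ with nonnegative virtual surplus''. Retained GFT alone can be far below half on an edge. Under the paper's contraction, an edge with deterministic $v_i=1$, $c_j=0$ keeps only a $1/\mathrm B(\frac1{1+\alpha},\frac1{1+\alpha})$ fraction ($1/\pi$ at $\alpha=1$), and its positive virtual surplus is what pays for other edges. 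Budget balance is restored only globally, through the common multiplier. The bound is also needed for arbitrarily large $\alpha$: at a dual optimum $\alpha^*$ one has $\SB\ge g_{\alpha^*}\FB$, and $g_\alpha\to\frac12$ only as $\alpha\to\infty$, so near-tight instances force $\alpha^*\to\infty$.

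\textbf{The contraction.} You scale both ranks by the same $t$. Do the computation as the paper does: change variables along each row (resp.\ column) of the rank-space down-set. Then bound the buyer and seller terms separately by the revenue-curve prefix inequalities $\int_0^z\phi_i(Q_i(r))\dd r\ge zQ_i(z)$ and $\int_0^z\psi_j(Q_j(r))\dd r\le zQ_j(z)$, which are equalities for atomless priors. Suppose the buyer rank is scaled by $\theta$ and the seller rank by $\eta$. Testing these one-sided bounds on step-like priors shows they hold with constant $g$ only if $\E[\theta]\ge g$, $\E[\eta]\ge g$, and $(1+\alpha)\E[\theta\eta]=g$. With $\theta=\eta=t$ this gives $(1+\alpha)(\E t)^2\le(1+\alpha)\E[t^2]=g\le\E t$. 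Hence $g\le 1/(1+\alpha)$, attained by the deterministic $t=1/(1+\alpha)$, and this is below $\frac12$ for every $\alpha>1$.

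The missing idea is to anti-couple the two factors: $\theta=Z^{\alpha/(1+\alpha)}$ and $\eta=(1-Z)^{\alpha/(1+\alpha)}$, with $Z\sim\Beta(\frac1{1+\alpha},\frac1{1+\alpha})$. After substituting $r=\theta\rho$, the leftover Jacobian factor $\eta$ cancels the $(1-Z)$-power of the Beta density. This yields the exact scale identity $\E_Z[\eta H(\theta)]=\frac{g_\alpha}{\alpha}\int_0^1u^{1/\alpha-1}H(u)\dd u$, and its seller mirror. It also gives $\E[\theta]=\E[\eta]=(1+\alpha)\E[\theta\eta]=g_\alpha=(1+\alpha)/\mathrm B(\frac1{1+\alpha},\frac1{1+\alpha})>\frac12$ for every $\alpha$. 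For large $\alpha$ this law puts about half its mass on thin strips along each axis, which a symmetric rescaling cannot produce.

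\textbf{Smaller errors.}
\begin{itemize}
\item The identity $\int_0^1\varphi_i\,dq=0$ is false in general; the integral equals the lowest value. It is also not what is needed: you need the prefix inequality at every $z$.
\item As written, with $t\le1$, keeping $e$ when $(1-t(1-q_i),tr_j)\in R_e$ enlarges $R_e$ and breaks your feasibility argument. You should keep $e$ when the original point lies in the scaled copy of $R_e$, drawing hidden uniform ranks for atoms.
\item Monotonicity of an agent's total service also requires that, with the other reports fixed, she has at most one possible first-best partner. This uses the report-independent tie-breaking.
\end{itemize}
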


The lower bound is new for general downward-closed matching markets.
Tightness follows already from bilateral trade: bilateral trade is a special case of our model, and Liu et al. construct instances whose second-best ratio approaches $1/2$~\cite{LiuQinRenWang2026}.

We also apply this new lower bound result to the multi-dimensional markets.
By utilizing the reduction from~\cite{CaiGoldnerMaZhao2021}, one can show that their mechanism achieves a $1/4$ approximation to the first-best GFT for multi-dimensional matching markets with one unit-demand buyer and multiple unit-supply sellers, improving the previous $1/6.3$ guarantee established in~\cite{BabaioffRubinsteinTanWang2026}.

\begin{corollary}[informal]
  For every multi-dimensional matching-market instance $\mathcal I$ with one unit-demand buyer and multiple unit-supply sellers, the mechanism of~\cite{CaiGoldnerMaZhao2021} achieves at least a $1/4$ fraction of the first-best GFT.
\end{corollary}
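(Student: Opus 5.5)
The statement to prove is the informal corollary: the mechanism of \cite{CaiGoldnerMaZhao2021} gets at least $1/4$ of the first-best GFT when there is one unit-demand buyer and several unit-supply sellers with multi-dimensional types. The plan is to chain two factor-$1/2$ losses. The first comes from the reduction of \cite{CaiGoldnerMaZhao2021}; the second from our main theorem, $\SB\ge\frac12\FB$, applied to a single-dimensional matching market.

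\textbf{Step 1: reduce to a single-dimensional instance.} Fix a multi-dimensional instance $\mathcal I$. The buyer has a value vector $(v_1,\dots,v_m)$ over the $m$ sellers, and seller $j$ has cost $c_j$. The key observation of \cite{CaiGoldnerMaZhao2021} is that the buyer's unit demand lets us replace her with $m$ single-dimensional ``copies'' $i_1,\dots,i_m$. Copy $i_j$ holds value $v_j$ and is adjacent only to seller $j$. The feasibility family $\cF'$ consists of matchings that use at most one copy, which is downward closed.

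One technical point must be handled. The copies' values $v_1,\dots,v_m$ may be correlated, whereas our framework assumes independent types. I would follow their construction, which either decouples the copies (for example, by conditioning on which coordinate the first best would select) or argues only through benchmarks that depend on marginals. The goal is a single-dimensional instance $\mathcal I'$ with independent types satisfying $\FB(\mathcal I')\ge \FB(\mathcal I)$. If that inequality cannot be obtained directly, a weaker form that loses nothing in the constant would also suffice.

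\textbf{Step 2: use their guarantee.} Invoke the result of \cite{CaiGoldnerMaZhao2021}: their mechanism on $\mathcal I$ attains at least half of the second-best GFT of the reduced single-dimensional instance. More precisely, it attains at least half of the upper bound on $\SB(\mathcal I')$ that comes from the virtual-GFT / Lagrangian framework of \cite{BrustleCaiWuZhao2017}. This is exactly the factor $1/2$ that \cite{BabaioffRubinsteinTanWang2026} combined with their $1/3.15$ bound to obtain $1/6.3$.

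\textbf{Step 3: apply the main theorem.} Apply our main theorem to $\mathcal I'$, which is a legitimate matching market because its types are independent, supported on $[0,1]$, and $\cF'$ is downward closed. This gives $\SB(\mathcal I')\ge\frac12\FB(\mathcal I')$. Chaining the bounds,
\[
\mathrm{GFT}(\text{mechanism})\;\ge\;\tfrac12\SB(\mathcal I')\;\ge\;\tfrac14\FB(\mathcal I')\;\ge\;\tfrac14\FB(\mathcal I).
\]

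\textbf{Main obstacle.} Step 3 needs care. The quantity that \cite{CaiGoldnerMaZhao2021} compare against must be at least our $\SB$, or at least it must be bounded below by the same Lagrangian quantity that our proof lower-bounds by $\frac12\FB$. If their benchmark is the dual or virtual-GFT bound, I would check that our proof actually establishes the stronger statement that this Lagrangian value is at least $\frac12\FB$. I would also verify that the independence and support normalizations hold for the reduced instance, rescaling types to $[0,1]$ if necessary.
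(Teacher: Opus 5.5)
Your argument is essentially the paper's. The paper uses Theorem~4 of \cite{CaiGoldnerMaZhao2021} as a black box: if $\SB_{\rm SD}\ge\FB_{\rm SD}/c$ holds in the comparison market (one single-parameter copy of the buyer per seller, at most one trade), then their mechanism attains a $1/(2c)$ first-best guarantee. Since that comparison market is a downward-closed matching market with independent types, it plugs in $c=2$ from the main theorem, which matches your chain $\GFT\ge\frac12\SB(\mathcal I')\ge\frac14\FB(\mathcal I')$. The correlation concern in your Step~1 is unnecessary, since the setting (both in \cite{CaiGoldnerMaZhao2021} and in the paper's formal version) assumes mutually independent item values and costs; the comparison market therefore has exactly the same first best $\E[\max_j(v_j-c_j)^+]$, and no decoupling is needed.
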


\subsection{Technical overview}\label{subsec:tech-overview}
To prove the 1/2 lower bound, we adopt the virtual-GFT characterization of Br\"ustle et al.~\cite{BrustleCaiWuZhao2017} with strong LP duality, which gives the following Lagrangian formulation:
\begin{equation*}
  \SB=\inf_{\alpha\ge0}\max_{q\in\cP}
    \bigl\{\cG(q)+\alpha\cB(q)\bigr\}.
\end{equation*}
Here $\cP$ is the polytope of allocation rules that select only feasible matchings at every report profile and have monotone interim service, while $\cG(q)$ and $\cB(q)$ denote ordinary and virtual GFT, respectively.
Our lower-bound argument starts from this identity: for every fixed budget price $\alpha$, it suffices to produce a monotone feasible allocation with Lagrangian value at least $\FB/2$.
The argument proceeds in two steps.
\begin{itemize}[leftmargin=*]
\item \textbf{An edge-level Lagrangian bound via rank-space contraction.}
Fix an edge $e=(i,j)$ and all reports other than $(v_i,c_j)$.
We characterize the buyer values and seller costs for which $e$ belongs to the tie-broken first-best matching.
Using the buyer's upper-tail rank $\rho$ and the seller's lower-tail rank $\sigma$ under the \emph{original} marginal distributions, this edge-selection region becomes a coordinatewise down-set $\mathcal D\subseteq(0,1)^2$.
Its first-best contribution is
\begin{equation*}
  \int_{\mathcal D}\bigl(Q_i(\rho)-Q_j(\sigma)\bigr)\,\dd\rho\,\dd\sigma.
\end{equation*}
Here, $Q_i$ is buyer $i$'s upper-tail quantile function and $Q_j$ is seller $j$'s lower-tail quantile function.
For each $\alpha>0$, we choose scale factors $\theta$ and $\eta$ as functions of a common random variable $Z$, with $\theta$ increasing and $\eta$ decreasing in $Z$.
We then contract $\mathcal D$ coordinatewise to
\begin{equation*}
  \mathcal D_Z
  :=
  \{(\theta\rho,\eta\sigma):(\rho,\sigma)\in\mathcal D\}.
\end{equation*}
The contraction satisfies $\mathcal D_Z\subseteq\mathcal D$.
The contracted region is used to lower-bound the fixed-$\alpha$ Lagrangian objective.
Bounding the buyer and seller terms separately shows that its expected Lagrangian contribution satisfies
\begin{equation*}
  \E_Z\bigl[\text{local Lagrangian on }\mathcal D_Z\bigr]
  \ge
  g_\alpha\cdot\text{local first-best surplus},
  \qquad
  g_\alpha>\frac12.
\end{equation*}
Here a reported type with positive probability is handled by drawing a hidden uniform rank from its quantile interval.

\item \textbf{From edge-level bounds to a monotone feasible allocation.}
Next, we discuss how to assemble the edge-level bounds into a monotone feasible allocation.
For each profile, apply the local construction to every edge of the tie-broken first-best matching and retain each such edge with its induced probability.
Note that every realized outcome is a subset of one feasible first-best matching and is therefore feasible by downward closedness.
For every $\alpha>0$, summing the edge-level inequalities over external reports and edges yields
\begin{equation*}
  \cG(q^\alpha)+\alpha\cB(q^\alpha)
  \ge\frac12\FB.
\end{equation*}
At $\alpha=0$, the first-best allocation itself is monotone.
Taking the infimum in the Lagrangian identity proves the half guarantee.
\end{itemize}
Finally, we note that the allocations $q^\alpha$ are used only to lower-bound the Lagrangian objective; they need not themselves satisfy $\cB(q^\alpha)\ge0$ for any fixed $\alpha>0$. 
That being said, one can explicitly compute an optimizer of the resulting second-best program, which can then be implemented as a BIC, interim-IR, and exactly strongly budget-balanced mechanism.

\subsection{Related work}\label{subsec:related-work}

\paragraph{Bilateral trade.}
The foundational impossibility theorem is due to Myerson and Satterthwaite~\cite{MyersonSatterthwaite1983}, while Myerson's optimal-auction analysis developed the payment and virtual-value tools that underlie Bayesian mechanism design~\cite{Myerson1981}.
A sequence of works directly approximated first-best GFT using Bayesian mechanisms.
Deng et al. gave the first distribution-free constant guarantee for the random-offerer mechanism, proving that it obtains at least a $1/8.23$ fraction of first-best GFT~\cite{DengMaoSivanWang2022}.
Fei improved this guarantee to approximately $1/3.15$~\cite{Fei2022}, and Hartline and Wang later gave a geometric analysis recovering the same factor~\cite{HartlineWang2025}.
On the negative side, Babaioff et al. refuted a natural conjecture that the random-offerer mechanism always obtains one half of first-best GFT, and Cai et al. subsequently found a stronger counterexample~\cite{BabaioffDobzinskiKupfer2021,CaiGuptaLiMehta2026}.
Jo recently improved the random-offerer lower guarantee to $1/\pi$ and constructed instances on which its first-best fraction is below $0.460243$~\cite{Jo2026}.
For the unrestricted second-best benchmark, Blumrosen and Mizrahi established the previous $2/e$ upper bound, while Liu et al. later determined the exact worst-case efficiency to be $1/2$~\cite{BlumrosenMizrahi2016,LiuQinRenWang2026}.
A separate line studies bilateral trade under stronger incentive requirements through fixed-price mechanisms~\cite{HagertyRogerson1987,McAfee2008,BlumrosenDobzinski2021,KangPerniceVondrak2022,LiuRenWang2023,CaiWu2023,GiambartolomeiDeKeijzer2026}.
Other extensions study bilateral trade with samples~\cite{DengMaoSivanWangWu2025} or a strategic broker~\cite{HajiaghayiHajiaghayiPengShin2025}, as well as repeated trade under online-learning models~\cite{CesaBianchiCesariColomboniFuscoLeonardi2024,BernasconiCastiglioniCelliFusco2024}.

\paragraph{Two-sided and matching markets.}
Beyond the single-buyer, single-seller setting, a broad literature studies double auctions and two-sided markets with multiple buyers and sellers.
For double auctions, McAfee introduced a dominant-strategy mechanism that may discard the least valuable efficient trade in order to maintain budget balance~\cite{McAfee1992}.
Subsequent work developed constant-factor welfare approximations and modular mechanisms for double auctions and richer two-sided combinatorial settings~\cite{ColiniBaldeschiDeKeijzerLeonardiTurchetta2016,DuttingRoughgardenTalgamCohen2014,ColiniBaldeschiGoldbergDeKeijzerLeonardiRoughgardenTurchetta2020}.
Br\"ustle et al. developed the two-sided duality and generalized offering mechanisms that underlie much of the subsequent literature.
Their mechanisms accommodate arbitrary downward-closed feasibility constraints, and the better of the generalized seller- and buyer-offering mechanisms achieves one half of the optimal BIC/interim-IR ex-ante weakly budget-balanced GFT.
They also give an expected-to-strong budget-balancing transformation and characterize implementability in terms of monotonicity and virtual GFT~\cite{BrustleCaiWuZhao2017}.
Babaioff et al. constructed mechanisms for double auctions and matching markets that combine a constant ex-ante approximation to second-best GFT with an ex-post guarantee that becomes asymptotically first-best in large markets~\cite{BabaioffCaiGonczarowskiZhao2018}.
Cai et al. studied multi-dimensional two-sided markets and highlighted the first-best/second-best gap in matching-type environments as an important obstacle~\cite{CaiGoldnerMaZhao2021}.
Babaioff et al. subsequently obtained a $1/3.15$-approximation to the first-best for single-dimensional matching markets with arbitrary downward-closed feasible matching families by randomizing between generalized seller-offering and buyer-offering mechanisms~\cite{BabaioffRubinsteinTanWang2026}.
Their mechanism is DSIC, ex-post IR, and ex-ante weakly budget balanced.

\paragraph{Relation to recent bilateral results.}
Our tight constant necessarily matches the bilateral result of Liu et al. because bilateral trade is contained in our model~\cite{LiuQinRenWang2026}.
The two proofs take different routes after their respective optimization/Lagrangian formulations.
In~\cite{LiuQinRenWang2026}, the lower bound is converted into a greedy allocation problem and analyzed through a state representation and dominance order.
Our matching-market proof instead starts from the tie-broken first-best matching and, after fixing external reports, characterizes the two-dimensional region in which a given edge is selected.
It then contracts the associated rank-space down-set using two coupled random scale factors, one increasing as the other decreases, before assembling the local bounds across edges.

\paragraph{Organization.}
Section~2 introduces the model and records the expected-to-profile-wise balancing lemma.
Section~3 states the finite-support reduction and develops the finite-support virtual-GFT/Lagrangian formulation.
Section~4 proves the tight half-efficiency lower bound for finite-support priors via the geometric contraction and global assembly, followed by applications to multi-dimensional markets.
Section~5 concludes the paper.
The appendices contain technical details and deferred proofs.

\section{Preliminaries}
\subsection{Model and setup}
We consider a general two-sided matching market with a set of unit-demand buyers $B$ and a set of unit-supply sellers $S$, and $K = B \cup S$.
Each buyer $i\in B$ has a value $v_i \in [0,1]$ for receiving an item, while each seller $j\in S$ has a private cost $c_j \in [0,1]$ for providing one.
The value $v_i$ of each buyer $i\in B$ is independently drawn from a distribution $F_i$, which is known in advance.
Similarly, the cost $c_j$ of each seller $j$ is also independently drawn from a known distribution $G_j$.
Throughout, a probability distribution on $[0,1]$ is understood as a Borel probability measure.
A realized profile is written $\vct t = (\vct v, \vct c) \in [0,1]^K$.
Let $\pi$ denote the product distribution induced by the mutually independent priors $(F_i)_{i\in B}$ and $(G_j)_{j\in S}$.
We use $v_i$ and $c_j$ both for type coordinates and for their realizations.
In probabilistic expressions, they are understood to be drawn from $F_i$ and $G_j$, respectively; the intended meaning will always be clear from context.

Let $E\subseteq B\times S$ be the set of potential trading edges.
A set $M\subseteq E$ is a matching if no agent is incident to more than one edge of $M$.
The market is equipped with a nonempty finite family $\cF\subseteq 2^E$ satisfying:
\begin{enumerate}[label=\textup{(\roman*)}]
  \item every $M\in\cF$ is a matching;
  \item $\cF$ is downward closed: if $M\in\cF$ and $M'\subseteq M$, then $M'\in\cF$.
\end{enumerate}
In particular, $\varnothing\in\cF$.
The formulation contains ordinary graph matching as well as additional downward-closed restrictions, such as a cap on the total number of trades or restrictions that exclude specified combinations of otherwise compatible edges.

We refer to
\begin{equation*}
\mathcal{I}
:=
\bigl(
B,S,E,\mathcal{F},
(F_i)_{i\in B},
(G_j)_{j\in S}
\bigr)
\end{equation*}
as an instance of the market.

We consider randomized mechanisms: a randomized allocation rule is a Borel kernel
\begin{equation*}
  q:[0,1]^K\longrightarrow\Delta(\cF).
\end{equation*}
For every report profile $\vct t$, write $q(\vct t)=(q_M(\vct t))_{M\in\cF}$, where $q_M(\vct t)$ is the probability of selecting $M$.
Its edge marginals are
\begin{equation*}
  q_{ij}(\vct t) :=\sum_{\substack{M\in\cF\\(i,j)\in M}}q_M(\vct t).
\end{equation*}
For each buyer $i$, each seller $j$, the total service probabilities at profile $\vct t$ are
\begin{equation*}
  q_i(\vct t)=\sum_{j':(i,j')\in E}q_{ij'}(\vct t), \qquad q_j(\vct t)=\sum_{i':(i',j)\in E}q_{i'j}(\vct t).
\end{equation*}
Because every feasible outcome is a matching, both quantities lie in $[0,1]$.

\paragraph{Gains from Trade.}
The expected GFT of $q$ conditional on the profile $\vct t$ is
\begin{equation*}
  \GFT(q;\vct t) =\sum_{(i,j)\in E}(v_i-c_j)q_{ij}(\vct t),
\end{equation*}
and the ex-ante GFT is
\begin{equation*}
  \GFT(q)=\E_{\vct t\sim\pi}[\GFT(q;\vct t)].
\end{equation*}

\subsection{Incentives, rationality, and balancing}

A direct mechanism consists of an allocation rule $q$ and a transfer rule.
For each report profile $\vct t$, let $p_i(\vct t)\in\R$ denote buyer $i$'s expected payment to the mechanism and let $r_j(\vct t)\in\R$ denote seller $j$'s expected receipt, where both expectations are conditional on $\vct t$.
These amounts may be negative because we do not require an ex-post guarantee.
We next define interim quantities by fixing an agent's own type or report and averaging over the other agents' types and the mechanism's randomization.

As is standard in the literature, we assume throughout that all transfer expectations appearing below are finite and well defined; see, e.g.,~\cite{BorgersNorman2009}.

Let $\vct t_{-i}$ (resp. $\vct t_{-j}$) be the profile of $K \setminus \{i\}$ (resp. $K \setminus \{j\}$) when fixing a buyer $i$ (resp. a seller $j$).
Then we can define buyer and seller interim service rules by
\begin{align*}
  x_i(v) &=\E_{\vct t_{-i}} \bigl[q_i(v,\vct t_{-i})\bigr],\\
  y_j(c) &=\E_{\vct t_{-j}} \bigl[q_j(c,\vct t_{-j})\bigr].
\end{align*}
Their interim transfers are
\begin{equation*}
  p_i(v)=\E_{\vct t_{-i}}[p_i(v,\vct t_{-i})], \qquad r_j(c)=\E_{\vct t_{-j}}[r_j(c,\vct t_{-j})].
\end{equation*}
If buyer $i$ has true value $v$ and reports $z$, her interim expected utility is
\begin{equation*}
  u_i(v;z)=v x_i(z)-p_i(z).
\end{equation*}
If seller $j$ has true cost $c$ and reports $z$, her interim expected utility is
\begin{equation*}
  u_j(c;z)=r_j(z)-c y_j(z).
\end{equation*}
\begin{definition}[BIC and interim IR]\label{def:bic-ir}
A mechanism is \emph{Bayesian incentive compatible} (BIC) on the full report domain if, for every buyer $i$, seller $j$, and all $v,z,c,w\in[0,1]$,
\begin{equation*}
  u_i(v;v)\ge u_i(v;z), \qquad u_j(c;c)\ge u_j(c;w).
\end{equation*}
It is \emph{interim individually rational} if
\begin{equation*}
  u_i(v;v)\ge0 \quad\text{and}\quad u_j(c;c)\ge0
\end{equation*}
for every $i,j$ and every type in $[0,1]$.
\end{definition}

Each agent may report any number in $[0,1]$, including reports that occur with probability zero under the corresponding prior.
We require the BIC and interim-IR inequalities to hold at every point in this report domain.
Appendix~\ref{app:borel-attainment} shows that the allocation and transfers can be extended to zero-probability reports without changing the second-best value.

Because agents are risk-neutral, only expected transfers conditional on reports enter the interim utilities above.
Thus, a transfer that depends on the realized matching can be replaced by its conditional expectation given the report profile without affecting BIC or interim IR.
We therefore work throughout with transfers that are deterministic conditional on reports.

We distinguish two budget notions.

\begin{definition}[Budget balance]
A mechanism is \emph{ex-ante weakly budget balanced} if
\begin{equation*}
  \sum_{i\in B}\E_{\vct t}[p_i(\vct t)] \ge \sum_{j\in S}\E_{\vct t}[r_j(\vct t)].
\end{equation*}
It is \emph{exactly strongly budget balanced at every report profile} if
\begin{equation*}
  \sum_{i\in B}p_i(\vct t) =\sum_{j\in S}r_j(\vct t) \qquad\text{for every }\vct t\in[0,1]^K.
\end{equation*}
\end{definition}

A standard reduction relates the two budget notions under independent types and interim participation constraints; see, e.g.,~\cite{BorgersNorman2009,BrustleCaiWuZhao2017}.

\begin{lemma}\label{lem:balancing}
An allocation rule $q$ admits transfers $p,r$ for which $(q,p,r)$ is BIC, interim IR, and ex-ante weakly budget balanced if and only if it admits transfers $\widehat p,\widehat r$ for which $(q,\widehat p,\widehat r)$ is BIC, interim IR, and exactly strongly budget balanced at every report profile.
\end{lemma}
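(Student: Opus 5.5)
The reverse implication is immediate: exact strong budget balance at every report profile implies $\sum_i p_i(\vct t)=\sum_j r_j(\vct t)$ pointwise, and taking expectations under $\pi$ gives ex-ante weak budget balance. So the plan concentrates on the forward direction. I follow the AGV/Börgers--Norman construction. Given $(q,p,r)$ that is BIC, interim IR and ex-ante weakly budget balanced, I will first dispose of any ex-ante surplus. Set
\begin{equation*}
  D:=\sum_{i\in B}\E[p_i(\vct t)]-\sum_{j\in S}\E[r_j(\vct t)]\ge0,
\end{equation*}
and give the surplus back as a constant, for instance $r_j\mapsto r_j+D/|S|$ for one or all sellers, or $p_i\mapsto p_i-D/|B|$ if $S=\varnothing$. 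Constant transfers change neither BIC nor the sign of interim utilities in the wrong direction, so after this step the budget is balanced ex ante with equality.

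Next I modify transfers so that every interim transfer is preserved but the budget holds profile-wise. Write the deficit at a profile as
\begin{equation*}
  \Delta(\vct t):=\sum_i p_i(\vct t)-\sum_j r_j(\vct t),
\end{equation*}
which has $\E[\Delta]=0$. Decompose it by the standard telescoping/Hoeffding-type construction over agents ordered $1,\dots,n$:
\begin{equation*}
  \Delta(\vct t)=\sum_{k=1}^n \bigl(\E[\Delta\mid t_1,\dots,t_k]-\E[\Delta\mid t_1,\dots,t_{k-1}]\bigr)=:\sum_k d_k(t_1,\dots,t_k).
\end{equation*}
Each $d_k$ has conditional mean zero given $t_1,\dots,t_{k-1}$. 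The simpler alternative is the Börgers--Norman scheme: define $\bar\Delta_k(t_k):=\E_{\vct t_{-k}}[\Delta(t_k,\vct t_{-k})]$ and charge agent $k$ an adjustment. Concretely, set the new transfers by subtracting from the budget-side agent $k$'s payment the term $\Delta(\vct t)-\sum_{l\ne k}\bigl(\text{something depending on } \vct t_{-k}\bigr)$.

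The cleanest version I would use is the following. Pick each agent $k$'s new transfer to be
\begin{equation*}
  \text{old transfer}_k(\vct t)-\Bigl[\Delta_k(t_k)-\tfrac{1}{n-1}\sum_{l\ne k}\Delta_l(t_l)\Bigr]\quad\text{(with signs oriented as payments)},
\end{equation*}
where $\Delta_k(t_k):=\E_{\vct t_{-k}}[\text{old payment}_k(t_k,\vct t_{-k})]$, together with an adjustment that replaces $\text{old payment}_k(\vct t)$ by its interim value $\Delta_k(t_k)$. Equivalently, I can directly define payments $\widehat p_k(\vct t)=\Delta_k(t_k)-\frac{1}{n-1}\sum_{l\neq k}\Delta_l(t_l)+\frac{1}{n-1}\sum_l\E[\Delta_l]$ style, AGV-fashion. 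Under independence, agent $k$'s interim payment is then $\Delta_k(t_k)$ minus a constant that equals zero by ex-ante balance. The profile-wise sum is $\sum_k\Delta_k-\sum_k\Delta_k=0$. So interim transfers coincide with the originals, which means BIC and interim IR, being functions of interim quantities only, are inherited, including at zero-probability reports, since the interim transfers are defined for every report. This requires $n\ge2$; with a single agent no trade is feasible and everything is trivial.

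The main obstacle is technical rather than conceptual. I must check integrability and measurability of the interim transfers, which the standing finiteness assumption provides. I also need the interim expectation of $\Delta_l(t_l)$ for $l\ne k$ to be exactly the ex-ante constant for every report of $k$, including off-support ones. This is where independence of the prior is essential: the other agents' type distribution does not depend on $t_k$.
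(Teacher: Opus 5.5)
Your overall route matches the paper's: rebate the ex-ante surplus as a constant, then redistribute interim net payments AGV/Börgers--Norman style, using independence so that the other agents' terms average out for every report of agent $k$. The gap is in the key step. The explicit formula you settle on is mis-centered, so the claim ``agent $k$'s interim payment is then $\Delta_k(t_k)$ minus a constant that equals zero by ex-ante balance'' is false as written. Write $\mu_l:=\E[\Delta_l(t_l)]$, so that $\sum_l\mu_l=0$ after the rebate. Your rule
\begin{equation*}
\widehat p_k(\vct t)=\Delta_k(t_k)-\frac{1}{n-1}\sum_{l\neq k}\Delta_l(t_l)+\frac{1}{n-1}\sum_{l}\mu_l
\end{equation*}
has interim expectation $\Delta_k(t_k)-\frac{1}{n-1}\sum_{l\neq k}\mu_l=\Delta_k(t_k)+\frac{\mu_k}{n-1}$. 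Ex-ante balance kills $\sum_l\mu_l$, but not $\sum_{l\neq k}\mu_l$. The profile-wise sum is indeed zero, and BIC survives the constant shift. However, every agent with $\mu_k>0$ now pays more in expectation at every type, so interim IR can fail.

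Concretely, in bilateral trade your rule charges the buyer $p(v)+r(c)$, whose interim value is $p(v)+\E[r(c)]$. The buyer's interim payment thus rises by the seller's entire expected receipt. Whenever that receipt is positive, this breaks IR at any buyer type where IR binds, for instance the lowest type under the normalized transfers of \cref{lem:finite-budget-identity}.

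The fix is to center each other agent's term by its own mean:
\begin{equation*}
\widehat m_k(\vct t)=\Delta_k(t_k)-\frac{1}{n-1}\sum_{l\neq k}\bigl(\Delta_l(t_l)-\mu_l\bigr).
\end{equation*}
This is exactly the paper's transfer~\eqref{eq:balancing-transfer}. By independence, $\E_{\vct t_{-k}}[\Delta_l(t_l)-\mu_l]=0$ for every $l\neq k$ and every report $t_k$, including off-support reports. Hence the interim net payment is exactly $\Delta_k(t_k)$, and summing over $k$ gives $\sum_l\mu_l=0$ at every profile. With this formula, your argument that BIC and interim IR are inherited goes through verbatim. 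You can then drop the Hoeffding/telescoping digression and the internally inconsistent ``old transfer minus $[\cdots]$ together with an adjustment'' version.
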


For completeness, the proof is given in Appendix~\ref{app:profile-wise-balancing}.

\subsection{First-best and second-best benchmarks}

Fix once and for all a total order on $\cF$.
At each profile $\vct t$, let $M^*(\vct t)$ be the welfare-maximizing matching selected as follows: first minimize cardinality among welfare maximizers, and then apply the fixed total order.
This tie-breaking convention is report independent.
The first-best value is
\begin{equation*}
  \FB(\mathcal I) =\E_{\vct t}\left[ \max_{M\in\cF}\sum_{(i,j)\in M}(v_i-c_j) \right] =\E_{\vct t}\left[ \sum_{(i,j)\in M^*(\vct t)}(v_i-c_j) \right].
\end{equation*}
The minimum-cardinality convention implies that every edge selected by $M^*(\vct t)$ has strictly positive surplus; a zero-surplus edge could otherwise be deleted without changing welfare.

For the classic bilateral trade problem with one buyer and one seller, Myerson and Satterthwaite showed the general impossibility of ex-post efficient trade without outside subsidies; under the standard overlapping-support conditions, the first-best trade rule cannot be implemented together with BIC, interim IR, and budget balance~\cite{MyersonSatterthwaite1983}.
This motivates the second-best benchmark: the maximum gain from trade attainable subject to the incentive, participation, and budget constraints above.
Formally, for an instance $\mathcal I$, we define the second-best value as
\begin{equation*}
  \SB(\mathcal I) :=\sup\bigl\{ \GFT(q): (q,p,r)\text{ is BIC, interim IR, and exactly strongly budget balanced} \bigr\}.
\end{equation*}

\section{The Second-Best Program and Finite-Support Reduction}
\label{sec:second-best-program}

Br\"ustle et al.~\cite{BrustleCaiWuZhao2017} characterize ex-ante budget feasibility through monotonicity and virtual GFT.
We use the corresponding finite-support characterization under our full-report-domain and signed-transfer conventions.
Together with \cref{lem:balancing}, this allows us to formulate the second-best problem using only allocation rules and ex-ante weak budget balance.

\subsection{Reduction to finite-support priors}
\label{subsec:finite-support-reduction}

The analytic lower-bound argument will be carried out for finite-support priors.
The following proposition shows that proving $\SB\ge\beta\FB$ for every finite-support instance is enough to establish the same bound for all instances.
We use a one-sided discretization: buyer values are rounded downward, whereas seller costs are rounded upward.
The same orientation is used in the bilateral reduction of~\cite{LiuQinRenWang2026}; our proof in Appendix~\ref{app:finite-support-reduction} records the additional bookkeeping needed for multiple buyers and sellers under a general downward-closed feasible matching family.

\begin{proposition}[Finite-support reduction]
\label{prop:finite-support-reduction}
Fix \(\beta\in[0,1]\).
Suppose that
\begin{equation*}
  \SB(\mathcal J) \ge \beta\FB(\mathcal J)
\end{equation*}
for every instance \(\mathcal J\) with mutually independent finite-support priors.
Then
\begin{equation*}
  \SB(\mathcal I) \ge \beta\FB(\mathcal I)
\end{equation*}
for every instance \(\mathcal I\) with mutually independent Borel priors on \([0,1]\).
Consequently,
\begin{equation*}
  \inf_{\substack{\mathcal I\text{ with Borel priors}\\ \FB(\mathcal I)>0}} \frac{\SB(\mathcal I)}{\FB(\mathcal I)} = \inf_{\substack{\mathcal J\text{ with finite-support priors}\\ \FB(\mathcal J)>0}} \frac{\SB(\mathcal J)}{\FB(\mathcal J)}.
\end{equation*}
\end{proposition}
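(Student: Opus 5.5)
Fix an instance $\mathcal I$ with Borel priors and a grid size $n$. Let $\delta=1/n$. Round each buyer value down to the grid, $\underline v_i=\lfloor n v_i\rfloor/n$, and each seller cost up, $\overline c_j=\lceil n c_j\rceil/n$. Let $\mathcal J_n$ be the instance with the same $(B,S,E,\cF)$ and the pushforward priors $\underline F_i$, $\overline G_j$. These are finite-support and still independent. The argument has two directions. First, transfer a good mechanism from $\mathcal J_n$ back to $\mathcal I$, to show $\SB(\mathcal I)\ge \SB(\mathcal J_n)$. Second, show $\FB(\mathcal J_n)\to\FB(\mathcal I)$. Together with the hypothesis $\SB(\mathcal J_n)\ge\beta\FB(\mathcal J_n)$ and $n\to\infty$, these give the claim. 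The equality of infima follows at once: finite-support instances are instances, so one inequality is trivial, and the other is the statement just proved with $\beta$ equal to the finite-support infimum. Instances with $\FB=0$ are excluded from both sides.

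\textbf{First-best continuity.} At every profile, each edge surplus satisfies $\underline v_i-\overline c_j\ge v_i-c_j-2\delta$ and $\underline v_i-\overline c_j\le v_i-c_j$. A maximizing matching has at most $\min(|B|,|S|)$ edges, so the welfare maximum over $\cF$ changes by at most $2\delta\min(|B|,|S|)$ pointwise. Hence $\FB(\mathcal J_n)\ge \FB(\mathcal I)-2\delta\min(|B|,|S|)$. This step is routine.

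\textbf{Mechanism transfer.} Take a mechanism $(q,p,r)$ for $\mathcal J_n$ that is BIC, interim IR and ex-ante weakly budget balanced, with GFT close to $\SB(\mathcal J_n)$ (by \cref{lem:balancing}, this budget notion is enough). Define the mechanism on $\mathcal I$ as follows: each agent's report is rounded in its own direction, and then $(q,p,r)$ is run. Under the true priors, the rounded reports are independent and distributed as in $\mathcal J_n$. So the interim allocation of buyer $i$ with true value $v$ who reports $z$ is $x_i(\lfloor nz\rfloor/n)$ from $\mathcal J_n$, and likewise for payments. Budget balance in expectation is preserved.

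Incentives need the most care, and I expect this to be the main obstacle. A buyer of true value $v\ge\underline v$ who reports $z$ gets $v\,x(\underline z)-p(\underline z)$. This utility is $\underline v\, x(\underline z)-p(\underline z)+(v-\underline v)x(\underline z)$. The first part is maximized at $\underline z=\underline v$ by BIC in $\mathcal J_n$. The second part, however, favors reports with larger $x$, so exact BIC can fail by up to $\delta$. To repair this, I would use the monotonicity characterization from Br\"ustle et al. in this section. In $\mathcal J_n$, $x_i$ is nondecreasing on the support, and any monotone allocation can be paired with Myerson-type payments. I would therefore keep the allocation $q\circ\text{round}$, which is monotone in $v$ because rounding is monotone and $x_i$ is monotone, and recompute the payments by the envelope formula on the continuous type space. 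These payments are BIC and interim IR exactly.

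The remaining check is budget. I would argue that the continuous-type virtual GFT of the composed allocation is at least the discrete virtual GFT of $q$. The key observation is that the envelope payment charges a buyer with value in $[k\delta,(k+1)\delta)$ exactly as a discrete type $k\delta$ would be charged, because the allocation is flat on that cell. The buyer's information rent $\int_0^{v}x$ exceeds the discrete rent by at most the cell's own contribution. Handling this slack is delicate. I would first try the one-sided orientation. For the lowest-value buyer in each cell, rounding down exactly matches the discrete rent. Since payments depend only on the cell, setting the payment equal to the discrete one keeps IR, because true value $\ge$ rounded value. Upward deviations within a cell gain nothing. Downward deviations to other cells are controlled by the discrete BIC together with monotonicity.

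Sellers are symmetric, with costs rounded up. If an exact fix fails, fall back to an $\varepsilon$-loss: mix with the no-trade mechanism or subsidize $O(\delta)$ budget by scaling $q$. Then let $\delta\to0$.
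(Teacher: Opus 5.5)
Your architecture is the paper's: one-sided rounding, the lift $\widetilde q=q\circ(\text{round down},\text{round up})$ with monotone interim services $x_i(\underline v)$ and $y_j(\overline c)$, and envelope transfers $\widetilde P_i(v)=v\widetilde x_i(v)-\int_0^v\widetilde x_i$ and $\widetilde R_j(c)=c\widetilde y_j(c)+\int_c^1\widetilde y_j$, which are BIC and interim IR on all of $[0,1]$. Your key observation is also the paper's, and the first-best estimate and the equality of infima are fine. However, you never close the budget step, and that step is the heart of the lifting argument.

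There is in fact no slack to handle. Write $a_k=k\delta$ and $x_{i,k}=x_i(a_k)$. On $[a_k,a_{k+1})$ the extra rent $(v-a_k)x_{i,k}$ is exactly cancelled by the extra value, so $\widetilde P_i\equiv a_kx_{i,k}-\sum_{h<k}\delta x_{i,h}$ there. Likewise $\widetilde R_j\equiv a_ky_{j,k}+\sum_{h>k}\delta y_{j,h}$ on $(a_{k-1},a_k]$.

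The missing step is to compare these with the rounded mechanism's \emph{actual} transfers. Its BIC constraint that type $a_{k+1}$ not imitate $a_k$ gives $U_i(a_{k+1})\ge U_i(a_k)+\delta x_{i,k}$; this is available at every grid point because BIC holds on the full report domain. Together with $U_i(0)\ge0$ from IR, this yields $p_i(a_k)\le a_kx_{i,k}-\sum_{h<k}\delta x_{i,h}$. Symmetrically, $r_j(a_k)\ge a_ky_{j,k}+\sum_{h>k}\delta y_{j,h}$.

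Because the lifted transfers are constant on rounding cells, the lift's expected budget under $\mathcal I$ equals the budget of these normalized grid transfers under the rounded priors. That budget is at least the rounded mechanism's budget, hence nonnegative, and \cref{lem:balancing} then gives exact balance.

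The alternatives you sketch do not work as stated.
\begin{itemize}
\item Your named target, the discrete virtual GFT $\cB(q)$, is too strong for $q\circ\text{round}$ itself. If $q$ serves reports below the lowest rounded support point, full-domain BIC and IR force a positive rent there, which $\cB(q)$ ignores. For example, take buyer $v\sim\mathrm{Unif}[1/2,1)$, seller $c\equiv0$, $\delta=1/2$, and the rounded mechanism that trades iff the seller reports $0$ with zero transfers. The lift's best budget is $0<\tfrac12=\cB(q)$.
\item Keeping the original discrete payments only checks within-cell and downward deviations. Upward deviations to higher cells are exactly the profitable ones you exhibited yourself. They are excluded only by the minimal-rent payments, which are precisely the envelope payments.
\item The $\varepsilon$-fallback fails. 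Mixing with no trade, or scaling $q$, multiplies the budget by the same factor, so a deficit stays a deficit, and subsidies are not allowed.
\end{itemize}
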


The proof is given in Appendix~\ref{app:finite-support-reduction}.
By \cref{prop:finite-support-reduction}, it remains to prove the half-efficiency bound for finite-support priors.

\subsection{The finite-support second-best program}
\label{subsec:finite-lagrangian}

Fix a buyer \(i\) with support $0\le v_{i,1} < v_{i,2} < \cdots <  v_{i,L_i}\le1$, define $f_{i,\ell}:= \Pr[v_i=v_{i,\ell}]>0$ and write
\begin{equation*}
  F_{i,\ell} := \sum_{h\le\ell} f_{i,h}, \qquad F_{i,0}:=0.
\end{equation*}
Define the virtual value of the buyer by
\begin{equation*}
  \phi_i(v_{i,\ell}) := v_{i,\ell} - \frac{ (v_{i,\ell+1}-v_{i,\ell})(1-F_{i,\ell}) }{ f_{i,\ell} }, \qquad \ell<L_i,
\end{equation*}
with $\phi_i(v_{i,L_i}) = v_{i,L_i}$.

For a seller \(j\) with support $0\le c_{j,1} < c_{j,2} < \cdots < c_{j,H_j} \le1$, define $g_{j,h} := \Pr[c_j=c_{j,h}] >0$ and write
\begin{equation*}
  G_{j,h} := \sum_{\ell\le h}g_{j,\ell}, \qquad G_{j,0}:=0.
\end{equation*}
Define the virtual cost of the seller by
\begin{equation*}
  \psi_j(c_{j,h}) := c_{j,h} + \frac{(c_{j,h}-c_{j,h-1})G_{j,h-1}}{g_{j,h} }, \qquad h>1,
\end{equation*}
with $\psi_j(c_{j,1}) = c_{j,1}$.

Let \(\cP\) be the finite-dimensional polytope of randomized allocation rules on the finite type space satisfying:
\begin{enumerate}[label=\textup{(\roman*)}]
  \item at every type profile, the allocation is a probability distribution over \(\cF\);
  \item every buyer's interim service is nondecreasing in her value;
  \item every seller's interim service is nonincreasing in her cost.
\end{enumerate}

For \(q\in\cP\), define its ordinary and virtual GFT by
\begin{align*}
  \cG(q) &:= \sum_{(i,j)\in E} \E_{\vct t\sim\pi} \left[ (v_i-c_j)q_{ij}(\vct t) \right],\\
  \cB(q) &:= \sum_{(i,j)\in E} \E_{\vct t\sim\pi} \left[ \bigl( \phi_i(v_i)-\psi_j(c_j) \bigr) q_{ij}(\vct t) \right].
\end{align*}
Thus, $\cB(q)$ is calculated by the same formula as ordinary GFT, except that each buyer value and seller cost is replaced by its virtual counterpart.

The following lemma is the finite-support, full-report-domain counterpart of the characterization of Br\"ustle et al.~\cite[Theorem~7 and Lemma~12]{BrustleCaiWuZhao2017}.

\begin{lemma}\label{lem:finite-budget-identity}
Fix $q\in\cP$.
Every full-domain BIC and interim-IR mechanism that uses $q$ at every support profile satisfies
\begin{equation*}
  \E_{\vct t\sim\pi}
  \left[
    \sum_{i\in B}p_i(\vct t)
    -
    \sum_{j\in S}r_j(\vct t)
  \right]
  \le
  \cB(q).
\end{equation*}
Equality is attainable.
Consequently, $q$ admits an ex-ante weakly budget-balanced implementation if and only if $\cB(q)\ge0$.
\end{lemma}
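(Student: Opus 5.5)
The plan is to run the standard Myerson-type envelope argument separately for each agent on her finite support, and then sum. Fix $q\in\cP$ and any full-domain BIC, interim-IR mechanism $(q,p,r)$. Since transfers enter only through interim quantities, independence gives
\[
\E_{\vct t}\Bigl[\sum_i p_i(\vct t)-\sum_j r_j(\vct t)\Bigr]=\sum_i \E_{v_i}[p_i(v_i)]-\sum_j\E_{c_j}[r_j(c_j)].
\]
So it suffices to bound each buyer's expected interim payment from above by $\E[\phi_i(v_i)x_i(v_i)]$, and each seller's expected interim receipt from below by $\E[\psi_j(c_j)y_j(c_j)]$. A linearity step then converts $\sum_i\E[\phi_i(v_i) x_i(v_i)]-\sum_j\E[\psi_j(c_j)y_j(c_j)]$ into $\cB(q)$. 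This works because $x_i(v)=\E_{\vct t_{-i}}[\sum_{j'} q_{ij'}]$, so the buyer terms are $\sum_{(i,j)}\E[\phi_i(v_i)q_{ij}(\vct t)]$, and symmetrically for sellers.

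\textbf{Buyer bound.} Write $x_\ell:=x_i(v_{i,\ell})$ and $u_\ell:=u_i(v_{i,\ell};v_{i,\ell})$. Only the downward adjacent BIC constraint is needed: type $v_{i,\ell+1}$ does not gain by reporting $v_{i,\ell}$. This gives
\[
u_{\ell+1}\ge u_\ell+(v_{i,\ell+1}-v_{i,\ell})x_\ell .
\]
Combined with interim IR at $v_{i,1}$, so that $u_1\ge 0$, telescoping yields
\[
u_\ell\ge \sum_{h<\ell}(v_{i,\ell}\text{-gaps})\,x_h ,
\]
and hence
\[
p_i(v_{i,\ell})=v_{i,\ell}x_\ell-u_\ell\le v_{i,\ell}x_\ell-\sum_{h<\ell}(v_{i,h+1}-v_{i,h})x_h .
\]
Taking expectation and swapping sums, the coefficient of $x_h$ becomes $f_{i,h}v_{i,h}-(v_{i,h+1}-v_{i,h})(1-F_{i,h})=f_{i,h}\phi_i(v_{i,h})$, so $\E[p_i(v_i)]\le\E[\phi_i(v_i)x_i(v_i)]$.

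\textbf{Seller bound.} The argument is symmetric. Type $c_{j,h-1}$ does not gain by reporting $c_{j,h}$, and IR holds at the top cost $c_{j,H_j}$. Telescoping downward gives the bound $\E[r_j(c_j)]\ge \E[\psi_j(c_j)y_j(c_j)]$, with the $G_{j,h-1}$ weights appearing exactly as in the definition of $\psi_j$. I expect the index bookkeeping here to be the only fiddly point: the seller's binding constraints run in the opposite direction.

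\textbf{Attainability of equality.} This is the step that uses monotonicity and the full report domain, and I expect it to be the main obstacle. On the support, define payments so that every adjacent downward (buyer) or upward (seller) constraint binds and IR binds at the lowest value and highest cost. Then:
\begin{itemize}
\item Monotonicity of $x_i$ and $y_j$ gives all support-to-support BIC constraints, by the usual chaining argument.
\item For off-support reports $z\in[0,1]$, extend the allocation and payment of report $z$ to those of a support report: the largest support point $\le z$ for buyers (or zero service and zero payment below $v_{i,1}$), and analogously for sellers. A deviation to $z$ then reproduces a deviation already ruled out, so BIC holds.
\item For off-support true types, utility $\max_z(vx_i(z)-p_i(z))$ is a pointwise maximum over a finite menu that includes the option of zero service and zero payment. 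Choosing the best menu item keeps BIC, and IR holds because that outside option is available.
\end{itemize}
I would check that the extension does not alter anything on support profiles, so the expected budget equals $\cB(q)$. Ex-post transfers can be taken as $p_i(\vct t):=p_i(v_i)$ and $r_j(\vct t):=r_j(c_j)$.

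\textbf{Consequence.} The final claim follows directly. If $\cB(q)\ge0$, the equality mechanism is ex-ante weakly budget balanced. Conversely, any ex-ante weakly budget-balanced implementation forces $0\le$ expected budget $\le\cB(q)$.
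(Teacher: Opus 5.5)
Your proposal is correct and is essentially the paper's proof: bind the adjacent constraints that stop a buyer type from mimicking the next lower type and a seller type from mimicking the next higher cost, normalize utility to zero at $v_{i,1}$ and $c_{j,H_j}$, telescope to the coefficients $f_{i,h}\phi_i(v_{i,h})$ and $g_{j,h}\psi_j(c_{j,h})$, sum via iterated expectation, and note that any other implementation only raises these utilities. Your off-support extension rounds buyers down and sellers up to the support, with a null item below $v_{i,1}$ and above $c_{j,H_j}$. It induces exactly the interim menu the paper obtains from its completion lemma via piecewise-linear potentials and one-sided derivatives. You should state explicitly that the rounded item is the best menu item for each off-support type. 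This holds because each item's utility is a supporting line of the convex piecewise-linear utility, and it shows that your second and third bullets describe the same assignment rule.
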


Equality is attained by setting the interim utility of each buyer's lowest type and each seller's highest type to zero and using the discrete BIC payment formulas to determine the corresponding interim transfers.
The proof, including the adaptation to reports outside the support, is given in Appendix~\ref{app:finite-budget-characterization}.

\begin{proposition}\label{prop:finite-primal-dual}
For finite-support priors, $\SB=\max_{q\in\cP:\,\cB(q)\ge0}\cG(q)$.
Moreover,
\begin{equation}\label{eq:lagrangian-characterization}
  \SB=\inf_{\alpha\ge0}\max_{q\in\cP}\bigl\{\cG(q)+\alpha\cB(q)\bigr\}.
\end{equation}
\end{proposition}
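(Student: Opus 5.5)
The first identity combines \cref{lem:balancing} and \cref{lem:finite-budget-identity}; the second is Lagrangian duality for a linear program. The plan has two steps.

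\textbf{First identity.} Let $(q,p,r)$ be BIC, interim IR and exactly strongly budget balanced.
\begin{itemize}
\item It is in particular ex-ante weakly budget balanced.
\item Restricted to the finite support, the interim rules $x_i$ and $y_j$ must be monotone. This follows from the standard two-type BIC argument: add $u_i(v;v)\ge u_i(v;z)$ and $u_i(z;z)\ge u_i(z;v)$ to get $(v-z)(x_i(v)-x_i(z))\ge0$, and symmetrically for sellers.
\item Hence $q$ restricted to support profiles lies in $\cP$.
\item By \cref{lem:finite-budget-identity}, $0\le\E[\sum_i p_i-\sum_j r_j]\le\cB(q)$, and $\GFT(q)=\cG(q)$ because only support profiles carry probability.
\end{itemize}
Conversely, take $q\in\cP$ with $\cB(q)\ge0$.
\begin{itemize}
\item \cref{lem:finite-budget-identity} gives an ex-ante weakly budget-balanced, BIC, interim-IR full-domain implementation.
\item \cref{lem:balancing} converts it into an exactly strongly budget-balanced one with the same allocation, so $\SB\ge\cG(q)$.
\end{itemize}
The feasible set $\{q\in\cP:\cB(q)\ge0\}$ is a polytope. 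It is nonempty: the rule that always selects $\varnothing$ lies in $\cP$ with $\cB=0$. Since $\cG$ is linear, the supremum is attained and equals a max.

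\textbf{Lagrangian identity.} Write $V:=\max\{\cG(q):q\in\cP,\ \cB(q)\ge0\}$.
\begin{itemize}
\item \emph{Weak duality.} For every $\alpha\ge0$ and every feasible $q$, we have $\cG(q)\le\cG(q)+\alpha\cB(q)\le\max_{q'\in\cP}\{\cG+\alpha\cB\}$. Hence $V\le\inf_{\alpha\ge0}\max_{q\in\cP}\{\cG(q)+\alpha\cB(q)\}$.
\item \emph{Reverse inequality.} Apply LP strong duality to the finite-dimensional LP with objective $\cG$ over the polyhedron $\cP$, with the single extra linear constraint $\cB\ge0$ dualized.
\item \emph{Justification.} The feasible set is nonempty and bounded, so the optimum is finite and LP duality holds without a Slater condition. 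Alternatively, apply Sion's minimax theorem to the bilinear function $(q,\alpha)\mapsto\cG+\alpha\cB$ on compact convex $\cP$ times $[0,\infty)$. This gives $\inf_\alpha\max_q=\max_q\inf_\alpha$, and the inner infimum equals $\cG(q)$ if $\cB(q)\ge0$ and $-\infty$ otherwise.
\end{itemize}

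\textbf{Main obstacle.} The only delicate point is the converse direction of the first identity. There, \cref{lem:finite-budget-identity} must supply transfers valid on the full report domain $[0,1]$, including off-support reports. I take this as already handled by that lemma and its appendix. The finite-dimensional duality step is routine.
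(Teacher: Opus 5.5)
Your proposal is correct and follows essentially the same route as the paper: \cref{lem:finite-budget-identity} and \cref{lem:balancing} give the primal characterization, and the Lagrangian identity comes from dualizing the single constraint $\cB(q)\ge0$ under finite-dimensional LP strong duality. Your extra details (monotonicity from the two-type BIC argument, $\GFT(q)=\cG(q)$, nonemptiness via the empty rule, explicit weak duality, and the Sion alternative) fill in steps the paper leaves implicit.
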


\begin{proof}
By \cref{lem:finite-budget-identity}, the allocation rules implementable under BIC, interim IR, and ex-ante weak budget balance are exactly the elements of $\cP$ satisfying $\cB(q)\ge0$.
By \cref{lem:balancing}, the same allocation rule also admits transfers satisfying exact profile-wise strong budget balance without changing GFT, which establishes the stated primal characterization.
The program is a finite-dimensional linear program over a nonempty compact polytope; dualizing its single inequality $\cB(q)\ge0$ with a multiplier $\alpha\ge0$ and applying strong linear-programming duality gives \eqref{eq:lagrangian-characterization}.
\end{proof}

For a fixed $\alpha$, the first term in the objective $\cG(q)+\alpha\cB(q)$ is ordinary GFT.
The second rewards allocations with positive virtual GFT and penalizes allocations with negative virtual GFT; a larger $\alpha$ places greater weight on budget feasibility.

For the rest of the finite-support proof, fix \(\alpha\ge0\).
We construct \(q^\alpha\in\cP\) such that
\begin{equation*}
  \cG(q^\alpha)+\alpha\cB(q^\alpha) \ge \frac12\FB.
\end{equation*}
The construction starts from the tie-broken first-best matching and contracts each edge-selection region induced by fixed external reports.

\section{The Second Best is Half-Efficient}
\label{sec:half-lower-bound}

The lower-bound proof in this section assumes finite-support priors; \cref{prop:finite-support-reduction} will transfer the resulting uniform guarantee to arbitrary Borel priors.
To show that the second best achieves at least a $1/2$-approximation to the first-best GFT, we first give a geometric characterization of the first best.
Our fixed-\(\alpha\) argument has three steps.
First, after fixing an edge and all external reports, we express its first-best selection region as a rank-space down-set \(\mathcal D\).
Second, we contract this region to \(\mathcal D_Z\) and prove an edge-level lower bound for the Lagrangian objective.
Third, we assemble the edge-level rules into a globally feasible and monotone allocation \(q^\alpha\).

\subsection{First-best edge-selection regions}
We begin with two structural properties of the fixed tie-broken first-best rule $M^*$: monotonicity and a geometric description of every edge-selection region.
The proofs make explicit why the matching and downward-closed assumptions are sufficient.

\begin{lemma}[Monotonicity]\label{lem:first-best-monotonicity}
    Fix all reports except one agent's report.
    \begin{enumerate}[label=\textup{(\alph*)}]
      \item If buyer $i$ is matched to seller $j$ by $M^*(v_i,\vct t_{-i})$, then she is matched to the same seller $j$ by $M^*(v_i',\vct t_{-i})$ for every $v_i'>v_i$.
      \item If seller $j$ is matched to buyer $i$ by $M^*(c_j,\vct t_{-j})$, then she is matched to the same buyer $i$ by $M^*(c_j',\vct t_{-j})$ for every $c_j'<c_j$.
    \end{enumerate}
\end{lemma}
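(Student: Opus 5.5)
The plan is a direct exchange argument on welfare. Throughout, the tie-breaking rule is applied uniformly: $M^*$ first minimizes cardinality among welfare maximizers and then uses the fixed total order on $\cF$, and neither rule depends on reports. We prove part (a); part (b) is symmetric, with $-c_j$ playing the role of $v_i$.

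Fix $\vct t_{-i}$ and write $W(M;v)=\sum_{(k,l)\in M}(v_k-c_l)$ for the welfare of $M$ when buyer $i$ reports $v$. For each $M\in\cF$ this is an affine function of $v$. Its slope is $1$ if buyer $i$ is matched in $M$ and $0$ otherwise, since $M$ is a matching and so $i$ appears in at most one edge. Let $M_0=M^*(v_i,\vct t_{-i})$, which contains $(i,j)$, and let $v_i'>v_i$. Then
\begin{equation*}
  W(M_0;v_i')=W(M_0;v_i)+(v_i'-v_i).
\end{equation*}
For any $M\in\cF$ we have $W(M;v_i')\le W(M;v_i)+(v_i'-v_i)$, with equality exactly when $i$ is matched in $M$. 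Because $W(M;v_i)\le W(M_0;v_i)$, it follows that $M_0$ remains a welfare maximizer at $v_i'$.

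Next we identify the full maximizer set at $v_i'$. Any maximizer $M$ at $v_i'$ satisfies
\begin{equation*}
  W(M_0;v_i)+(v_i'-v_i)=W(M;v_i')\le W(M;v_i)+(v_i'-v_i)\le W(M_0;v_i)+(v_i'-v_i).
\end{equation*}
Both inequalities must therefore hold with equality. This gives two facts: $i$ is matched in $M$, and $M$ was already a maximizer at $v_i$. Conversely, every maximizer at $v_i$ in which $i$ is matched gains exactly $v_i'-v_i$ and so is a maximizer at $v_i'$. Hence the maximizer set at $v_i'$ is exactly the set of maximizers at $v_i$ in which $i$ is matched.

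This new set is a subset of the old maximizer set and it contains $M_0$. The minimum cardinality over it therefore equals $|M_0|$, which was the minimum over the larger set. Among the minimum-cardinality maximizers, the new candidates are a subset of the old ones and still contain $M_0$. Since $M_0$ came first in the fixed total order among the old candidates, it also comes first among the new ones. So $M^*(v_i',\vct t_{-i})=M_0$, and buyer $i$ is matched to the same seller $j$.

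The only subtle step is the tie-breaking. A generic "the winner stays optimal" argument does not by itself pin down the selected matching. The restriction of the argmax set is what makes the lexicographic rule, cardinality followed by the fixed order, select the same matching, and this is where the report independence of the tie-breaking rule is used. Downward closedness is not needed for this lemma; the matching property is used only to get slopes in $\{0,1\}$.
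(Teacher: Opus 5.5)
Your proof is correct and follows essentially the same route as the paper: welfare is affine in the moving report with slope $1$ exactly for matchings that serve $i$, so $M^*(v_i,\vct t_{-i})$ keeps or improves its standing against every competitor, and the report-independent lexicographic tie-breaking then selects it again. Your explicit description of the new argmax set as the old maximizers that match $i$ is a more carefully spelled-out version of the paper's pairwise ``$M_A$ continues to beat every competitor'' comparison, and you are right that downward closedness is not used here.
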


\begin{proof}
We prove the case for the buyer, and the seller part is symmetric.
Let $M_A=M^*(v_i,\vct t_{-i})$ and suppose $(i,j)\in M_A$.
Compare $M_A$ with any other feasible matching $M_B$ at the higher report $v_i'$.

If $M_B$ leaves $i$ unmatched, its welfare is unchanged, while the welfare of $M_A$ increases by $v_i'-v_i$.
Hence $M_A$ strictly improves relative to $M_B$.
If $M_B$ also matches $i$, then both welfare values increase by $v_i'-v_i$, and the welfare comparison remains unchanged.
In the event of a welfare tie, their cardinalities and their positions in the fixed total order are also unchanged.
Thus $M_A$ continues to beat every competitor under the complete report-independent tie-breaking rule.
Therefore $M^*(v_i',\vct t_{-i})=M_A$, in particular preserving the edge $(i,j)$.
\end{proof}

Fix an edge $e=(i,j)$ and a realization $\vct t_{-ij}$ of all reports other than $(v_i,c_j)$.
For this subsection, every quantity other than $v_i$ and $c_j$ is held fixed.

\begin{lemma}\label{lem:edge-selection-wedge}
For every fixed $(e,\vct t_{-ij})$, the set of supported pairs $(v_i,c_j)$ for which $e\in M^*(v_i,c_j,\vct t_{-ij})$ is either empty or has the form
\begin{equation*}
  v_i\ge u, \qquad c_j\le d, \qquad v_i-c_j\ge t
\end{equation*}
for effective cutoffs $u,d,t$, with $t\ge0$.
Replacing $u$ by $\max\{u,t\}$ does not change the selected set.
\end{lemma}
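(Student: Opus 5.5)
Fix $e=(i,j)$ and $\vct t_{-ij}$. The plan is to compare, as functions of $(v_i,c_j)$, the best matching that contains $e$ against the best competitors that do not. Every feasible $M\in\cF$ falls into one of five classes. \emph{(A)} $M$ contains $e$. \emph{(B)} $M$ matches $i$ (to some $j'\neq j$) but leaves $j$ unmatched. \emph{(C)} $M$ matches $j$ (to some $i'\ne i$) but leaves $i$ unmatched. \emph{(D)} $M$ matches both $i$ and $j$, but not to each other. \emph{(E)} $M$ matches neither. Write $W(M)$ for welfare at the profile. Then:
\begin{itemize}
\item for $M$ in (A), $W(M)=v_i-c_j+a_M$;
\item in (B), $W(M)=v_i+b_M$;
\item in (C), $W(M)=-c_j+\gamma_M$;
\item in (D), $W(M)=v_i-c_j+\delta_M$;
\item in (E), $W(M)=\varepsilon_M$.
\end{itemize}
Here the constants $a_M,b_M,\gamma_M,\delta_M,\varepsilon_M$ depend only on $\vct t_{-ij}$. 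Since $\cF$ is finite, each class has finitely many members.

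The key step is that the welfare-and-tie-break comparison between two matchings in the same class does not depend on $(v_i,c_j)$. Welfare differences within a class are constant, and cardinality and the fixed total order are report independent. So each class has a fixed champion under the complete tie-breaking order, and $M^*$ is the winner among the at most five champions $A^*,B^*,C^*,D^*,E^*$. If (A) is empty, the selected set is empty.

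Next, compare $A^*$ to each other champion. Against $D^*$ the comparison is again constant in $(v_i,c_j)$. If $D^*$ beats $A^*$, the selected set is empty; otherwise that constraint is vacuous. Against $B^*$: $A^*$ wins iff $-c_j+a\ \ge\ b$, with ties resolved by a fixed rule, which gives $c_j\le d$. Against $C^*$: $A^*$ wins iff $v_i+a\ge\gamma$, which gives $v_i\ge u$. Against $E^*$: $A^*$ wins iff $v_i-c_j\ge \varepsilon-a$, which gives $v_i-c_j\ge t$. Each of these is a weak or strict half-line or half-plane. Because supports are finite, a strict inequality over the finite supported set can be replaced by a weak one with a slightly shifted cutoff; this is why I call them ``effective'' cutoffs. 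Intersecting the three constraints gives the stated form.

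It remains to check two claims. First, $t\ge0$ can be arranged. By the minimum-cardinality convention, every selected edge has strictly positive surplus, so on the selected set $v_i-c_j>0$. We may therefore take $t:=\max\{t,\text{smallest effective cutoff}\ge 0\}$ without removing points; concretely, replace $t$ by $\max\{t,0\}$, which removes nothing because selected points already satisfy $v_i-c_j>0$. Second, replacing $u$ by $\max\{u,t\}$ changes nothing. Since $c_j\ge0$, every selected point has $v_i\ge t+c_j\ge t$, so the constraint $v_i\ge t$ is implied.

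The main obstacle is the tie-breaking bookkeeping. One has to verify that each pairwise comparison of champions is resolved by a single threshold, weak or strict. One also has to verify that the champion of each class is truly fixed, which holds because the full tie-breaking order is a lexicographic order on (welfare, cardinality, fixed order) and only the welfare component shifts, uniformly within each class. Finite support then turns strict thresholds into weak effective ones.
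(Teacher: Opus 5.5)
Your proposal is correct and follows the paper's proof essentially step for step: the same classification of feasible matchings relative to $e$, report-independent class champions, single-threshold comparisons against the three classes that depend on $(v_i,c_j)$ (with a constant comparison against the doubly-matched class), finite-support effective cutoffs, and $c_j\ge 0$ to make $v_i\ge t$ redundant. The only deviation is the $t\ge 0$ step. You clip $t$ at $0$ using the preliminary remark that selected edges have strictly positive surplus. The paper instead shows the natural diagonal threshold is already nonnegative: deleting $e$ from the best $e$-containing matching gives a neither-endpoint competitor with remaining welfare $W_e$, so $W_{00}\ge W_e$. Both arguments rest on downward closedness, and both suffice for the statement.
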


\begin{proof}
If no feasible matching contains $e$, the edge-selection region is empty, and there is nothing to prove.
Assume henceforth that some feasible matching contains $e$.
Among matchings that contain $e$, all candidates have the same dependence $v_i-c_j$ on the two varying reports.
Let $W_e$ be the largest contribution of their remaining edges, with ties resolved by the global rule.
Thus the selected candidate containing $e$ has welfare
\begin{equation*}
  v_i-c_j+W_e.
\end{equation*}
Among matchings that do not contain $e$, separate candidates according to whether they match neither endpoint, only $i$, only $j$, or both endpoints away from each other, and let $W_{00},W_{10},W_{01},W_{11}$ be the corresponding best remaining-edge welfare values.
Their best welfare values now have the forms
\begin{equation*}
  W_{00}, \qquad v_i+W_{10}, \qquad -c_j+W_{01}, \qquad v_i-c_j+W_{11},
\end{equation*}
where an unavailable class is assigned the value $-\infty$.
Within each class, the selected candidate is independent of $(v_i,c_j)$ because all candidates in that class have the same coefficients of $v_i$ and $c_j$.

Because the priors currently have finite support, strict and weak tie-breaking inequalities can be represented by effective cutoffs on the supported type pairs.
For example, if the candidate containing $e$ must strictly beat the $W_{01}$ candidate, replace the algebraic cutoff by the smallest supported buyer value that satisfies the strict comparison.
The same convention is used for costs and attained value-cost differences.

Compare $v_i-c_j+W_e$ with the four alternatives.
The comparisons against the $01$, $10$, and $00$ classes respectively impose
\begin{equation*}
  v_i\ge W_{01}-W_e, \qquad c_j\le W_e-W_{10}, \qquad v_i-c_j\ge W_{00}-W_e,
\end{equation*}
with weak or strict boundary inclusion determined by the fixed tie-breaking rule.
On the finite support, the effective-cutoff convention converts these comparisons into the required inequalities.
The comparison against the $11$ class is independent of $(v_i,c_j)$: if the candidate containing $e$ loses it under the fixed tie-breaking rule, the edge-selection region is empty; otherwise it imposes no additional type restriction.

To prove $t\ge0$, take the selected best matching containing $e$ and delete $e$.
The resulting matching is feasible by downward closedness, matches neither $i$ nor $j$, and has remaining-edge welfare $W_e$.
Hence $W_{00}\ge W_e$.
The algebraic diagonal threshold is therefore nonnegative, and moving a strict boundary to the next attained difference preserves nonnegativity.

Finally, costs are nonnegative.
Whenever $v_i-c_j\ge t$, we have $v_i\ge t$.
Thus adding the redundant condition $v_i\ge t$, or equivalently replacing $u$ by $\max\{u,t\}$, leaves the edge-selection region unchanged.
\end{proof}

\paragraph{Rank-space representation.}
Assume henceforth that the edge-selection wedge is nonempty.
We work with the original marginal distributions \(F_i\) and \(G_j\), without conditioning on the two tail events.

For $\rho\in(0,1)$, define buyer $i$'s upper-tail quantile by
\begin{equation*}
  Q_i(\rho) := \sup\bigl\{ v\in\supp(F_i): \Pr_{v_i\sim F_i}[v_i\ge v]\ge\rho \bigr\}.
\end{equation*}
For $\sigma\in(0,1)$, define seller $j$'s lower-tail quantile by
\begin{equation*}
  Q_j(\sigma) := \inf\bigl\{ c\in\supp(G_j): \Pr_{c_j\sim G_j}[c_j\le c]\ge\sigma \bigr\}.
\end{equation*}
We extend both quantile functions to $[0,1]$ by their one-sided limits.
These endpoint values do not affect any rank-space integral but make the prefix inequalities below meaningful at $z=0$ and $z=1$.

If $\rho$ and $\sigma$ are independent $\operatorname{Unif}(0,1)$ random variables, then $Q_i(\rho)\sim F_i$ and $Q_j(\sigma)\sim G_j$.
Thus the independent type pair $(v_i,c_j)$ can equivalently be represented by the independent rank pair $(\rho,\sigma)$.

The first-best selection region of edge \(e\) in rank space is
\begin{equation*}
  \mathcal D := \bigl\{ (\rho,\sigma)\in(0,1)^2: Q_i(\rho)\ge u,\ Q_j(\sigma)\le d,\ Q_i(\rho)-Q_j(\sigma)\ge t \bigr\}.
\end{equation*}
The region \(\mathcal D\) is a down-set.
Indeed, decreasing \(\rho\) weakly increases the buyer value, while decreasing \(\sigma\) weakly decreases the seller cost, so all three inequalities defining \(\mathcal D\) are preserved.

Since $\mathcal D$ is exactly the set of rank pairs for which edge $e$ is selected by the tie-broken first-best matching, this representation gives
\begin{equation}
  \E_{\substack{v_i\sim F_i\\ c_j\sim G_j}} \left[(v_i-c_j) \mathbf 1\bigl\{e\in M^*(v_i,c_j,\vct t_{-ij}) \bigr\} \right] = \int_{\mathcal D} \bigl(Q_i(\rho)-Q_j(\sigma)\bigr) \,\dd\rho\,\dd\sigma. \label{eq:edge-first-best-rank-integral}
\end{equation}
In particular, the integral already incorporates the probability masses of the two tail restrictions \(v_i\ge u\) and \(c_j\le d\).

\subsection{An edge-level Lagrangian bound}

Continue to fix the edge $e=(i,j)$ and the external report profile $\vct t_{-ij}$.
For a fixed multiplier $\alpha\ge0$, if edge $e$ is retained at $(v_i,c_j)$, its contribution to the fixed-$\alpha$ Lagrangian objective is
\begin{equation*}
  (v_i-c_j) +\alpha\bigl(\phi_i(v_i)-\psi_j(c_j)\bigr).
\end{equation*}
The corresponding first-best contribution is the actual surplus $v_i-c_j$.
For every $\alpha>0$, we construct a random contraction of $\mathcal D$ whose expected local Lagrangian contribution captures more than one half of this first-best contribution.
The particular Beta law and exponents below are chosen so that a contraction Jacobian factor cancels one endpoint power of the Beta density, yielding the scale identity used in the one-sided bounds.

Draw
\begin{equation*}
  Z\sim\Beta\left(\frac{1}{1+\alpha},\frac{1}{1+\alpha}\right).
\end{equation*}
Here, $\Beta(a,b)$ denotes the beta distribution on $(0,1)$ with density
\begin{equation*}
  f_{a,b}(z)=\frac{z^{a-1}(1-z)^{b-1}}{\mathrm B(a,b)}, \qquad \mathrm B(a,b):=\int_0^1 z^{a-1}(1-z)^{b-1}\,\dd z.
\end{equation*}
Define
\begin{equation*}
  \theta:=Z^{\alpha/(1+\alpha)}, \qquad \eta:=(1-Z)^{\alpha/(1+\alpha)}.
\end{equation*}
As $Z$ increases, $\theta$ increases while $\eta$ decreases.
For a realization of $Z$, define
\begin{equation*}
  \mathcal D_Z =\bigl\{(\theta\rho,\eta\sigma): (\rho,\sigma)\in\mathcal D\bigr\}.
\end{equation*}
Since \(0<\theta,\eta\le1\) almost surely and \(\mathcal D\) is a down-set, the coordinatewise contraction satisfies $\mathcal D_Z\subseteq\mathcal D$.
Moreover, positive coordinatewise scaling preserves downward closedness, so \(\mathcal D_Z\) is itself a down-set.

\Cref{fig:edge-geometry} summarizes the two geometric steps behind the local construction: the first-best edge-selection wedge becomes a rank-space down-set, which is then contracted toward the lower-left corner.

\begin{figure}[t]
\centering
\begin{tikzpicture}[x=0.95cm,y=0.95cm,>=Latex,font=\normalsize]
  \begin{scope}
    \path[fill=black!10]
      (1.45,0) -- (5.05,0) -- (5.05,2.55) -- (3.35,2.55)
      -- (1.45,0.65) -- cycle;

    \draw[->,thick] (0,0) -- (5.35,0) node[right] {$v_i$};
    \draw[->,thick] (0,0) -- (0,4.05) node[above] {$c_j$};

    \draw[densely dashed] (1.45,0) -- (1.45,3.85);
    \draw[densely dashed] (0,2.55) -- (5.15,2.55);
    \draw[thick] (0.80,0) -- (4.65,3.85)
      node[pos=0.82,above,sloped,font=\small] {$v_i-c_j=t$};

    \node[below] at (1.45,0) {$u$};
    \node[left] at (0,2.55) {$d$};
    \node[align=center] at (3.55,1)
      {$e\in M^*$};

    \node[font=\normalsize] at (2.65,-0.62) {(a) Type space};
  \end{scope}

  \draw[->,thick] (5.70,1.95) -- (7.00,1.95);

  \begin{scope}[xshift=7.35cm]
    \path[fill=black!10,draw=black!70,thick]
      (0,0)
      -- (5.00,0)
      -- (5.00,0.65)
      -- (4.25,0.65)
      -- (4.25,1.25)
      -- (3.45,1.25)
      -- (3.45,1.90)
      -- (2.65,1.90)
      -- (2.65,2.60)
      -- (1.80,2.60)
      -- (1.80,3.25)
      -- (1.00,3.25)
      -- (1.00,3.70)
      -- (0,3.70)
      -- cycle;

    \begin{scope}[xscale=0.48,yscale=0.88]
      \path[fill=orange!24,fill opacity=0.65,draw=orange!80!black,very thick]
        (0,0)
        -- (5.00,0)
        -- (5.00,0.65)
        -- (4.25,0.65)
        -- (4.25,1.25)
        -- (3.45,1.25)
        -- (3.45,1.90)
        -- (2.65,1.90)
        -- (2.65,2.60)
        -- (1.80,2.60)
        -- (1.80,3.25)
        -- (1.00,3.25)
        -- (1.00,3.70)
        -- (0,3.70)
        -- cycle;
    \end{scope}

    \begin{scope}[xscale=0.88,yscale=0.48]
      \path[fill=blue!20,fill opacity=0.65,draw=blue!65!black,very thick]
        (0,0)
        -- (5.00,0)
        -- (5.00,0.65)
        -- (4.25,0.65)
        -- (4.25,1.25)
        -- (3.45,1.25)
        -- (3.45,1.90)
        -- (2.65,1.90)
        -- (2.65,2.60)
        -- (1.80,2.60)
        -- (1.80,3.25)
        -- (1.00,3.25)
        -- (1.00,3.70)
        -- (0,3.70)
        -- cycle;
    \end{scope}

    \draw[->,thick] (0,0) -- (5.35,0) node[right] {$\rho$};
    \draw[->,thick] (0,0) -- (0,4.05) node[above] {$\sigma$};

    \draw[black!70,thick] (3.15,3.16) -- (3.65,3.16);
    \node[anchor=west] at (3.75,3.16) {$\mathcal D$};
    \draw[orange!80!black,very thick] (3.15,2.72) -- (3.65,2.72);
    \node[anchor=west,orange!80!black] at (3.75,2.72) {$\mathcal D_{z_-}$};
    \draw[blue!65!black,very thick] (3.15,2.28) -- (3.65,2.28);
    \node[anchor=west,blue!65!black] at (3.75,2.28) {$\mathcal D_{z_+}$};
    \node[font=\small] at (3.10,3.92)
      {$z_-<z_+:\ \theta_-<\theta_+,\ \eta_->\eta_+$};

    \node[font=\normalsize] at (2.65,-0.62) {(b) Rank space};
  \end{scope}
\end{tikzpicture}
\caption{Schematic edge-selection regions for a fixed edge $e=(i,j)$ and fixed external reports $\vct t_{-ij}$.
Left: among supported type pairs, the tie-broken first best selects $e$ on the wedge characterized in \cref{lem:edge-selection-wedge}; the continuous shading suppresses the discreteness of the finite support.
Right: under the upper-tail buyer rank and lower-tail seller rank, the selection region becomes a down-set $\mathcal D$.
The colored regions display two realizations $z_-<z_+$, illustrating that the horizontal scale $\theta$ increases while the vertical scale $\eta$ decreases with $Z$.
Both contracted regions remain contained in $\mathcal D$.
The staircase boundary reflects the finite-support quantile representation.}
\label{fig:edge-geometry}
\end{figure}

\paragraph{Remark.}
The contracted region $\mathcal D_Z$ should not be interpreted as the trade region of a second-best mechanism.
After fixing an edge $e$, all external reports, and a multiplier $\alpha$, trading on $\mathcal D_Z$ gives the expected contribution of $e$ to the fixed-$\alpha$ Lagrangian.
Intuitively, $\mathcal D_Z$ captures how much of the first-best surplus attributable to $e$ can be retained toward a lower bound on the second-best value.

\medskip

Lastly, we show that the expected Lagrangian contribution of the contracted region $\mathcal D_Z$ is at least a constant fraction of the first-best contribution of $\mathcal D$.
Set
\begin{equation*}
  g_\alpha :=\frac{1+\alpha}{ \mathrm B(1/(1+\alpha),1/(1+\alpha))}.
\end{equation*}

\begin{lemma}
\label{lem:local-contraction}
For every $\alpha>0$, the edge-level randomized rule that retains $e$ on $\mathcal D_Z$ satisfies
\begin{equation*}
  \E_Z\left[ \int_{\mathcal D_Z} \Bigl( Q_i(\rho)-Q_j(\sigma) +\alpha\bigl( \phi_i(Q_i(\rho))-\psi_j(Q_j(\sigma)) \bigr) \Bigr) \,\dd\rho\,\dd\sigma \right] \ge g_\alpha \int_{\mathcal D} \bigl(Q_i(\rho)-Q_j(\sigma)\bigr) \,\dd\rho\,\dd\sigma.
\end{equation*}
Moreover, $g_\alpha>1/2$.
\end{lemma}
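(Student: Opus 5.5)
The plan is to bound the buyer and seller parts of the local Lagrangian separately, each by $g_\alpha$ times the corresponding part of the first-best integral, and then add the two bounds. The key device is to read virtual values as derivatives of revenue curves in rank space. For the buyer, set $R_i(x):=xQ_i(x)$. In the continuous case $\phi_i(Q_i(\rho))=R_i'(\rho)$, so $\int_0^{x}\phi_i(Q_i(\rho))\dd\rho=xQ_i(x)$. For finite support, the discrete formula for $\phi_i$ is the difference quotient of $R_i$ across the atom $[1-F_{i,\ell},1-F_{i,\ell-1}]$. Drawing the hidden uniform rank inside an atom therefore makes $\int_0^x\phi_i(Q_i)\dd\rho$ the linear interpolation of $R_i$ between atom endpoints; it equals $xQ_i(x)$ at the endpoints and is $\ge xQ_i(x)$ in between, with the direction to be checked. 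Symmetrically, for the seller $\int_0^y\psi_j(Q_j)\dd\sigma$ is the interpolated version of $yQ_j(y)$, with the inequality in the direction needed for an upper bound. Making these interpolation inequalities go the right way at the staircase boundary is the step I expect to be the main obstacle. If they fail pointwise, I would first try to choose the hidden ranks so that the retained set within each atom is a prefix, which turns the identity into an equality.

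For the buyer term, I will slice $\mathcal D$ by $\sigma$. Because $\mathcal D$ is a down-set, each slice is an interval $\rho\in(0,a(\sigma))$. Changing variables $\sigma=\eta\sigma'$ (Jacobian $\eta$) turns the slice of $\mathcal D_Z$ into $\rho\in(0,\theta a(\sigma'))$. With $h(\theta):=\int_0^{\theta a}Q_i$, the buyer contribution on the slice becomes
\begin{equation*}
\eta\bigl(h(\theta)+\alpha\theta h'(\theta)\bigr)=\eta\,\alpha\,\theta^{1-1/\alpha}\tfrac{\dd}{\dd\theta}\bigl(\theta^{1/\alpha}h(\theta)\bigr).
\end{equation*}
Substituting $\theta=Z^{\alpha/(1+\alpha)}$ gives $\theta^{1/\alpha}=Z^{1/(1+\alpha)}$. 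The exponents are chosen so that this Jacobian cancels the $z^{a-1}$ factor of the $\Beta(a,a)$ density, where $a=1/(1+\alpha)$. Taking the expectation over $Z$ and integrating by parts in $z$, the factor $\eta=(1-Z)^{\alpha/(1+\alpha)}$ combines with $(1-z)^{a-1}$ to give a weight proportional to a power of $(1-z)$, and the boundary terms vanish. Since $Q_i$ is nonincreasing, $h$ is concave with $h(0)=0$, so $h(\theta)\ge\theta h(1)$. Feeding this in, the integral collapses to $\frac{1+\alpha}{\mathrm B(a,a)}\,h(1)=g_\alpha\int_0^{a(\sigma')}Q_i$. Integrating over $\sigma'$ gives $g_\alpha\int_{\mathcal D}Q_i$.

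The seller term is the mirror image under $Z\leftrightarrow1-Z$, which swaps $\theta$ and $\eta$ and preserves the symmetric Beta law. Now slice by $\rho$ and use $k(\eta):=\int_0^{\eta b}Q_j$. This function is convex because $Q_j$ is nondecreasing, so $k(\eta)\le\eta k(1)$. This is exactly the direction needed, since the seller part enters with a minus sign. It yields $\E_Z\bigl[\int_{\mathcal D_Z}(Q_j+\alpha\psi_j(Q_j))\bigr]\le g_\alpha\int_{\mathcal D}Q_j$. Subtracting the two bounds proves the main inequality.

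Finally, for $g_\alpha>1/2$, write $a=1/(1+\alpha)\in(0,1)$. Then $g_\alpha=1/(a\mathrm B(a,a))$, and
\begin{equation*}
a\mathrm B(a,a)=\frac{a\Gamma(a)^2}{\Gamma(2a)}=\frac{2\Gamma(1+a)^2}{\Gamma(1+2a)}.
\end{equation*}
Strict log-convexity of $\Gamma$ together with $\Gamma(1)=1$ gives $\Gamma(1+a)^2<\Gamma(1)\Gamma(1+2a)$. Hence $a\mathrm B(a,a)<2$, and so $g_\alpha>1/2$.
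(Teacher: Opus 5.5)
Your outline follows the paper's proof. It uses the prefix inequalities $\int_0^z\phi_i(Q_i)\ge zQ_i(z)$ and $\int_0^z\psi_j(Q_j)\le zQ_j(z)$, slices the down-set so that each slice of $\mathcal D_Z$ becomes $(0,\theta\ell)$ with Jacobian $\eta$ (mirrored for the seller via $Z\leftrightarrow 1-Z$), applies a Beta scale identity, and subtracts. Your proof of $g_\alpha>1/2$ via strict log-convexity of $\Gamma$ is a valid alternative to the paper's splitting $\mathrm B(a,a)=2\int_0^1 z^{a-1}(1-z)^a\,\dd z<2/a$.

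The step you flag as the main obstacle is not one. On each atom $(1-F_{i,\ell},1-F_{i,\ell-1}]$, the difference $\int_0^z\phi_i(Q_i)-zQ_i(z)$ is affine. It has right limit $(1-F_{i,\ell})(v_{i,\ell+1}-v_{i,\ell})\ge0$ at the left end and vanishes at the right end, and the seller case is symmetric. This is exactly the paper's check. No hidden-rank device is needed, because the lemma concerns deterministic rank-space integrals over the fixed set $\mathcal D_Z$.

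What is wrong as written is your evaluation of the scale integral. With $a=1/(1+\alpha)$ you have $\eta=(1-Z)^{1-a}$, so $\eta$ cancels the factor $(1-z)^{a-1}$ of the density entirely; the leftover power of $(1-z)$ is zero. Meanwhile your Jacobian $\alpha\theta^{1-1/\alpha}\frac{\dd}{\dd\theta}=\frac1a z^{1-a}\frac{\dd}{\dd z}$ cancels $z^{a-1}$. Hence
\[
\E_Z\bigl[\eta\bigl(h(\theta)+\alpha\theta h'(\theta)\bigr)\bigr]
=\frac{1}{a\,\mathrm B(a,a)}\int_0^1\frac{\dd}{\dd z}\bigl(z^{a}h(z^{1-a})\bigr)\,\dd z
=g_\alpha h(1)
\]
by the fundamental theorem of calculus. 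The boundary term at $z=1$ does not vanish; it is the entire answer. Concavity of $h$ (and convexity of $k$ on the seller side) plays no role.

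The route you describe would actually fall short. If a genuine weight $(1-z)^b$ with $b>0$ remained, integrating by parts and inserting $h(\theta)\ge\theta h(1)$ would give only $g_\alpha h(1)/(1+b)$.

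So the identity you state is correct, but it is an exact equality: the paper's \eqref{eq:scale-identity}, obtained there by Fubini. All the slack in the lemma comes from the prefix inequalities. For the same reason, your slice display should read $\ge$ rather than ``becomes''.
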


The local bound follows from separate estimates for the buyer- and seller-side terms.

\begin{lemma}[Buyer-side contraction bound]
\label{lem:buyer-contraction}
For every $\alpha>0$,
\begin{equation*}
  \E_Z\left[ \int_{\mathcal D_Z} \bigl( Q_i(\rho)+\alpha\phi_i(Q_i(\rho)) \bigr) \,\dd\rho\,\dd\sigma \right] \ge g_\alpha \int_{\mathcal D}Q_i(\rho)\,\dd\rho\,\dd\sigma.
\end{equation*}
\end{lemma}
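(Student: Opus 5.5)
The plan is to reduce the two-dimensional statement to a one-dimensional identity along each horizontal slice of the down-set $\mathcal D$. There are three ingredients: a revenue-curve inequality for the discrete virtual value, a slice-by-slice change of variables, and an exact computation of the expectation over $Z$.

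\emph{Step 1: revenue curve in rank space.} Define $A(x):=\int_0^x Q_i(\rho)\,\dd\rho$ and $R(x):=\int_0^x \phi_i(Q_i(\rho))\,\dd\rho$ for $x\in[0,1]$. I will show that $R(x)\ge x\,Q_i(x)$ for every $x$.

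Support value $v_{i,\ell}$ occupies the rank interval $(1-F_{i,\ell},\,1-F_{i,\ell-1}]$, and on it $\phi_i$ is constant. Telescoping the definition of $\phi_i$ gives
\[
f_{i,\ell}\,\phi_i(v_{i,\ell})=(1-F_{i,\ell-1})v_{i,\ell}-(1-F_{i,\ell})v_{i,\ell+1}.
\]
Hence at every breakpoint $x=1-F_{i,\ell}$ we get $R(x)=(1-F_{i,\ell})\,v_{i,\ell+1}$. With the left-limit convention this equals the limit of $x\,Q_i(x)$ at that endpoint.

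On the interval itself, $R$ is linear and $x\,Q_i(x)=x\,v_{i,\ell}$ is also linear. At the left endpoint, $R=(1-F_{i,\ell})v_{i,\ell+1}\ge(1-F_{i,\ell})v_{i,\ell}$. At the right endpoint the two agree. So $R\ge x\,Q_i(x)$ on each interval, and the hidden-uniform-rank convention for atoms is exactly what makes this work.

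It follows that
\[
\int_0^x\bigl(Q_i+\alpha\phi_i\circ Q_i\bigr)\,\dd\rho\;\ge\; A(x)+\alpha x A'(x),
\]
where $A'(x)=Q_i(x)$ for almost every $x$, and the inequality holds at every $x$ with the one-sided convention.

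\emph{Step 2: slicing.} Because $\mathcal D$ is a down-set, for each $\sigma$ its $\rho$-section is an interval $(0,a(\sigma))$, possibly up to an endpoint. The set $\mathcal D_Z$ has section $(0,\theta a(\sigma))$ at height $\sigma'=\eta\sigma$, and $\dd\sigma'=\eta\,\dd\sigma$. The left-hand side therefore equals
\[
\E_Z\Bigl[\eta\int \dd\sigma\int_0^{\theta a(\sigma)}(Q_i+\alpha\phi_i\circ Q_i)\,\dd\rho\Bigr].
\]
The right-hand side equals $g_\alpha\int A(a(\sigma))\,\dd\sigma$. By Step 1 and Tonelli, it suffices to prove, for each fixed $a\in[0,1]$,
\[
\E_Z\bigl[\eta\,\bigl(A(\theta a)+\alpha\,\theta a\,Q_i(\theta a)\bigr)\bigr]\ \ge\ g_\alpha A(a).
\]

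\emph{Step 3: exact computation.} This step is the heart of the argument, and it is where the Beta law is designed to make things cancel. Set $b=1/(1+\alpha)$.

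First, the factor $\eta=(1-Z)^{\alpha/(1+\alpha)}$ cancels the factor $(1-Z)^{b-1}$ in the Beta density. This is the Jacobian cancellation mentioned in the text. The expectation becomes
\[
\frac1{\mathrm B(b,b)}\int_0^1 Z^{b-1}\,h(\theta)\,\dd Z,\qquad h(\theta):=A(\theta a)+\alpha\theta\,\tfrac{\dd}{\dd\theta}A(\theta a).
\]
Next, substitute $Z=\theta^{(1+\alpha)/\alpha}$. Then $Z^{b-1}=\theta^{-1}$ and $\dd Z=\frac{1+\alpha}{\alpha}\theta^{1/\alpha}\,\dd\theta$. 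The integrand becomes
\[
\frac{1+\alpha}{\alpha}\,\theta^{1/\alpha-1}h(\theta)=(1+\alpha)\,\frac{\dd}{\dd\theta}\bigl(\theta^{1/\alpha}A(\theta a)\bigr).
\]
Integrating over $\theta\in(0,1)$ gives exactly $(1+\alpha)A(a)/\mathrm B(b,b)=g_\alpha A(a)$. Since $A$ is Lipschitz, hence absolutely continuous, the fundamental theorem of calculus applies, and the boundary term at $\theta=0$ vanishes.

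The main obstacle I expect is Step 1 at atoms and endpoints. There I need to check carefully that the discrete virtual value integrates to something dominating $x\,Q_i(x)$ under the chosen quantile conventions, and also that the sections $(0,a(\sigma))$ are handled correctly when the boundary of $\mathcal D$ falls inside an atom's rank interval. The rest of the argument is the exact identity in Step 3.
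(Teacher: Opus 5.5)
Your proposal is correct and follows essentially the same route as the paper. It uses the same prefix inequality $\int_0^z\phi_i(Q_i(r))\,\dd r\ge zQ_i(z)$, proved interval by interval via telescoping, the same horizontal slicing of the down-set with the factor $\eta$ from the change of variables, and the same cancellation of $(1-Z)^{\alpha/(1+\alpha)}$ against the Beta density. The only cosmetic difference is the final scale computation: you recognize $\frac1\alpha\theta^{1/\alpha-1}\bigl(A(\theta a)+\alpha\theta a\,Q_i(\theta a)\bigr)$ as the a.e.\ derivative of $\theta^{1/\alpha}A(\theta a)$ and integrate directly, whereas the paper applies Fubini to the first term and a change of variables to the second and observes that the $(w/\ell_B(\sigma))^{1/\alpha}$ terms cancel, which is the same integration by parts in a different guise.
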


\begin{lemma}[Seller-side contraction bound]
\label{lem:seller-contraction}
For every $\alpha>0$,
\begin{equation*}
  \E_Z\left[\int_{\mathcal D_Z} \bigl(Q_j(\sigma)+\alpha\psi_j(Q_j(\sigma))\bigr) \,\dd\rho\,\dd\sigma\right] \le g_\alpha \int_{\mathcal D}Q_j(\sigma)\,\dd\rho\,\dd\sigma.
\end{equation*}
\end{lemma}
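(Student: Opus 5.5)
The plan is to reduce the seller-side bound to a one-dimensional prefix inequality for the virtual cost in rank space. The randomized scaling will then turn out to produce a kernel that is \emph{exactly} constant, equal to $g_\alpha$. Write $\Psi(\sigma):=\psi_j(Q_j(\sigma))$ and $A(w):=\int_0^w Q_j(\sigma)\,\dd\sigma$.

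\emph{Step 1: prefix inequality.} I would first show that $\int_0^w\Psi(\sigma)\,\dd\sigma\le w\,Q_j(w)$ for all $w\in[0,1]$. On the atom of $c_{j,h}$, the rank interval is $(G_{j,h-1},G_{j,h})$ and $\Psi\equiv\psi_j(c_{j,h})$. The integral over the whole atom is
\[
g_{j,h}c_{j,h}+(c_{j,h}-c_{j,h-1})G_{j,h-1}=G_{j,h}c_{j,h}-G_{j,h-1}c_{j,h-1},
\]
so the sum telescopes and gives $\int_0^{G_{j,h}}\Psi=G_{j,h}c_{j,h}$. For $w$ inside an atom, the gap $wc_{j,h}-\int_0^w\Psi$ is affine in $w$. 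It equals $G_{j,h-1}(c_{j,h}-c_{j,h-1})\ge0$ at the left endpoint and $0$ at the right endpoint, hence it is nonnegative throughout. Equivalently, $\int_0^w(Q_j+\alpha\Psi)\le A(w)+\alpha wA'(w)$ for a.e.\ $w$.

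\emph{Step 2: slice and rescale.} Since $\mathcal D$ is a down-set, for each $\rho$ its $\sigma$-section is an interval $(0,s(\rho))$, with $s$ nonincreasing. The section of $\mathcal D_Z$ at $\theta\rho$ is then $(0,\eta s(\rho))$. Changing variables $\rho\mapsto\theta\rho$ (Jacobian $\theta$) and applying Step 1 on each slice gives
\[
\text{LHS}\le\int_0^1\E_Z\Bigl[\theta\bigl(A(\eta s)+\alpha\eta s\,Q_j(\eta s)\bigr)\Bigr]\dd\rho,\qquad s=s(\rho).
\]
It therefore suffices to prove, for each fixed $s\in[0,1]$,
\[
\E_Z\bigl[\theta\bigl(A(\eta s)+\alpha \eta s Q_j(\eta s)\bigr)\bigr]\le g_\alpha A(s)=g_\alpha\int_0^1 sQ_j(\sigma s)\,\dd\sigma.
\]

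\emph{Step 3: the scale identity.} Write $A(\eta s)=\int_0^1\1\{\sigma\le\eta\}\,sQ_j(\sigma s)\,\dd\sigma$. Let $\mu$ be the density of $\eta$ under the $\theta$-weighted law of $Z$. The left side then becomes $\int_0^1 sQ_j(\sigma s)\,k(\sigma)\,\dd\sigma$ with kernel
\[
k(\sigma)=\E[\theta\1\{\eta\ge\sigma\}]+\alpha\sigma\mu(\sigma).
\]
Put $b=1/(1+\alpha)$ and $a'=\alpha/(1+\alpha)$, so $a'+b=1$. The weight $\theta=Z^{a'}$ cancels the factor $z^{b-1}$ in the Beta density. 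Hence the $\theta$-weighted law of $Y=1-Z$ has density $y^{b-1}/\mathrm B(b,b)$, and $\eta=Y^{a'}$. Using $b/a'=1/\alpha$, this gives
\[
\E[\theta\1\{\eta\ge\sigma\}]=\frac{(1+\alpha)(1-\sigma^{1/\alpha})}{\mathrm B(b,b)},\qquad \alpha\sigma\mu(\sigma)=\frac{(1+\alpha)\sigma^{1/\alpha}}{\mathrm B(b,b)}.
\]
So $k\equiv g_\alpha$, which is exactly the claimed bound. Integrating over $\rho$ recovers $g_\alpha\int_{\mathcal D}Q_j$.

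\emph{Main obstacle.} The computation itself is clean. The step I expect to need care is Step 1 together with the hidden-rank convention. I must check that the quantile convention at atom boundaries matches the inequality $\int_0^w\Psi\le wQ_j(w)$ almost everywhere; this holds because $Q_j$ is constant on atom interiors. I must also justify that $\Psi$, which need not be monotone, is only ever integrated over prefixes $(0,w)$. That is precisely what the down-set structure of $\mathcal D_Z$ guarantees, since each of its $\sigma$-sections is an interval starting at $0$.
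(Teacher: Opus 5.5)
Your proposal is correct and follows essentially the same route as the paper: the telescoping prefix inequality $\int_0^z\psi_j(Q_j(r))\,\dd r\le zQ_j(z)$, slicing $\mathcal D$ into vertical prefix sections and rescaling, and then the scale identity in which the weight $\theta$ cancels a power of the Beta density. Your constant-kernel computation $k\equiv g_\alpha$ is a repackaging of the paper's derivation of \eqref{eq:scale-identity}, where the two terms involving $(w/\ell)^{1/\alpha}$ cancel in the same way your $1-\sigma^{1/\alpha}$ and $\sigma^{1/\alpha}$ pieces do.
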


We defer the proofs of \cref{lem:buyer-contraction,lem:seller-contraction} to \cref{subsec:one-sided-proofs}.

\begin{proof}[Proof of \cref{lem:local-contraction}]
Subtracting the seller-side upper bound in \cref{lem:seller-contraction} from the buyer-side lower bound in \cref{lem:buyer-contraction} gives the desired inequality.
It remains to bound $g_\alpha$.
Write $a:=1/(1+\alpha)$.
Splitting $1=z+(1-z)$ in the beta integral and using symmetry gives
\begin{align*}
  \mathrm B(a,a)
  &=\int_0^1 z^{a-1}(1-z)^{a-1}\bigl(z+(1-z)\bigr)\,\dd z\\
  &=2\int_0^1 z^{a-1}(1-z)^a\,\dd z
  <2\int_0^1 z^{a-1}\,\dd z
  =\frac{2}{a}=2(1+\alpha).
\end{align*}
Therefore,
\begin{equation*}
  g_\alpha = \frac{1+\alpha}{ \mathrm B(1/(1+\alpha),1/(1+\alpha)) } > \frac12. \qedhere
\end{equation*}
\end{proof}

\paragraph{Remark.}
If a buyer or seller report has positive probability, that reported type corresponds to an interval of ranks rather than a single rank.
Conditional on the reported pair, draw independent uniform hidden ranks in the corresponding buyer and seller rank intervals, apply the rank-space rule, and average over the hidden ranks.
Because the edge-level Lagrangian coefficient is constant on the rank rectangle corresponding to each reported type pair, this averaging preserves the integrals in \cref{lem:local-contraction}; because every $\mathcal D_Z$ is a down-set, the resulting report-level retention probability remains nondecreasing in the buyer's value and nonincreasing in the seller's cost.

\subsection{From edge-level bounds to half efficiency}

We now assemble the edge-level contractions into a single allocation rule for the original matching market.
For each fixed multiplier $\alpha$, the edge-level construction gives a retention probability for every edge after all external reports have been fixed.
The two points that require verification are feasibility at every support profile and monotonicity of each agent's interim service.

We first record an immediate consequence of \cref{lem:first-best-monotonicity}.

\begin{corollary}
\label{cor:stable-partner}
Fix all reports other than buyer $i$'s report.
Across all reports of buyer $i$, there is at most one seller to whom $i$ can be matched by the tie-broken first-best rule.
Symmetrically, after fixing all reports other than seller $j$'s report, there is at most one buyer to whom $j$ can be matched.
\end{corollary}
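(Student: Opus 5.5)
The argument is a direct reduction to \cref{lem:first-best-monotonicity}. We prove the buyer statement; the seller statement follows by the symmetric argument using part~(b).

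Fix all reports $\vct t_{-i}$ other than buyer $i$'s, and suppose, for contradiction, that there are two reports $v<v'$ of buyer $i$ and two distinct sellers $j\neq j'$ such that $(i,j)\in M^*(v,\vct t_{-i})$ and $(i,j')\in M^*(v',\vct t_{-i})$. Reports with $i$ unmatched play no role, since the claim only concerns reports at which $i$ is matched. Apply \cref{lem:first-best-monotonicity}(a) to the lower report $v$: because $i$ is matched to $j$ at $v$ and $v'>v$, buyer $i$ is also matched to $j$ at $v'$. Since $M^*(v',\vct t_{-i})$ is a matching, $i$ is incident to at most one edge, so $j'=j$. This contradicts $j\neq j'$. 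If the two reports are equal, $M^*$ is a single matching at that profile, and the conclusion is immediate.

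Hence, as $v_i$ ranges over all reports, the sellers matched to $i$ form a set of size at most one. Moreover, by the same lemma, the reports at which $i$ is matched form an upper set, on which the matching (not just the partner) is constant.

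For the seller side, fix $\vct t_{-j}$. If $j$ is matched to $i$ at cost $c$ and to $i'$ at cost $c'<c$, then \cref{lem:first-best-monotonicity}(b) forces $i$ to be matched to $j$ at $c'$, and the matching property gives $i'=i$.

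No step is a real obstacle. The only care needed is that \cref{lem:first-best-monotonicity} relies on the report-independent tie-breaking convention fixed earlier, so the conclusion holds for the tie-broken rule $M^*$ and not merely for an arbitrary welfare maximizer.
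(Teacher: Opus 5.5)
Your proof is correct and follows essentially the same route as the paper: apply \cref{lem:first-best-monotonicity} at the lower report, use that $M^*$ is a matching to force the two partners to coincide, and handle the seller case symmetrically via part~(b). Your side remark that the entire matching is constant on the upper set is true, but it comes from the proof of the lemma (which shows $M^*(v_i',\vct t_{-i})=M_A$) rather than from its statement, and the corollary does not need it.
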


\begin{proof}
Suppose that buyer $i$ is matched at two reports $v<v'$.
Applying \cref{lem:first-best-monotonicity} at report $v$ and then increasing the report to $v'$ shows that the edge incident to $i$ is unchanged.
Hence the partner is the same at the two reports.
The seller statement is symmetric.
\end{proof}

Fix \(\alpha>0\) and a report profile \(\vct t\) in the support product.
For each edge \(e=(i,j)\in M^*(\vct t)\), apply the local Beta-contraction rule to the edge-selection region determined by the external reports \(\vct t_{-ij}\).

Define
\begin{equation*}
  \lambda_e^\alpha(\vct t)
  :=
  \Pr\!\left[e
    \text{ is retained by the local contraction rule at profile \(\vct t\)}
  \right],
\end{equation*}
where the probability is over all randomization used by the local rule.
Set $\lambda_e^\alpha(\vct t)=0$ whenever $e\notin M^*(\vct t)$.

Define a randomized allocation rule $q^\alpha$ by independently retaining each edge $e\in M^*(\vct t)$ with probability $\lambda_e^\alpha(\vct t)$.
Equivalently, for every $M\in\cF$, let
\begin{equation}
\label{eq:assembled-allocation}
  q_M^\alpha(\vct t)
  :=
  \begin{cases}
    \displaystyle
    \prod_{e\in M}\lambda_e^\alpha(\vct t)
    \prod_{e\in M^*(\vct t)\setminus M}
      \bigl(1-\lambda_e^\alpha(\vct t)\bigr),
      & M\subseteq M^*(\vct t),\\[3mm]
    0,
      & M\nsubseteq M^*(\vct t).
  \end{cases}
\end{equation}
The independent coupling in~\eqref{eq:assembled-allocation} is only for convenience, as the analysis below uses only the resulting edge marginals.

\begin{lemma}[Feasibility on the support product]
\label{lem:assembly-feasibility}
Every realization of $q^\alpha$ belongs to $\cF$.
\end{lemma}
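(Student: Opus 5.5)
The plan is to read feasibility directly off the definition \eqref{eq:assembled-allocation} and then invoke downward closedness of $\cF$.

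Fix a support profile $\vct t$. By definition, $q_M^\alpha(\vct t)=0$ whenever $M\nsubseteq M^*(\vct t)$. So every outcome realized with positive probability is a subset $M\subseteq M^*(\vct t)$: it consists of those edges of $M^*(\vct t)$ that the independent retention draws keep. Since $M^*(\vct t)$ is a welfare-maximizing element of $\cF$ (chosen by the tie-breaking rule), and $\cF$ is downward closed by assumption (ii), every such $M$ lies in $\cF$.

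It remains to confirm that $q^\alpha(\vct t)$ is a genuine probability distribution on $\cF$. Each $\lambda_e^\alpha(\vct t)$ is a probability, hence lies in $[0,1]$. The product formula in \eqref{eq:assembled-allocation} is exactly the law of independent Bernoulli$(\lambda_e^\alpha(\vct t))$ inclusions over $e\in M^*(\vct t)$. Its weights are therefore nonnegative and sum to one over the subsets of $M^*(\vct t)$, all of which lie in $\cF$. If $M^*(\vct t)=\varnothing$, the rule places all mass on $\varnothing\in\cF$.

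I expect no real obstacle. The only point needing care is that the local rule is applied edge by edge to edges of one common matching $M^*(\vct t)$, rather than to edges drawn from different first-best matchings. This holds because $\lambda_e^\alpha(\vct t)$ is set to zero for $e\notin M^*(\vct t)$, so no edge outside the profile's first-best matching can ever appear. Monotonicity, which \cref{cor:stable-partner} is set up for, is a separate matter and is not needed for this lemma.
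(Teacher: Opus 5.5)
Your proposal is correct and follows the paper's argument: every realized edge set is a subset of the single first-best matching $M^*(\vct t)\in\cF$, so downward closedness makes it feasible. Your extra check, that the product formula in \eqref{eq:assembled-allocation} is a genuine probability distribution over subsets of $M^*(\vct t)$, is a harmless detail that the paper leaves implicit.
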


\begin{proof}
At every support profile, each realized edge set is a subset of the single matching $M^*(\vct t)\in\cF$.
Since $\cF$ is downward closed, every such subset also belongs to $\cF$.
\end{proof}

\begin{lemma}[Monotonicity]
\label{lem:assembly-monotonicity}
For every buyer, the interim service induced by $q^\alpha$ is nondecreasing in her value.
For every seller, it is nonincreasing in her cost.
Hence $q^\alpha\in\cP$.
\end{lemma}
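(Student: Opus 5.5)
We prove the buyer statement; the seller statement is symmetric, with costs reversed and the lower-tail rank $\sigma$ in place of $\rho$. Fix buyer $i$ and external reports $\vct t_{-i}$. We will show that the ex-post service probability $q_i^\alpha(v_i,\vct t_{-i})$ is nondecreasing in $v_i$. Averaging over $\vct t_{-i}\sim\pi_{-i}$ then gives monotone interim service $x_i$.

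By the edge marginals of \eqref{eq:assembled-allocation}, $q_i^\alpha(\vct t)=\sum_{j}\lambda^\alpha_{ij}(\vct t)$, and $\lambda^\alpha_{ij}(\vct t)=0$ unless $(i,j)\in M^*(\vct t)$. By \cref{cor:stable-partner}, as $v_i$ varies with $\vct t_{-i}$ fixed, there is at most one seller $j$ with which $i$ is ever matched by $M^*$. If no such seller exists, $q_i^\alpha\equiv0$ and there is nothing to prove. Otherwise
\begin{equation*}
q_i^\alpha(v_i,\vct t_{-i})=\lambda^\alpha_{ij}(v_i,\vct t_{-i}),
\end{equation*}
so we must show that this single edge's retention probability is nondecreasing in $v_i$.

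Write $e=(i,j)$. With $\vct t_{-i}$ fixed, the seller report $c_j$ is fixed and $\vct t_{-ij}$ is fixed, so the wedge of \cref{lem:edge-selection-wedge} and the down-set $\mathcal D$ are also fixed. Let $c_j$ correspond to the seller rank interval $J$, and let $v_i$ correspond to the buyer rank interval $I(v_i)$. A higher value $v_i$ gives an interval lying weakly to the left (smaller $\rho$), since $Q_i$ is nonincreasing. By the remark following \cref{lem:local-contraction}, whenever $e\in M^*(v_i,\vct t_{-i})$ we have
\begin{equation*}
\lambda^\alpha_e=\E_Z\bigl[\Pr_{\rho\sim U(I(v_i)),\,\sigma\sim U(J)}[(\rho,\sigma)\in\mathcal D_Z]\bigr],
\end{equation*}
and $\lambda^\alpha_e=0$ otherwise.

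We now compare two values $v_i<v_i'$. There are two cases.
\begin{enumerate}
\item If $e\notin M^*(v_i,\vct t_{-i})$, then $\lambda^\alpha_e(v_i)=0\le\lambda^\alpha_e(v_i')$.
\item If $e\in M^*(v_i,\vct t_{-i})$, then by \cref{lem:first-best-monotonicity} also $e\in M^*(v_i',\vct t_{-i})$, so both values use the same formula with the same $\mathcal D$, $J$, and $Z$. For each realization of $Z$, the set $\mathcal D_Z$ is a down-set in $\rho$. For fixed $\sigma$, the slice $\{\rho:(\rho,\sigma)\in\mathcal D_Z\}$ is therefore an initial segment $(0,r(\sigma))$ or $(0,r(\sigma)]$. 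For an initial segment, the fraction of a uniform draw from an interval that it covers is larger when the interval lies to the left. Since $\sup I(v_i')\le\inf I(v_i)$, the probability is weakly larger at $v_i'$. Integrating over $\sigma\in J$ and then over $Z$ gives $\lambda^\alpha_e(v_i')\ge\lambda^\alpha_e(v_i)$.
\end{enumerate}

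The main point needing care is the reduction to a single edge. Naively, raising $v_i$ could change $M^*$ globally, removing some edges incident to $i$ and adding others, or altering the external context $\vct t_{-ij}$ relevant to another edge. \cref{cor:stable-partner} together with \cref{lem:first-best-monotonicity} rules this out. The partner of $i$ is unique, and once $e$ is selected it stays selected. Moreover, its region $\mathcal D$ depends only on $\vct t_{-ij}$, which is unchanged. The rank-interval comparison for positive-mass types is routine once the intervals are ordered.

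Together with \cref{lem:assembly-feasibility}, this shows that $q^\alpha$ satisfies all three defining conditions of $\cP$, so $q^\alpha\in\cP$.
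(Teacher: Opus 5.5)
Your proof is correct and follows the paper's argument: you reduce to the unique potential partner via \cref{cor:stable-partner}, use \cref{lem:first-best-monotonicity} to see that the selection set is an upper set, use the down-set property of each $\mathcal D_Z$ (averaged over $Z$ and hidden ranks) to get a monotone retention probability, and then average over external reports. Your explicit comparison of the disjoint, ordered rank intervals (using $\sup I(v_i')\le\inf I(v_i)$) against initial-segment slices spells out the step the paper only summarizes as ``averaging over hidden ranks preserves this monotonicity.''
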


\begin{proof}
Fix a buyer $i$ and all other agents' reports in their supports.
By \cref{cor:stable-partner}, there is at most one seller $j^*$ to whom buyer $i$ can be matched as her report varies.
Moreover, \cref{lem:first-best-monotonicity} implies that the set of values at which $M^*$ contains $(i,j^*)$ is an upper set.

Within the edge-selection region determined by the fixed external reports, increasing the buyer value weakly decreases the buyer's upper-tail rank.
For every realization of the Beta scale, the contracted region $\mathcal D_Z$ is a down-set.
Hence its trade indicator is nondecreasing as the buyer rank improves.
Averaging over the Beta scale and over hidden ranks preserves this monotonicity.
Before the first-best threshold, the retention probability is zero, so the total service of buyer $i$ is nondecreasing over her support.
Taking expectation over the other agents' types therefore gives a nondecreasing interim service rule.

The seller argument is symmetric.
After fixing the other reports, there is at most one potential buyer partner; the first-best rule can only switch from matched to unmatched as the seller's cost increases, and membership in every contracted down-set is nonincreasing in the seller's lower-tail rank.
Thus every seller's interim service is nonincreasing in cost.
\end{proof}

Now we have all the ingredients to prove the finite-support half-efficiency bound.
\begin{lemma}
\label{lem:assembled-lagrangian}
For every $\alpha>0$,
\begin{equation}
  \label{eq:assembled-half} \cG(q^\alpha)+\alpha\cB(q^\alpha) \ge g_\alpha\FB \ge \frac12\FB,
\end{equation}
where the second inequality is strict whenever $\FB>0$.
\end{lemma}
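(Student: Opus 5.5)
The approach is to decompose both sides of \eqref{eq:assembled-half} edge by edge and by external reports, apply \cref{lem:local-contraction} to each piece, and then sum.

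\textbf{Step 1: decompose the objective.} By linearity,
\begin{equation*}
  \cG(q^\alpha)+\alpha\cB(q^\alpha)
  =\sum_{e=(i,j)\in E}\E_{\vct t_{-ij}}\Bigl[\E_{v_i,c_j}\bigl[\bigl(v_i-c_j+\alpha(\phi_i(v_i)-\psi_j(c_j))\bigr)q^\alpha_{e}(\vct t)\bigr]\Bigr],
\end{equation*}
where independence of the priors lets us take the inner expectation over $(v_i,c_j)\sim F_i\times G_j$ with $\vct t_{-ij}$ fixed. From \eqref{eq:assembled-allocation}, the edge marginal is $q^\alpha_e(\vct t)=\lambda^\alpha_e(\vct t)$, since independent retention within $M^*(\vct t)$ keeps $e$ with exactly this probability. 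Similarly,
\begin{equation*}
  \FB=\sum_{e}\E_{\vct t_{-ij}}\Bigl[\E_{v_i,c_j}\bigl[(v_i-c_j)\1\{e\in M^*(\vct t)\}\bigr]\Bigr],
\end{equation*}
and by \eqref{eq:edge-first-best-rank-integral} each inner term equals $\int_{\mathcal D}(Q_i-Q_j)$. If the wedge of \cref{lem:edge-selection-wedge} is empty, both inner terms vanish.

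\textbf{Step 2: convert to rank space.} For a fixed pair $(e,\vct t_{-ij})$ with nonempty region $\mathcal D$, I will identify the inner Lagrangian term with the left side of \cref{lem:local-contraction}. The law of $(v_i,c_j)$ is the pushforward of uniform $(\rho,\sigma)$ under $(Q_i,Q_j)$. At a reported pair $(v_i,c_j)$, $\lambda^\alpha_e$ is defined, following the Remark after \cref{lem:local-contraction}, as the average over hidden uniform ranks in that pair's rank rectangle $R$ and over $Z$ of $\1\{(\rho,\sigma)\in\mathcal D_Z\}$. Two facts make this work:
\begin{itemize}
\item $\mathcal D_Z\subseteq\mathcal D$, so $\lambda^\alpha_e=0$ whenever $e\notin M^*$. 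This matches the convention $\lambda^\alpha_e=0$ off $M^*(\vct t)$.
\item The Lagrangian coefficient is constant on $R$.
\end{itemize}
Multiplying by $\Pr[(v_i,c_j)]=|R|$ and summing over pairs therefore gives exactly
\begin{equation*}
  \E_Z\int_{\mathcal D_Z}\bigl(Q_i(\rho)-Q_j(\sigma)+\alpha(\phi_i(Q_i(\rho))-\psi_j(Q_j(\sigma)))\bigr)\,\dd\rho\,\dd\sigma,
\end{equation*}
with Fubini justifying the exchange of $\E_Z$ and the finite sum. \cref{lem:local-contraction} then bounds this by $g_\alpha\int_{\mathcal D}(Q_i-Q_j)$.

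\textbf{Step 3: sum and conclude.} Summing over $e$ and integrating over $\vct t_{-ij}$ yields $\cG(q^\alpha)+\alpha\cB(q^\alpha)\ge g_\alpha\FB$. Since $\FB\ge0$ (the empty matching is feasible) and $g_\alpha>1/2$ by \cref{lem:local-contraction}, we get $g_\alpha\FB\ge\frac12\FB$, strictly when $\FB>0$.

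\textbf{Main obstacle.} The delicate point is Step 2: checking that the report-level probability $\lambda^\alpha_e(\vct t)$ is exactly the rectangle average used in the rank-space integral, with no double counting across edges. It is also worth noting that the bound needs no sign condition on the Lagrangian integrand, because the inequality is applied pointwise in $\vct t_{-ij}$ before averaging. Finite support makes every measurability and interchange issue trivial.
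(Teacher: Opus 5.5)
Your proposal is correct and follows essentially the same route as the paper. Like the paper, you fix an edge and the external reports, use independence and the edge-marginal identity $q^\alpha_e=\lambda^\alpha_e$ to reduce to \cref{lem:local-contraction} together with \eqref{eq:edge-first-best-rank-integral}, and then average over external reports and sum over edges. Your Step 2 bookkeeping of hidden ranks and your remark that $\FB\ge0$ (which $g_\alpha\FB\ge\frac12\FB$ needs) spell out what the paper leaves to the preceding Remark and to the feasibility of the empty matching.
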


\begin{proof}
Fix an edge $e=(i,j)$ and external reports $\vct t_{-ij}$ in their supports.
By mutual independence, after these external reports are fixed the pair $(v_i,c_j)$ still has product law $F_i\otimes G_j$.
Write
\begin{equation*}
  \ell_{ij}^\alpha(v_i,c_j) :=(v_i-c_j)+\alpha\bigl(\phi_i(v_i)-\psi_j(c_j)\bigr)
\end{equation*}
for the edge-level coefficient in the fixed-$\alpha$ Lagrangian.
If the edge-selection region determined by $\vct t_{-ij}$ is empty, both its first-best and retained contributions are zero.
Otherwise, \cref{lem:local-contraction} and \eqref{eq:edge-first-best-rank-integral} give
\begin{align}
  &\E_{\substack{v_i\sim F_i\\c_j\sim G_j}}\left[ \ell_{ij}^\alpha(v_i,c_j) \lambda_e^\alpha(v_i,c_j,\vct t_{-ij}) \right] \notag\\
  &\qquad\ge g_\alpha\, \E_{\substack{v_i\sim F_i\\c_j\sim G_j}}\left[ (v_i-c_j) \mathbf 1\{e\in M^*(v_i,c_j,\vct t_{-ij})\} \right]. \label{eq:edgewise-assembled-bound}
\end{align}
Under the original-rank convention, $\mathcal D$ includes all three wedge restrictions and already has their probability masses built into its area.
Thus no additional probability multiplier appears in~\eqref{eq:edgewise-assembled-bound}.

Average~\eqref{eq:edgewise-assembled-bound} over the external reports and sum over all edges.
By construction, the edge marginal of $q^\alpha$ is $\lambda_e^\alpha$.
The left-hand side therefore becomes
\begin{align*}
  &\sum_{(i,j)\in E} \E_{\vct t\sim\pi}\left[ \bigl( v_i-c_j +\alpha(\phi_i(v_i)-\psi_j(c_j)) \bigr) q_{ij}^\alpha(\vct t) \right]\\
  &\qquad= \cG(q^\alpha)+\alpha\cB(q^\alpha).
\end{align*}
On the right-hand side, linearity of expectation gives
\begin{align*}
  &g_\alpha \sum_{(i,j)\in E} \E_{\vct t\sim\pi}\left[ (v_i-c_j) \mathbf 1\{(i,j)\in M^*(\vct t)\} \right]\\
  &\qquad= g_\alpha \E_{\vct t\sim\pi}\left[ \sum_{(i,j)\in M^*(\vct t)}(v_i-c_j) \right] =g_\alpha\FB.
\end{align*}
There is no double counting because, at every type profile, the indicators select exactly the edges of the single tie-broken first-best matching.
This proves the first inequality in~\eqref{eq:assembled-half}.
The second follows from $g_\alpha>1/2$.
\end{proof}

\begin{theorem}[Half efficiency]
\label{thm:finite-half}
If all priors have finite support, then
\begin{equation*}
  \SB\ge\frac12\FB.
\end{equation*}
\end{theorem}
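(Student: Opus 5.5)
The plan is to combine the Lagrangian characterization of \cref{prop:finite-primal-dual} with the fixed-$\alpha$ lower bound of \cref{lem:assembled-lagrangian}. By \eqref{eq:lagrangian-characterization},
\begin{equation*}
  \SB=\inf_{\alpha\ge0}\max_{q\in\cP}\bigl\{\cG(q)+\alpha\cB(q)\bigr\},
\end{equation*}
so it suffices to show that for every fixed $\alpha\ge0$ there is some $q\in\cP$ with $\cG(q)+\alpha\cB(q)\ge\frac12\FB$. Then the inner maximum is at least $\frac12\FB$ for each $\alpha$, and so is the infimum.

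I will split into two cases. For $\alpha>0$, I take $q=q^\alpha$ from \eqref{eq:assembled-allocation}. Membership $q^\alpha\in\cP$ follows from \cref{lem:assembly-feasibility} (feasibility) and \cref{lem:assembly-monotonicity} (monotone interim service). \Cref{lem:assembled-lagrangian} then gives $\cG(q^\alpha)+\alpha\cB(q^\alpha)\ge g_\alpha\FB\ge\frac12\FB$.

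For $\alpha=0$, the objective reduces to $\cG(q)$. Here I take $q$ to be the deterministic tie-broken first-best rule $q^*$, with $q^*_M(\vct t)=\1\{M=M^*(\vct t)\}$. It is feasible because $M^*(\vct t)\in\cF$. For monotonicity, fix a buyer $i$ and the reports of all other agents. By \cref{lem:first-best-monotonicity}(a), the indicator that $i$ is matched under $M^*$ is nondecreasing in $v_i$. Averaging over the other agents' types shows that the interim service $x_i$ is nondecreasing. \Cref{lem:first-best-monotonicity}(b) gives the symmetric statement for sellers. Hence $q^*\in\cP$ and $\cG(q^*)=\FB\ge\frac12\FB$, since $\FB\ge0$ because $\varnothing\in\cF$.

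Combining the two cases, $\max_{q\in\cP}\{\cG(q)+\alpha\cB(q)\}\ge\frac12\FB$ for all $\alpha\ge0$, and taking the infimum gives $\SB\ge\frac12\FB$. I do not expect a serious obstacle, since the substantive work is in the earlier lemmas. The one point needing care is the $\alpha=0$ case, which \cref{lem:assembled-lagrangian} does not cover: one must check that the first-best rule is itself in $\cP$, which is exactly what \cref{lem:first-best-monotonicity} supplies.
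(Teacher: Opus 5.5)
Your proposal is correct and follows the paper's own proof essentially verbatim. Like the paper, you use $q^\alpha$ together with \cref{lem:assembly-feasibility,lem:assembly-monotonicity,lem:assembled-lagrangian} for $\alpha>0$, use the first-best rule (which lies in $\cP$ by \cref{lem:first-best-monotonicity}) for $\alpha=0$, and then take the infimum in the Lagrangian identity \eqref{eq:lagrangian-characterization}.
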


\begin{proof}
Fix $\alpha>0$.
By \cref{lem:assembly-feasibility,lem:assembly-monotonicity,lem:assembled-lagrangian},
\begin{equation*}
  \max_{q\in\cP} \bigl\{\cG(q)+\alpha\cB(q)\bigr\} \ge \cG(q^\alpha)+\alpha\cB(q^\alpha) \ge \frac12\FB.
\end{equation*}
At $\alpha=0$, the deterministic allocation that selects $M^*(\vct t)$ at every profile belongs to $\cP$ by \cref{lem:first-best-monotonicity}, and its objective value is $\FB$.
Consequently,
\begin{equation*}
  \max_{q\in\cP} \bigl\{\cG(q)+\alpha\cB(q)\bigr\} \ge\frac12\FB \qquad\text{for every }\alpha\ge0.
\end{equation*}
Taking the infimum over $\alpha$ yields
\begin{equation*}
  \SB =\inf_{\alpha\ge0} \max_{q\in\cP} \bigl\{\cG(q)+\alpha\cB(q)\bigr\} \ge\frac12\FB. \qedhere
\end{equation*}
\end{proof}

\paragraph{Remark.}
The factor \(1/2\) is tight already for bilateral trade, as shown by Liu et al.~\cite{LiuQinRenWang2026}.
Since bilateral trade is a special case of our downward-closed matching model, our guarantee is also tight.

\paragraph{Obtaining a half-efficient mechanism.}
The proof above establishes the half-efficiency bound but does not explicitly give a mechanism that achieves it.
Next we show how to construct a BIC, interim-IR mechanism that achieves at least half of the first-best GFT.
For $\FB>0$, compute $\FB$ and the maximum feasible matching size $R_{\max}:=\max_{M\in\cF}|M|$, and let $q^0$ be the first-best allocation.
The identities $|\mathcal D_Z|=\theta\eta|\mathcal D|$ and $\E_Z[\theta\eta]=g_\alpha/(1+\alpha)$ bound the expected number of retained edges by $g_\alpha R_{\max}/(1+\alpha)$.
Since each edge contributes at most one unit of GFT, we have $\cG(q^\alpha)\le g_\alpha R_{\max}/(1+\alpha)$.
Combining this upper bound with \cref{lem:assembled-lagrangian} and rearranging gives
\begin{equation*}
  \cB(q^\alpha)\ge\frac{g_\alpha}{\alpha(1+\alpha)}\bigl((1+\alpha)\FB-R_{\max}\bigr), \qquad \alpha>0.
\end{equation*}
Setting $\alpha=R_{\max}/\FB$ makes the term in parentheses positive:
\begin{equation*}
  (1+\alpha)\FB-R_{\max}=\left(1+\frac{R_{\max}}{\FB}\right)\FB-R_{\max}=\FB>0.
\end{equation*}
Since the coefficient $g_\alpha/(\alpha(1+\alpha))$ is also positive, $\cB(q^\alpha)>0$ at this parameter value.
For each edge and fixed external report profile, changing variables to $\mathcal D$ cancels the Beta density with the area-scaling factor, up to a continuous coefficient.
With finite supports, the remaining integrands are bounded and converge almost everywhere whenever $\alpha_n\to\alpha\ge0$.
Dominated convergence and the finite sum over edges and external support profiles therefore imply that $\cB(q^\alpha)$ is continuous, including at $\alpha=0$.
This gives us the following construction of a half-efficient mechanism: If $\cB(q^0)\ge0$, use the first-best allocation; otherwise, continuity guarantees a parameter $\widehat\alpha\in(0,R_{\max}/\FB)$ with $\cB(q^{\widehat\alpha})=0$.
At this parameter, \cref{lem:assembled-lagrangian} gives $\cG(q^{\widehat\alpha})\ge g_{\widehat\alpha}\FB>\FB/2$.
By \cref{lem:finite-budget-identity}, the selected allocation has a BIC, interim-IR, and ex-ante weakly budget-balanced implementation on the full report domain with the same GFT.
The balancing transformation in \cref{lem:balancing} then makes the transfers exactly strongly budget balanced at every report profile.
Note that this construction requires evaluating $\cB(q^\alpha)$ to choose the parameter before reports are observed.
It therefore does not assert polynomial-time computability.
If $\FB=0$, no trade and zero transfers suffice.

\subsection{Proofs of the one-sided contraction bounds}
\label{subsec:one-sided-proofs}

We now prove the two analytic estimates used in \cref{lem:local-contraction}.
Due to symmetry, we only give the buyer-side argument as a proof of \cref{lem:buyer-contraction}; the seller-side proof is analogous and can be found in Appendix~\ref{app:seller-contraction}.

\begin{proof}[Proof of \cref{lem:buyer-contraction}]
Because \(\mathcal D\) is a down-set, every horizontal section is a prefix.
Define
\begin{equation*}
  \ell_B(\sigma) := \sup\bigl\{ \rho\in[0,1]:(\rho,\sigma)\in\mathcal D \bigr\},
\end{equation*}
with the supremum of an empty section interpreted as zero.
For a fixed realization of \(Z\), write $(\rho',\sigma') := (\theta\rho,\eta\sigma)$ for the coordinates in the contracted region \(\mathcal D_Z\).
The change of variables above gives
\begin{align*}
  &\int_{\mathcal D_Z} \bigl( Q_i(\rho') +\alpha\phi_i(Q_i(\rho')) \bigr) \,\dd\rho'\,\dd\sigma'\\
  &\qquad= \theta\eta \int_0^1 \int_0^{\ell_B(\sigma)} \bigl( Q_i(\theta\rho) +\alpha\phi_i(Q_i(\theta\rho)) \bigr) \,\dd\rho\,\dd\sigma\\
  &\qquad= \eta \int_0^1 \int_0^{\theta\ell_B(\sigma)} \bigl( Q_i(r) +\alpha\phi_i(Q_i(r)) \bigr) \,\dd r\,\dd\sigma,
\end{align*}
where the second equality uses the substitution \(r=\theta\rho\).

We next establish the buyer prefix inequality
\begin{equation*}
  \int_0^z \phi_i(Q_i(r)) \,\dd r \ge zQ_i(z) \qquad \text{for every }z\in[0,1].
\end{equation*}
By the support notation introduced in \cref{subsec:finite-lagrangian},
\begin{equation*}
  Q_i(\rho)=v_{i,\ell} \qquad \text{for } \rho\in(1-F_{i,\ell},1-F_{i,\ell-1}].
\end{equation*}
For every \(\ell<L_i\), the definition of the virtual value gives
\begin{equation*}
  f_{i,\ell}\phi_i(v_{i,\ell}) = (1-F_{i,\ell-1})v_{i,\ell} -(1-F_{i,\ell})v_{i,\ell+1}.
\end{equation*}
Together with \(\phi_i(v_{i,L_i})=v_{i,L_i}\), these identities telescope to
\begin{equation*}
  \int_0^{1-F_{i,\ell-1}} \phi_i(Q_i(r))\,\dd r = (1-F_{i,\ell-1})v_{i,\ell} \qquad \text{for every }\ell.
\end{equation*}

For \(\ell<L_i\), the difference
\begin{equation*}
  \int_0^z\phi_i(Q_i(r))\,\dd r-zQ_i(z)
\end{equation*}
is affine on \((1-F_{i,\ell},1-F_{i,\ell-1}]\), has right-hand limit \((1-F_{i,\ell})(v_{i,\ell+1}-v_{i,\ell})\ge0\) at the left boundary, and equals zero at the right endpoint.
For \(\ell=L_i\), it is identically zero on \((0,1-F_{i,L_i-1}]\).
At the left boundary itself, the preceding quantile interval gives value zero.
Hence
\begin{equation*}
  \int_0^z\phi_i(Q_i(r))\,\dd r \ge zQ_i(z) \qquad \text{for every }z\in[0,1].
\end{equation*}
Consequently,
\begin{equation*}
  \int_0^z \bigl( Q_i(r)+\alpha\phi_i(Q_i(r)) \bigr) \,\dd r \ge \int_0^zQ_i(r)\,\dd r + \alpha zQ_i(z).
\end{equation*}
We next record the scale identity used on each horizontal section.
Let \(H:[0,1]\to\R\) be bounded and measurable.
Multiplying the density of \(Z\) by \(\eta=(1-Z)^{\alpha/(1+\alpha)}\) cancels the power of \(1-Z\), and hence
\begin{equation*}
  \E_Z[\eta H(\theta)] = \frac{1}{ \mathrm B(1/(1+\alpha),1/(1+\alpha)) } \int_0^1 z^{1/(1+\alpha)-1} H\bigl(z^{\alpha/(1+\alpha)}\bigr) \,\dd z.
\end{equation*}
The substitution $u=z^{\alpha/(1+\alpha)}$ gives
\begin{equation}\label{eq:scale-transform}
  \E_Z[\eta H(\theta)] = \frac{g_\alpha}{\alpha} \int_0^1 u^{1/\alpha-1}H(u)\,\dd u.
\end{equation}

Now fix a seller rank \(\sigma\), and apply \eqref{eq:scale-transform} with
\begin{equation*}
  H(u) := \int_0^{u\ell_B(\sigma)}Q_i(w)\,\dd w + \alpha u\ell_B(\sigma) Q_i\bigl(u\ell_B(\sigma)\bigr).
\end{equation*}
If \(\ell_B(\sigma)>0\), Fubini's theorem gives
\begin{equation*}
  \int_0^1 u^{1/\alpha-1} \left( \int_0^{u\ell_B(\sigma)} Q_i(w)\,\dd w \right) \,\dd u = \alpha \int_0^{\ell_B(\sigma)} \left[1- \left(\frac{w}{\ell_B(\sigma)}\right)^{1/\alpha}\right] Q_i(w)\,\dd w,
\end{equation*}
while the change of variables \(w=u\ell_B(\sigma)\) gives
\begin{equation*}
  \alpha \int_0^1 u^{1/\alpha-1} u\ell_B(\sigma) Q_i\bigl(u\ell_B(\sigma)\bigr) \,\dd u = \alpha \int_0^{\ell_B(\sigma)} \left(\frac{w}{\ell_B(\sigma)}\right)^{1/\alpha} Q_i(w)\,\dd w.
\end{equation*}
The two terms involving $\left(\frac{w}{\ell_B(\sigma)} \right)^{1/\alpha}$ cancel.
Therefore,
\begin{equation}\label{eq:scale-identity}
  \E_Z\left[ \eta \left\{ \int_0^{\theta\ell_B(\sigma)} Q_i(w)\,\dd w + \alpha\theta\ell_B(\sigma) Q_i\bigl(\theta\ell_B(\sigma)\bigr) \right\} \right] = g_\alpha \int_0^{\ell_B(\sigma)} Q_i(w)\,\dd w.
\end{equation}
The case \(\ell_B(\sigma)=0\) is immediate.
Combining the buyer prefix inequality with \eqref{eq:scale-identity} and integrating over \(\sigma\) proves the stated bound.
\end{proof}

\subsection{Applications to Multi-Dimensional Markets}

The matching market model studied in this paper is single-dimensional, in the sense that each agent's type is a single real number that captures her value or cost for \emph{any} feasible trade.
Many applications, however, involve multi-dimensional types, where each agent's type is a vector of values or costs for different items or bundles.
In this section, we show how our half-efficiency guarantee extends to certain multi-dimensional settings.

\paragraph{Additive buyers.}
We start with the additive setting of Cai et al.~\cite{CaiGoldnerMaZhao2021} and Rubinstein et al.~\cite{RubinsteinTanZhou2026}.
Let $J$ be a finite set of heterogeneous items, each initially owned by a seller.
Buyers and sellers are additive, with $v_i(T)=\sum_{j\in T}v_{ij}$ for each buyer $i$ and item set $T\subseteq J$.
Any collection of trades is feasible provided that each item is allocated to at most one buyer.
We further assume that all values and costs lie in $[0,1]$, and agents' type vectors are mutually independent; however, coordinates within one agent may be correlated.
This additive special case is discussed in both Cai et al.~\cite{CaiGoldnerMaZhao2021} and Rubinstein et al.~\cite{RubinsteinTanZhou2026}, although the two works differ in ownership: the former has one seller per item, whereas the latter allows one seller to own all items.

Our half-efficiency theorem can be applied directly to these two additive settings, because there are no constraints linking different items, which allows us to apply our half-efficiency result separately to each item.

\begin{corollary}[Additive multi-dimensional markets]
\label{cor:additive-multidimensional-half}
In the additive setting above, there exists a BIC, interim-IR, exactly strongly budget-balanced mechanism with
\begin{equation*}
  \GFT\ge\frac12\FB.
\end{equation*}
Moreover, the factor $1/2$ is tight.
\end{corollary}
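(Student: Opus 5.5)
The plan is to decompose the additive market item by item, apply the main theorem (via \cref{prop:finite-support-reduction} and \cref{thm:finite-half}) to each single-item market, and then combine the resulting mechanisms. Tightness is inherited from bilateral trade.

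\textbf{Decomposition.} Because there are no constraints linking items and valuations are additive, the first-best welfare separates across items:
\begin{equation*}
  \FB=\sum_{j\in J}\FB_j,
\end{equation*}
where $\FB_j$ is the first best of the single-item market for item $j$. In that market the buyers have one-dimensional values $v_{ij}$, and the owner of $j$ has cost $c_j$. Feasibility there is that item $j$ goes to at most one buyer, which is the downward-closed family of star matchings centered at the owner of $j$. Coordinates within one agent may be correlated, but across agents the types are independent. Hence, for a fixed item $j$, the priors of $(v_{ij})_i$ and $c_j$ are mutually independent, and each single-item market is an instance of our model. The main theorem then yields a BIC, interim-IR, budget-balanced mechanism $\mathcal M_j$ with $\GFT(\mathcal M_j)\ge\frac12\FB_j$.

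\textbf{Combination.} Run all $\mathcal M_j$ in parallel. Each buyer reports her whole vector, and each seller reports her costs. Each $\mathcal M_j$ uses only the coordinate-$j$ reports, and total transfers are the sums of the per-item transfers.

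\textbf{Incentives, the main obstacle.} An agent's interim utility in the combined mechanism is the sum over items of her interim utilities in the $\mathcal M_j$. In $\mathcal M_j$, the interim quantities depend only on her own $j$-th coordinate, because they are averaged over the other agents, who are independent of her. So her utility is $\sum_j u_{ij}(v_{ij};z_{ij})$. This sum is maximized by truthful reporting coordinatewise, whatever the correlation across her own coordinates, and it is nonnegative because each term is. Correlation therefore never enters; only independence across agents, which is given, is used.

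\textbf{Budget and ownership.} Summing per-item exact strong budget balance at every profile gives exact strong budget balance. When one seller owns all items, as in Rubinstein et al., the same argument applies, since her costs are additive and her utility again splits into per-item terms. Summing the per-item guarantees gives $\GFT\ge\frac12\FB$.

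\textbf{Tightness.} A single item with one buyer and one seller is bilateral trade, so the examples of Liu et al.\ show that the factor $1/2$ cannot be improved.
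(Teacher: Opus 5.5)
Your proposal is correct and follows the paper's own route: split the market into independent single-item markets, apply the half-efficiency theorem to each, run the per-item mechanisms in parallel, and inherit tightness from bilateral trade. Your check that within-agent correlation is harmless is exactly the detail the paper leaves implicit: utility separates into per-item terms, and full-domain BIC makes each term maximized coordinatewise. One small point: $\SB_j\ge\frac12\FB_j$ bounds a supremum, so to get an actual mechanism $\mathcal M_j$ you should also cite the attainment result \cref{thm:potential-program}, not only \cref{prop:finite-support-reduction} and \cref{thm:finite-half}.
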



\paragraph{A unit-demand buyer.}
Cai et al.~\cite{CaiGoldnerMaZhao2021} further consider a single unit-demand buyer facing $n$ items, each owned by a different seller.
The buyer values item $j$ at $v_j$ and can acquire at most one item, whereas seller $j$ incurs cost $c_j$ for selling her item.
We consider their setting with mutually independent continuous value and cost distributions on $[0,1]$.
Their comparison market replaces the buyer with one single-parameter buyer per item and allows at most one buyer--seller pair to trade, with each buyer interested only in her corresponding seller's item.
For $c>1$, Cai et al.~\cite{CaiGoldnerMaZhao2021} provide a reduction stating that any guarantee of $\SB_{\rm SD}\ge\FB_{\rm SD}/c$ in this comparison market implies a $1/(2c)$ first-best guarantee in the unit-demand market~\cite[Theorem~4]{CaiGoldnerMaZhao2021}.
Our half-efficiency theorem gives $c=2$, which yields a $1/4$ guarantee in the unit-demand setting, improving the $1/6.3$ guarantee of~\cite{CaiGoldnerMaZhao2021} and~\cite{BabaioffRubinsteinTanWang2026}.

\begin{corollary}[A unit-demand buyer]
\label{cor:cai-unit-demand}
In the unit-demand setting above, there exists a DSIC, ex-post IR, ex-ante weakly budget-balanced mechanism with
\begin{equation*}
  \GFT\ge\frac14\FB.
\end{equation*}
Under the signed-transfer convention of this paper, the same allocation admits a BIC, interim-IR implementation that is exactly strongly budget balanced at every report profile.
\end{corollary}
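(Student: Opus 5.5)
The plan is to apply the reduction of Cai et al.~\cite[Theorem~4]{CaiGoldnerMaZhao2021} as a black box, feeding it the constant $c=2$ supplied by our single-dimensional theorem. First I will make precise the comparison market. It has one single-parameter buyer per item $j$ with value $v_j$ and seller $j$ with cost $c_j$, with all types mutually independent. Since the original buyer's coordinates are independent here, the comparison buyers' values are also independent. The edge set consists of the pairs $(j,j)$, and the feasible family $\cF$ is the set of matchings with at most one edge. This family is downward closed and consists of matchings, so it is an instance in our model. The continuous priors on $[0,1]$ are Borel. Hence \cref{thm:finite-half} combined with \cref{prop:finite-support-reduction} gives $\SB_{\rm SD}\ge\frac12\FB_{\rm SD}$, where $\SB_{\rm SD}$ is the BIC, interim-IR, exactly strongly budget-balanced benchmark.

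The second step checks that our benchmark is at least as strong as the hypothesis required by their reduction. Their theorem is phrased in terms of the second-best benchmark of the comparison market under BIC, interim IR and ex-ante weak budget balance. By \cref{lem:balancing}, this benchmark coincides with ours for every allocation rule. So the hypothesis ``$\SB_{\rm SD}\ge\FB_{\rm SD}/c$'' holds with $c=2$, and their theorem yields a DSIC, ex-post IR, ex-ante weakly budget-balanced mechanism in the unit-demand market with $\GFT\ge\FB/(2c)=\FB/4$.

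The main obstacle is this interface. I need to verify that the quantities in Cai et al.\ match our definitions: the first best of the comparison market must equal ours, and the budget notions and IR/IC notions must be compatible, including their convention that the constructed mechanism's GFT is compared against the original unit-demand first best. I would reread their Theorem~4 to confirm that its hypothesis concerns only the optimal ratio in the comparison market and not some specific mechanism. I would also confirm that the continuity assumption on priors is what their theorem requires. If the assumption is needed only for their proof, it is harmless here.

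Finally, for the signed-transfer statement, I would take the resulting allocation rule of the DSIC mechanism. DSIC and ex-post IR imply BIC and interim IR, and the mechanism is ex-ante weakly budget balanced. Then \cref{lem:balancing} provides transfers that keep the allocation, and hence the GFT, unchanged while making the mechanism BIC, interim IR and exactly strongly budget balanced at every report profile. The balancing lemma was stated for single-dimensional types, so I will note that its standard AGV-style construction uses only independence across agents, not single-dimensionality.
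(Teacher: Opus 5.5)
Your proposal matches the paper's argument. The one-trade comparison market is a downward-closed instance of our model, so \cref{thm:finite-half} together with \cref{prop:finite-support-reduction} supplies $c=2$ in Theorem~4 of Cai et al.\ and hence the $1/4$ guarantee; the signed-transfer statement then follows from \cref{lem:balancing}, whose construction needs only independence across agents and so applies to the multi-dimensional buyer (in your second step the trivial inclusion already suffices, since every mechanism feasible for our benchmark is feasible for the ex-ante weakly budget-balanced one).
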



\section{Conclusion}

We determine the exact worst-case efficiency of the second best in downward-closed matching markets: every instance satisfies $\SB\ge\FB/2$, and the constant is tight already in bilateral trade.
The lower-bound proof builds on the virtual-GFT framework of Br\"ustle et al.~\cite{BrustleCaiWuZhao2017} as its optimization starting point.
Its geometric step represents first-best edge-selection regions as rank-space down-sets and uses a Beta contraction to obtain an expected local Lagrangian contribution of at least half the corresponding first-best surplus before assembling the edge-level bounds globally.

The proof also isolates two directions not addressed here.
First, the exact worst-case efficiency ratio may differ under ex-post individual rationality or when buyers can only make payments and sellers can only receive them.
Second, our finite-support mechanism is constructive for an explicitly represented instance, but obtaining a compact polynomial-time implementation under succinct representations of the feasible family or type distributions is a separate computational question.

\section*{Acknowledgements}
Generative AI tools, especially GPT 5.6 and GPT 6, assisted with many parts of the mathematical development and exposition.
These tools were also used for writing the early-stage manuscript of this paper.
Despite these assistance, the authors put significant effort into verifying them and making them accessible to human, and retain full responsibility for the final text and results.

\bibliographystyle{plain}
\bibliography{secondbest}

\clearpage
\appendix

\section{Full-potential formulation and attainment}
\label{app:borel-attainment}

This appendix characterizes the second-best value through a full-potential program and proves that it is attained.
The formulation makes incentive compatibility on the entire report domain explicit and supplies the compactness needed for attainment.

\subsection{Reduced forms and utility potentials}
For an allocation kernel $q$, let $(x,y)$ denote its interim service rules.
Two reduced forms that agree almost everywhere under the corresponding marginal priors are identified.
Let
\begin{equation*}
  \cR\subseteq \prod_{i\in B}L^\infty(F_i) \times \prod_{j\in S}L^\infty(G_j)
\end{equation*}
be the set of all prior-almost-everywhere reduced forms induced by Borel allocation kernels that select only feasible matchings at every report profile.

An agent $k\in K$ is \emph{active} if some matching in $\cF$ contains an edge incident to $k$.
Define
\begin{equation*}
  d_k=\begin{cases}
  1,&k\text{ is active},\\
  0,&k\text{ is inactive}.
  \end{cases}
\end{equation*}
The service probability of agent $k$ always lies in $[0,d_k]$.

For any convex function $U:[0,1]\to\R$, its subdifferential relative to the interval $[0,1]$ is
\begin{equation*}
  \partial_I U(t) :=\bigl\{s\in\R: U(z)\ge U(t)+s(z-t)\text{ for every }z\in[0,1] \bigr\}.
\end{equation*}
At an endpoint, this is the appropriate one-sided relative subdifferential.

\begin{definition}[Full-potential feasible tuple]
A tuple
\begin{equation*}
  \bigl((x_i)_{i\in B},(y_j)_{j\in S},(U_k)_{k\in K}\bigr)
\end{equation*}
is \emph{full-potential feasible} if $(x,y)\in\cR$ and the following conditions hold.
\begin{enumerate}[label=\textup{(P\arabic*)}]
  \item
  For every buyer $i$, $U_i:[0,1]\to\R_{\ge0}$ is convex, nondecreasing, and $d_i$-Lipschitz, and
  \begin{equation*}
    x_i(v)\in\partial_I U_i(v)
    \qquad F_i\text{-almost everywhere}.
  \end{equation*}
  \item
  For every seller $j$, $U_j:[0,1]\to\R_{\ge0}$ is convex, nonincreasing, and $d_j$-Lipschitz, and
  \begin{equation*}
    -y_j(c)\in\partial_I U_j(c)
    \qquad G_j\text{-almost everywhere}.
  \end{equation*}
  \item
  The interim transfers implied by the potentials balance in expectation:
  \begin{equation}\label{eq:potential-balance}
    \sum_{i\in B}\E_{v_i}\bigl[v_i x_i(v_i)-U_i(v_i)\bigr] = \sum_{j\in S}\E_{c_j}\bigl[c_j y_j(c_j)+U_j(c_j)\bigr].
  \end{equation}
\end{enumerate}
\end{definition}

For a reduced form $(x,y)$, write
\begin{equation*}
  \cG(x,y) :=\sum_{i\in B}\E_{v_i}[v_i x_i(v_i)] -\sum_{j\in S}\E_{c_j}[c_j y_j(c_j)].
\end{equation*}
By iterated expectation, this is exactly the GFT of any allocation kernel inducing $(x,y)$.

\begin{theorem}\label{thm:potential-program}
For arbitrary mutually independent Borel priors on $[0,1]$,
\begin{equation}\label{eq:potential-program}
  \SB(\mathcal I) =\max\bigl\{ \cG(x,y): (x,y,U)\text{ is full-potential feasible} \bigr\}.
\end{equation}
Every maximizer induces a mechanism that is BIC and interim IR on the full report domain and has deterministic transfers satisfying exact strong budget balance at every report profile.
\end{theorem}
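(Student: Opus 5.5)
The proof splits into two inequalities plus an attainment argument. For ``$\le$'', start from any BIC, interim-IR, strongly budget-balanced mechanism $(q,p,r)$ and define its equilibrium interim utilities $U_i(v):=u_i(v;v)$ and $U_j(c):=u_j(c;c)$ on all of $[0,1]$. Since BIC holds on the full report domain, $U_i(v)=\sup_z\{v x_i(z)-p_i(z)\}$ is a supremum of affine functions with slopes $x_i(z)\in[0,d_i]$. Hence $U_i$ is convex, nondecreasing and $d_i$-Lipschitz, with $x_i(v)\in\partial_I U_i(v)$ at every $v$. Interim IR gives $U_i\ge0$. The seller side is symmetric, with slopes $-y_j(c)$. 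Taking expectations of the profile-wise identity $\sum_i p_i=\sum_j r_j$ and substituting $p_i(v)=v x_i(v)-U_i(v)$ and $r_j(c)=c y_j(c)+U_j(c)$ yields \eqref{eq:potential-balance}. Finally, $(x,y)\in\cR$ because $q$ is a Borel feasible kernel, and $\cG(x,y)=\GFT(q)$.

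For ``$\ge$'', take a full-potential feasible tuple realized by a kernel $q$. First modify $q$ on a null set so that the subgradient conditions hold everywhere, not just almost everywhere. The plan is to choose, off a full-measure set, reports by a Borel selection of a subgradient of $U_i$ that is realized as a reduced form, for example by pooling to the nearest supported type. Then define interim transfers $p_i(v):=v x_i(v)-U_i(v)$ and $r_j(c):=c y_j(c)+U_j(c)$. The subgradient inequality $U_i(v)\ge U_i(z)+x_i(z)(v-z)$ is exactly the BIC inequality $u_i(v;v)\ge u_i(v;z)$, and $U\ge0$ gives interim IR. To obtain profile-wise transfers, set $p_i(\vct t):=p_i(v_i)$ and $r_j(\vct t):=r_j(c_j)$. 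These are ex-ante exactly balanced by \eqref{eq:potential-balance}, and then the balancing transformation of \cref{lem:balancing} produces exact strong budget balance at every profile with the same $q$.

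For attainment, show the feasible set is compact in a suitable topology and $\cG$ is continuous on it. Reduced forms live in a product of $L^\infty$ balls, which is weak-* compact. The set $\cR$ is convex, and I would argue it is weak-* closed via a Border-type characterization, or more simply by taking limits of kernels. The potentials lie in a family that is uniformly Lipschitz and uniformly bounded, since the balance equation together with $U\ge0$ bounds $U$ at one point. Arzelà--Ascoli then gives uniform subsequential limits. Convexity passes to the limit. The subgradient condition passes to weak-* limits because it is equivalent to the integrated inequality $\int (U(z)-U(v)-x(v)(z-v))h(v)\,dF(v)\ge0$ for all $z$ and all $h\ge0$, which is linear in $x$. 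The balance equation \eqref{eq:potential-balance} and $\cG$ are both weak-* continuous, since they pair $x$ with the integrable function $v$. The supremum is therefore a maximum.

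The main obstacle is the closedness of $\cR$ together with the null-set modification. The realization of limiting reduced forms by actual Borel kernels that stay feasible at every profile is delicate with general (non-atomic) priors. I would handle it by working with the compact convex set of kernel-induced ex-ante edge distributions, namely measures on $[0,1]^K\times\cF$ with fixed marginal $\pi$, which is weakly closed, and projecting continuously to reduced forms.
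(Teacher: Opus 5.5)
Your necessity direction and your attainment argument match the paper's: full-domain BIC is characterized by subgradients, balance is taken in expectation, and compactness comes from Banach--Alaoglu and Arzel\`a--Ascoli together with the integrated subgradient inequality. The closedness of $\cR$, which you flag as the main obstacle, is routine. For measures on $[0,1]^K\times\cF$ with first marginal $\pi$, weak convergence is equivalent to weak-* convergence of the densities $q_M$ in $L^\infty(\pi)$, because the densities are bounded by $1$ and continuous functions are dense in $L^1(\pi)$. So your measure route is the paper's Banach--Alaoglu argument in different clothing. The genuine gap is in the sufficiency direction, at the step you treat as a side remark. You must complete the kernel on null reports so that the service induced at \emph{every} report lies in $\partial_I U_k$ of the \emph{given} potential.

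Pooling null reports to supported types cannot do this. Condition (P1) ties $U_i$ to $x_i$ only on a full-measure set, and off that set $\partial_I U_i$ may exclude every service level the kernel produces. Here is a concrete example.
\begin{itemize}
\item Take bilateral trade with $\cF=\{\varnothing,\{e\}\}$, buyer value a point mass at $\tfrac12$, seller cost a point mass at $0$, and the kernel trading with probability $\tfrac12$ at every profile.
\item Set $U_i(v)=\max\{0,v-\tfrac12\}$ and $U_j(c)=\max\{0,\tfrac14-\tfrac c2\}$. The tuple is full-potential feasible: $\tfrac12\in\partial_I U_i(\tfrac12)=[0,1]$, $-\tfrac12\in\partial_I U_j(0)$, and both sides of \eqref{eq:potential-balance} equal $\tfrac14$.
\item At every report $v\in(\tfrac12,1]$, all elements of $\partial_I U_i(v)$ are at least $1$. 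Pooling to the supported type, or mixing rows at supported reports, gives service only $\tfrac12$.
\item With your transfer $p_i(v)=v\widehat x_i(v)-U_i(v)$, the true type $\tfrac12$ profits from reporting $1$. It gets service $\tfrac12$ at price $\tfrac12-U_i(1)=0$, hence utility $\tfrac14>0=U_i(\tfrac12)$, so BIC fails.
\end{itemize}

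The missing idea is the singleton completion in \cref{lem:null-completion}.
\begin{itemize}
\item Choose a Borel selection $a_k\in\partial_I U_k$ that equals $x_k$ on a full-measure good set $H_k$, using a one-sided derivative of $U_k$ elsewhere.
\item For each active agent, fix an edge $e_k$ lying in some feasible matching; then $\{e_k\}\in\cF$ by downward closedness.
\item At profiles where exactly one coordinate $t_k$ lies outside $H_k$, select $\{e_k\}$ with probability $a_k(t_k)$. Use the empty matching when two or more coordinates are bad.
\end{itemize}
This realizes any target value in $[0,d_k]$ exactly at every report. The extra service given to the other endpoint of $e_k$ occurs only on a null event, so no other interim service changes. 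Only after this completion do your transfer formulas and \cref{lem:balancing} go through.
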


We prove the theorem through the following lemmas: the characterization of BIC through convex interim utilities, a completion of null reports, and compactness.

\begin{lemma}[BIC and interim IR]\label{lem:bic-subgradient}
Consider a full-domain interim service and transfer pair.
\begin{enumerate}[label=\textup{(\alph*)}]
  \item For a buyer $i$, define $U_i(v)=v x_i(v)-p_i(v)$.
  The pair is BIC and interim IR if and only if $U_i\ge0$ is convex and nondecreasing and
  \begin{equation*}
    x_i(v)\in\partial_I U_i(v)
    \qquad\text{for every }v\in[0,1].
  \end{equation*}
  \item For a seller, define $U_j(c)=r_j(c)-c y_j(c)$.
  The pair is BIC and interim IR if and only if $U_j\ge0$ is convex and nonincreasing and
  \begin{equation*}
    -y_j(c)\in\partial_I U_j(c)
    \qquad\text{for every }c\in[0,1].
  \end{equation*}
\end{enumerate}
If service is bounded by $d\in\{0,1\}$, the corresponding potential is $d$-Lipschitz.
\end{lemma}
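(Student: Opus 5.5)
This is the standard Rochet-type characterization of interim incentive compatibility; I will prove the buyer case and obtain the seller case by the reflection $c\mapsto -c$, or by repeating the argument with the signs reversed.

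\emph{Necessity.} Assume BIC and interim IR. For each fixed report $z$, the function $v\mapsto v x_i(z)-p_i(z)$ is affine in $v$. BIC says $U_i(v)=\max_{z\in[0,1]}\bigl(v x_i(z)-p_i(z)\bigr)$, and the maximum is attained at $z=v$. Hence $U_i$ is a pointwise supremum of affine functions and is therefore convex. Moreover, the BIC inequality $u_i(v;v)\ge u_i(v;z)$ can be rewritten as
\begin{equation*}
  U_i(v)\ge U_i(z)+x_i(z)(v-z)\qquad\text{for all } v,z\in[0,1],
\end{equation*}
which says exactly that $x_i(z)\in\partial_I U_i(z)$ for every $z$. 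Since $x_i\ge0$ and $x_i(z)$ is a subgradient at every point, $U_i$ is nondecreasing. Interim IR is precisely the condition $U_i\ge0$.

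\emph{Sufficiency.} Conversely, suppose $U_i\ge0$ is convex and $x_i(z)\in\partial_I U_i(z)$ for every $z$. Expanding the subgradient inequality $U_i(v)\ge U_i(z)+x_i(z)(v-z)$ with $U_i(z)=z x_i(z)-p_i(z)$ gives $U_i(v)\ge v x_i(z)-p_i(z)=u_i(v;z)$, which is BIC, and $U_i\ge 0$ is interim IR. The monotonicity of $U_i$ follows automatically from nonnegative subgradients, so that hypothesis is only needed for the ``only if'' direction.

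\emph{Seller case.} For the seller, $U_j(c)=\max_w\bigl(r_j(w)-c\,y_j(w)\bigr)$ is again a supremum of affine functions of $c$, with slope $-y_j(w)$. The same manipulation gives $-y_j(c)\in\partial_I U_j(c)$, and since $y_j\ge0$ this makes $U_j$ nonincreasing. Sufficiency is identical.

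\emph{Lipschitz bound.} If service lies in $[0,d]$, then every point has a subgradient in $[0,d]$ (respectively $[-d,0]$ for sellers). Convexity then yields $|U(v)-U(z)|\le d|v-z|$: apply the subgradient inequality at $z$ and at $v$ to bound $U(v)-U(z)$ from both sides by $x(z)(v-z)$ and $x(v)(v-z)$. When $d=0$ this makes $U$ constant.

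I expect no real obstacle. The only delicate point is the endpoint convention for $\partial_I$. Because the subdifferential is taken relative to $[0,1]$ and the subgradient inequalities are only ever required for $z\in[0,1]$, the endpoint cases need no separate treatment and require no argument beyond the one above.
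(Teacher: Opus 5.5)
Your proposal is correct and follows essentially the same route as the paper. Both rewrite the BIC inequality, via $U_i(z)=z x_i(z)-p_i(z)$, as the subgradient inequality $U_i(v)\ge U_i(z)+x_i(z)(v-z)$, read off monotonicity and the $d$-Lipschitz bound from subgradients lying in $[0,d]$ (or $[-d,0]$ for sellers), and identify interim IR with $U\ge 0$; the only cosmetic difference is that you obtain convexity from $U_i$ being a pointwise maximum of affine functions, whereas the paper notes that a function admitting a subgradient at every point is convex.
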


\begin{proof}
For a buyer $i$, truthful reporting is optimal exactly when, for all $v,z\in[0,1]$, $v x_i(v)-p_i(v)\ge v x_i(z)- p_i(z)$.
Using $U_i(z)=z x_i(z)-p_i(z)$, this is equivalent to $U_i(v)\ge U_i(z)+x_i(z)(v-z)$, which says precisely that $x_i(z)\in\partial_I U_i(z)$ for every $z$.
A function admitting a subgradient at every point is convex.
Since the selected subgradients $x_i(z)$ lie in $[0,d]$, $U_i$ is nondecreasing and $d$-Lipschitz.
Interim IR naturally corresponds to $U_i\ge0$.

For a seller $j$, BIC is $U_j(c)=r_j(c)-c y_j(c)\ge r_j(z)-c y_j(z)=U_j(z)-y_j(z)(c-z)$.
Thus $-y_j(z)\in\partial_I U_j(z)$.
The remaining claims follow in the same way because the selected subgradients $-y_j(z)$ lie in $[-d,0]$.
\end{proof}

Necessity in \cref{thm:potential-program} is now immediate.
Given any feasible second-best mechanism, apply \cref{lem:bic-subgradient} to its truthful interim utilities.
The allocation kernel selects only feasible matchings at every report profile, so its reduced form belongs to $\cR$.
Taking expectations of exact profile-wise balance gives~\eqref{eq:potential-balance}.

For sufficiency, the program supplies the subgradient relation only almost everywhere.
The next lemma completes the allocation on null reports while preserving feasibility at every report profile.

\begin{lemma}[Full-domain completion]\label{lem:null-completion}
Let $(x,y)\in\cR$, and let $U$ satisfy conditions \textup{(P1)} and \textup{(P2)} of full-potential feasibility; no budget condition is required.
There exists a Borel allocation kernel $\widehat q$ that selects only feasible matchings at every report profile and full-domain service functions $\widehat x_i,\widehat y_j$ such that
\begin{align*}
  \widehat x_i(v)&\in\partial_I U_i(v) &&\text{for every buyer report }v,\\
  -\widehat y_j(c)&\in\partial_I U_j(c) &&\text{for every seller report }c,
\end{align*}
and $\widehat q$ induces these service functions at every report.
The transfers $p_i(v):=v\widehat x_i(v)-U_i(v)$ and $r_j(c):=U_j(c)+c\widehat y_j(c)$ make the resulting mechanism BIC and interim IR on the full report domain.
Moreover, $(\widehat x,\widehat y)=(x,y)$ almost everywhere, so the objective and any expected-budget identity involving $U$ are unchanged.
For finite-support priors, the discrete BIC utility functions generated by every $q\in\cP$, with each buyer's lowest type and each seller's highest type normalized to zero interim utility, satisfy these hypotheses after piecewise-linear extension.
In this case, $\widehat q$ can be chosen to agree with $q$ on every support profile, and the transfers defined above agree with the normalized discrete transfers at every support type.
\end{lemma}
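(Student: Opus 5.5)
The plan is to complete the allocation one agent at a time on the null report set. Afterwards we check that the completed kernel stays feasible and induces subgradients everywhere.

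\emph{Step 1: full-domain target service functions.} Since $U_i$ is convex, nondecreasing and $d_i$-Lipschitz, its right derivative $U_i^+(v)$ lies in $\partial_I U_i(v)\cap[0,d_i]$ for every $v$. At $v=1$ we take the left derivative, which is still a relative subgradient. Let $N_i$ be the $F_i$-null Borel set where $x_i(v)\notin\partial_I U_i(v)$. Define
\[
\widehat x_i(v)=x_i(v) \text{ off } N_i, \qquad \widehat x_i(v)=U_i^+(v) \text{ on } N_i.
\]
Do the same for sellers, using $-U_j^{\pm}$ and null sets $N_j$.

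\emph{Step 2: realize these targets by a kernel.} Start from a Borel kernel $q$ that is feasible at every profile and induces $(x,y)$ almost everywhere. We may assume it induces $(x,y)$ at every off-null report; otherwise enlarge $N_k$ by a null set. The key observation is that for an active agent $k$ and a report in $N_k$, any target $s\in[0,d_k]=[0,1]$ can be realized as follows. Pick once and for all a feasible matching $M_k\in\cF$ containing an edge incident to $k$. Whenever $k$'s report lies in $N_k$, use the kernel that outputs $M_k$ with probability $s$ and $\varnothing$ otherwise. This is feasible because $\cF$ is downward closed and contains $\varnothing$.

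This changes allocations for other agents as well, but only on profiles where some coordinate lies in a null set. For an agent $k'\neq k$, such profiles form a null set of $\pi_{-k'}$. Hence, by independence and Fubini, the interim service of $k'$ at any \emph{fixed} report is unchanged.

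There is a subtlety: profiles with two or more null coordinates, and in particular agent $k'$'s own interim service when $k'$ is itself at a null report. I handle this with a priority order. For a profile, let $k$ be the first agent in a fixed order whose report is null, and apply that agent's rule. For a null report of agent $k'$, the other agents' reports are almost surely off their null sets. So almost surely $k'$ is the only null agent, and her interim service equals her target exactly. Inactive agents have $U$ constant with subgradient $0$ and always receive service $0$, so they are consistent. Borel measurability follows because the $N_k$ and $U^+$ are Borel.

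\emph{Step 3: transfers and conclusions.} With $p_i,r_j$ defined as stated, \cref{lem:bic-subgradient} gives BIC and interim IR on the full domain. Equality almost everywhere preserves $\cG$ and the expected-budget identity. For finite supports, the support points have positive mass, so they are never in $N_k$. Extending the discrete utilities piecewise linearly, the service at support types is a relative subgradient (slopes between adjacent support points sandwich it by monotonicity of $q\in\cP$). Also, $\widehat q=q$ on support profiles, and the transfers coincide there by construction.

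The main obstacle I expect is Step 2's interaction of null reports across agents. One must confirm that the priority rule yields the exact target at \emph{every} null report, not just almost every one. This needs the "almost surely all others are non-null" argument to hold for each fixed report, which is where independence is used.
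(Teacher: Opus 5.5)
Your proposal is correct and follows essentially the same route as the paper. It replaces the reduced form on a null set by a one-sided derivative of $U_k$, realizes that target at null reports by a lottery between a feasible set containing an edge at $k$ and $\varnothing$, and uses independence to see that any fixed report, null or not, almost surely faces only non-null opponents. The paper instead uses the singleton $\{e_k\}$ and sends profiles with two or more null coordinates to $\varnothing$ rather than using a priority order, which makes no difference. For the finite-support clause, state explicitly that the piecewise-linear extension has slope zero below a buyer's lowest support point and above a seller's highest support point, since continuing the adjacent slope outward could make $U$ negative and violate \textup{(P1)} or \textup{(P2)}.
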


\begin{proof}
Choose Borel representatives of the reduced-form coordinates.
For each buyer $i$, select a Borel function $a_i:[0,1]\to[0,d_i]$ satisfying
\begin{equation*}
  a_i(v)\in\partial_I U_i(v) \qquad\text{for every }v,
\end{equation*}
and agreeing with $x_i$ almost everywhere.
Such a selection can be obtained by keeping $x_i$ on a full-measure Borel set where the stated subgradient relation holds, and using a one-sided derivative of the convex function $U_i$ elsewhere.
One-sided derivatives of a convex function are monotone and Borel.
For a seller $j$, choose analogously $a_j(c)\in[0,d_j]$ with $-a_j(c)\in\partial_I U_j(c)$ everywhere and $a_j=y_j$ almost everywhere.

Let $q^0$ be a Borel feasible kernel inducing $(x,y)$ almost everywhere.
Choose full-measure Borel sets $H_k\subseteq[0,1]$ such that, on $H_k$, the conditional service induced by $q^0$ equals the selected function $a_k$.
Modify $q^0$ on a joint null set, if necessary, so that it is a valid probability distribution over $\cF$ at every profile.

For each active agent $k$, fix an incident edge $e_k$ that belongs to some feasible matching.
Downward closedness implies that the singleton matching $\{e_k\}$ lies in $\cF$.
Define $\widehat q$ profile by profile as follows.
\begin{enumerate}[label=\textup{(\roman*)}]
  \item If $t_k\in H_k$ for every $k$, use $q^0(\vct t)$.
  \item If exactly one coordinate, say $t_k$, lies outside $H_k$, select $\{e_k\}$ with probability $a_k(t_k)$ and select the empty matching otherwise.
For an inactive agent $a_k\equiv0$, so this rule selects the empty matching.
  \item If at least two coordinates lie outside their good sets, select the empty matching.
\end{enumerate}
The rule is Borel and selects only feasible matchings at every report profile.

Fix an arbitrary report $t_k$ of agent $k$.
Under the product prior, every other coordinate belongs to its good set almost surely.
If $t_k\in H_k$, case~(i) applies almost surely and the induced service is $a_k(t_k)$.
If $t_k\notin H_k$, case~(ii) applies almost surely and again gives service $a_k(t_k)$.
Any extra service received by the opposite endpoint of $e_k$ occurs only when $t_k\notin H_k$, a probability-zero event, and therefore does not alter that endpoint's interim service.
Hence $\widehat q$ induces the selected service functions at every report.
Since all modifications occur on null cylinders, the original reduced form, objective, and any expected-budget identity involving $U$ are preserved almost everywhere.
Define the completed interim transfers by
\begin{equation*}
  p_i(v):=v\widehat x_i(v)-U_i(v), \qquad
  r_j(c):=U_j(c)+c\widehat y_j(c).
\end{equation*}
By \cref{lem:bic-subgradient}, these transfers make the completed allocation BIC and interim IR on the full report domain.

It remains to verify the finite-support claim.
Fix a buyer with support $v_1<\cdots<v_L$ and monotone support services $x_1\le\cdots\le x_L$.
Set
\begin{equation*}
  U_1=0,\qquad
  U_\ell=\sum_{h<\ell}(v_{h+1}-v_h)x_h\quad(\ell\ge2),
\end{equation*}
and extend $U$ to $[0,1]$ with slope zero below $v_1$, slope $x_\ell$ on $[v_\ell,v_{\ell+1}]$, and slope $x_L$ above $v_L$.
The extension is nonnegative, convex, and nondecreasing, and each $x_\ell$ is a subgradient of $U$ at $v_\ell$ relative to $[0,1]$.
For a seller with support $c_1<\cdots<c_H$ and services $y_1\ge\cdots\ge y_H$, set
\begin{equation*}
  U_H=0,\qquad
  U_h=\sum_{\ell=h}^{H-1}(c_{\ell+1}-c_\ell)y_{\ell+1}\quad(h<H),
\end{equation*}
and extend $U$ with slope $-y_1$ below $c_1$, slope $-y_{h+1}$ on $[c_h,c_{h+1}]$, and slope zero above $c_H$.
This extension is nonnegative, convex, and nonincreasing, and each $-y_h$ is a subgradient of $U$ at $c_h$ relative to $[0,1]$.
Thus conditions \textup{(P1)} and \textup{(P2)} hold.

Extend the finite allocation table $q$ to a Borel kernel by selecting the empty matching outside the support product, and take each agent's good set to be her support.
The construction above then uses $q$ whenever all reports lie in their supports, so $\widehat q$ agrees with $q$ on every support profile and induces the original support services.
The completed transfer formulas above therefore agree with the normalized discrete transfers at every support type.
\end{proof}

\begin{lemma}[Full-potential implementation]\label{lem:potential-implementation}
Every full-potential feasible tuple induces a full-domain BIC and interim-IR mechanism with the same GFT and exact profile-wise strong budget balance.
\end{lemma}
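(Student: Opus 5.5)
We prove \cref{lem:potential-implementation} in three steps: complete the tuple to the full report domain, write down the interim transfers dictated by the potentials, and then convert them into profile-wise balanced transfers.

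\emph{Step 1 (completion).} Let $(x,y,U)$ be full-potential feasible. Conditions (P1) and (P2) are exactly the hypotheses of \cref{lem:null-completion}, so we obtain a Borel kernel $\widehat q$ that selects only feasible matchings at every report profile. Its interim services $\widehat x_i,\widehat y_j$ satisfy the subgradient relations at every report. Moreover, $(\widehat x,\widehat y)=(x,y)$ almost everywhere. Hence $\GFT(\widehat q)=\cG(x,y)$ by iterated expectation.

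\emph{Step 2 (interim transfers and expected balance).} We set the interim transfers
\begin{equation*}
  p_i(v):=v\widehat x_i(v)-U_i(v), \qquad r_j(c):=U_j(c)+c\widehat y_j(c).
\end{equation*}
By \cref{lem:bic-subgradient}, the interim utilities are then $U_i\ge0$ and $U_j\ge0$. Since these are convex and the services are subgradients everywhere, the mechanism is BIC and interim IR on $[0,1]$. To realize these interim transfers by profile-level ones, we take each agent's payment to depend only on her own report: $p_i(\vct t):=p_i(v_i)$ and $r_j(\vct t):=r_j(c_j)$. These are bounded because the potentials are Lipschitz on $[0,1]$. Since the modifications in Step 1 occur only on null sets, the almost-everywhere agreement gives
\begin{equation*}
  \sum_i\E[p_i(v_i)]-\sum_j\E[r_j(c_j)]=\sum_i\E[v_ix_i-U_i]-\sum_j\E[c_jy_j+U_j],
\end{equation*}
and this equals $0$ by \eqref{eq:potential-balance}. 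So the mechanism is ex-ante (weakly, in fact exactly) budget balanced.

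\emph{Step 3 (profile-wise balancing).} We apply \cref{lem:balancing} to the allocation $\widehat q$. It yields transfers $\widehat p,\widehat r$ with exact strong budget balance at every profile, while preserving BIC and interim IR. The allocation is unchanged, so the GFT is unchanged.

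The transfers should be deterministic conditional on reports. The standard AGV-type construction underlying \cref{lem:balancing}, with $\widehat p_i(\vct t)=p_i(v_i)+\text{(terms depending on others' reports)}$, produces such transfers. Because the construction only adds expected-zero cross terms, interim transfers at \emph{every} report, including null reports, are preserved.

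We expect the main obstacle to be precisely this last point. \Cref{lem:balancing} is stated for the full report domain, but we must make sure the balancing adjustments do not alter interim utilities at off-support reports. If the appendix proof of \cref{lem:balancing} only guarantees this almost everywhere, we would instead verify the construction directly. In that construction each agent's adjustment is a function of the others' reports whose expectation under the product prior is a constant independent of her own report. Such an adjustment leaves interim payments unchanged at all reports, not merely almost every report, and the construction keeps the profile-wise sum equal to zero.
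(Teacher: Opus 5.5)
Your proposal is correct and follows the paper's proof essentially step for step: you complete the tuple via \cref{lem:null-completion}, set $p_i=v\widehat x_i-U_i$ and $r_j=U_j+c\widehat y_j$, obtain exact ex-ante balance of the own-report transfers from (P3) and almost-everywhere agreement, and then invoke \cref{lem:balancing}. Your worry about null reports is already settled there, because that lemma is stated for the full report domain and its proof uses independence to preserve interim net payments at every report. The paper also treats the degenerate case $|K|\le1$ separately, where the $1/(|K|-1)$ correction is undefined, but this matters only if you re-derive the construction directly rather than citing the lemma.
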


\begin{proof}
If $|K|\le1$, no feasible matching contains an edge, so every reduced-form coordinate is zero.
The zero Lipschitz bound makes every potential constant, and the balance condition forces that nonnegative constant to be zero; hence the zero-allocation, zero-transfer mechanism proves the claim.
Assume henceforth that $|K|\ge2$.
Apply \cref{lem:null-completion} and write $a_i=\widehat x_i$ for buyers and $a_j=\widehat y_j$ for sellers.
Define target interim transfers by
\begin{equation*}
  p_i(v)=v a_i(v)-U_i(v), \qquad r_j(c)=U_j(c)+c a_j(c).
\end{equation*}
The full-domain subgradient inequalities and nonnegativity of the potentials imply BIC and interim IR by \cref{lem:bic-subgradient}.
Because the completion changes nothing almost everywhere, condition~\eqref{eq:potential-balance} yields
\begin{equation*}
  \sum_{i\in B}\E_v[p_i(v_i)] -\sum_{j\in S}\E_c[r_j(c_j)]=0.
\end{equation*}
Taking the preliminary transfers to depend only on each agent's own report realizes these interim transfers and gives an ex-ante exactly balanced mechanism.
Apply \cref{lem:balancing} to its interim net-payment functions $m_i=p_i$ and $m_j=-r_j$.
The resulting deterministic transfers preserve every interim payment and utility and balance exactly at every report profile.
\end{proof}

\subsection{Compactness and attainment}

\begin{lemma}[Compactness and continuity]\label{lem:potential-compactness}
The feasible set in~\eqref{eq:potential-program} is compact when reduced forms carry the product weak-star topology and potentials carry the product uniform topology.
The objective $\cG$ is continuous in this topology.
\end{lemma}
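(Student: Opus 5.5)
We prove compactness by a Tychonoff/closed-subset argument and continuity from the linear structure of $\cG$.

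\textbf{Compactness of the ambient space.} Each reduced-form coordinate $x_i$ (resp.\ $y_j$) lies in the order interval $\{0\le x\le d_k\}$ of $L^\infty(F_i)$ (resp.\ $L^\infty(G_j)$). This interval is a closed ball of a dual space $L^\infty=(L^1)^*$ and is therefore weak-star compact by Banach--Alaoglu. Each potential $U_k$ is $d_k$-Lipschitz on $[0,1]$, and the balance identity~\eqref{eq:potential-balance} bounds the values: the right side is $\ge\E[U_j]$ and the left side is $\le$ const, and a Lipschitz nonnegative function with bounded integral is uniformly bounded. Arzel\`a--Ascoli then makes the set of admissible potentials relatively compact in the uniform topology. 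As an alternative, we can bound $U_k(0)$ or $U_k(1)$ directly using the integral and the Lipschitz constant. By Tychonoff, the product is compact, so it suffices to show the feasible set is closed. Since the feasible set is bounded, it lies in a metrizable compact set (weak-star on bounded sets of $L^\infty$ of a separable $L^1$), so sequential closedness suffices.

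\textbf{Closedness, checked condition by condition.} Take a feasible sequence $(x^n,y^n,U^n)\to(x,y,U)$.
\begin{enumerate}
\item[(a)] \emph{$\cR$ is closed.} We expect this to be the main obstacle. The approach is to identify $\cR$ as the image of the set of feasible kernels under a linear map. Kernels $q$ with $q_M\in L^\infty(\pi)$, $q\ge0$ and $\sum_M q_M=1$ form a weak-star compact convex set. The map $q\mapsto(x,y)$, obtained by conditional expectation onto each coordinate, is weak-star continuous, because pairing with $h\in L^1(F_i)$ equals pairing $q_{ij}$ with $h(v_i)\in L^1(\pi)$. The image of a compact set under a continuous map is compact. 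Measurability subtleties are handled by the completion of \cref{lem:null-completion}, which lets us take Borel representatives.
\item[(b)] \emph{Convexity, monotonicity, the Lipschitz bound and nonnegativity} are preserved under uniform limits.
\item[(c)] \emph{The subgradient relation} $x_i\in\partial_I U_i$ a.e.\ is equivalent to
\[
U_i(z)\ge U_i(v)+x_i(v)(z-v)
\]
for all $z$ in a countable dense set, a.e.\ in $v$. Equivalently, for all nonnegative $h\in L^1(F_i)$,
\[
\int \bigl(U_i(z)-U_i(v)-x_i(v)(z-v)\bigr)h(v)\,dF_i\ge0 .
\]
This integral is jointly continuous under uniform convergence of $U$ and weak-star convergence of $x$, because $(z-v)h(v)\in L^1$. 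The inequality therefore passes to the limit. The seller condition is handled identically.
\item[(d)] \emph{The balance identity} is a sum of terms $\E[v_ix_i]$, $\E[c_jy_j]$ and $\E[U_k]$, each continuous by the same pairing argument.
\end{enumerate}

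\textbf{Continuity of $\cG$.} $\cG(x,y)$ is the pairing of $x_i$ with $v\in L^1(F_i)$ and of $y_j$ with $c\in L^1(G_j)$, so it is weak-star continuous. Hence $\cG$ is continuous.
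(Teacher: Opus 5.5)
Your proposal is correct and follows essentially the same route as the paper. Both arguments use Banach--Alaoglu for the kernel set and its image $\cR$ under the weak-star continuous conditional-expectation map, Arzel\`a--Ascoli for the potentials after bounding $\E[U_k]$ through the balance identity, metrizability to reduce to sequential closedness, and integration of the subgradient inequality against nonnegative $h\in L^1$ at rational $z$.

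One detail to tidy: your displayed bound only controls the seller potentials. For buyers, read the identity the other way, $\sum_i\E[U_i]\le\sum_i\E[v_ix_i]\le|B|$. Alternatively, as the paper does, rearrange it to $\cG(x,y)=\sum_{k\in K}\E[U_k(t_k)]\le|B|$, which bounds every potential at once.
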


\begin{proof}
Represent an allocation kernel by the finite vector $(q_M)_{M\in\cF}$ in $L^\infty(\pi)^{|\cF|}$.
The constraints $q_M\ge0$ and $\sum_Mq_M=1$ define a weak-star closed and norm-bounded set.
By Banach--Alaoglu, the corresponding set of kernel equivalence classes is weak-star compact.
Every such class has a Borel representative that is a valid probability distribution over $\cF$ at every report profile after modification on a $\pi$-null set, so it belongs to the class of allocation kernels used in the definition of $\cR$.
Conditional expectation is weak-star continuous: for example, for every $h\in L^1(F_i)$,
\begin{equation*}
  \int h(v)x_i(v)\,\dd F_i(v) =\int h(v_i)q_i(\vct t)\,\dd\pi(\vct t),
\end{equation*}
and the right-hand side is a weak-star continuous functional of the kernel.
Therefore the reduced-form set $\cR$ is weak-star compact.

For any full-potential feasible tuple, rearranging~\eqref{eq:potential-balance} gives
\begin{equation*}
  \cG(x,y)=\sum_{k\in K}\E[U_k(t_k)].
\end{equation*}
The right-hand side is nonnegative.
Since buyer values and service probabilities are at most one and seller costs are nonnegative,
\begin{equation*}
  0\le\cG(x,y)\le |B|.
\end{equation*}
Consequently, $\E[U_k(t_k)]\le |B|$ for every $k$.
Nonnegativity and the $1$-Lipschitz bound imply, for every $t\in[0,1]$,
\begin{equation*}
  U_k(t)\le \E[U_k(t_k)]+1\le |B|+1.
\end{equation*}
Thus all feasible potentials are uniformly bounded and equicontinuous.
The Arzel\`a--Ascoli theorem makes their closure compact in the uniform topology.

It remains to verify that the subgradient coupling is closed.
Because the underlying spaces are Borel subsets of finite products of $[0,1]$, the relevant $L^1$ spaces are separable; hence the weak-star topology is metrizable on the norm-bounded subsets under consideration.
It therefore suffices to verify sequential closedness.
Consider, for a buyer, a sequence $x^n\stackrel{*}{\rightharpoonup}x$ and $U^n\to U$ uniformly with $x^n(t)\in\partial_I U^n(t)$ almost everywhere.
Fix a rational $z\in[0,1]$ and a nonnegative $h\in L^1(F_i)$.
Integrating the subgradient inequality gives
\begin{equation*}
  \int h(t)\bigl(U^n(z)-U^n(t)-x^n(t)(z-t)\bigr)\,\dd F_i(t)\ge0.
\end{equation*}
Uniform convergence handles the potential terms, and weak-star convergence handles the term containing $x^n$.
Passing to the limit shows the same integral inequality for $(x,U)$.
Since this holds for every nonnegative $h$, the integrand is nonnegative almost everywhere.
Intersecting the full-measure sets over rational $z$ and using continuity of $U$ extends the inequality to every $z\in[0,1]$.
Hence $x(t)\in\partial_I U(t)$ almost everywhere.
The seller case is identical after replacing $x$ by $-y$.

Convexity, monotonicity, Lipschitzness, and nonnegativity are preserved by uniform limits.
Equation~\eqref{eq:potential-balance} and the objective are continuous because multiplication by the bounded type coordinate is an $L^1$ test against weak-star convergence, while potential expectations are continuous under uniform convergence.
The full-potential feasible set is therefore closed inside a compact product and is itself compact.
\end{proof}

\begin{proof}[Proof of \cref{thm:potential-program}]
Necessity follows from \cref{lem:bic-subgradient} and expected profile-wise balance.
Sufficiency, including full-domain implementation and exact profile-wise balancing, is \cref{lem:potential-implementation}.
By \cref{lem:potential-compactness}, the program has a maximizer.
Thus its maximum equals the second-best value and is attained by a mechanism with all the asserted properties.
\end{proof}

\section{Proof of the budget-balance reduction}
\label{app:profile-wise-balancing}

This appendix proves \cref{lem:balancing} under the full-report-domain and signed-transfer conventions of the main text.

\begin{proof}[Proof of \cref{lem:balancing}]
If $q$ admits transfers satisfying exact strong budget balance, the same transfers satisfy ex-ante weak budget balance by taking expectations.
Conversely, suppose that $q$ admits transfers $p,r$ for which $(q,p,r)$ is BIC, interim IR, and ex-ante weakly budget balanced.
If $|K|\le1$, every feasible allocation is empty, so zero transfers give exact strong budget balance without changing $q$.
Assume henceforth that $|K|\ge2$.
For a buyer $i$, let $m_i(v)=p_i(v)$ be her interim net payment to the mechanism.
For a seller $j$, let $m_j(c)=-r_j(c)$, so a positive receipt is a negative net payment.
Thus ex-ante weak budget balance is equivalent to
\begin{equation*}
  \sum_{k\in K}\E[m_k(t_k)]\ge0.
\end{equation*}
Define
\begin{equation*}
  \sigma:=\sum_{k\in K}\E[m_k(t_k)]\ge0.
\end{equation*}
Choose an arbitrary agent $k_0$ and replace $m_{k_0}(t_{k_0})$ by $m_{k_0}(t_{k_0})-\sigma$ for every report.
This is a report-independent rebate.
It preserves BIC, weakly improves interim IR for agent $k_0$, leaves all other interim utilities unchanged, and normalizes the expected net payments so that
\begin{equation}\label{eq:mean-net-zero}
  \sum_{k\in K}\mu_k=0, \qquad \mu_k:=\E[m_k(t_k)].
\end{equation}

We now convert the interim net-payment functions $m_k(t_k)$ into net transfers $\tau_k(\vct t)$ defined at every report profile.
The construction preserves each agent's interim expected net payment while making the realized net transfers sum to zero at every report profile.
We then set $p'_i(\vct t)=\tau_i(\vct t)$ for buyers and $r'_j(\vct t)=-\tau_j(\vct t)$ for sellers.

For each report profile $\vct t\in[0,1]^K$, define
\begin{equation}\label{eq:balancing-transfer}
  \tau_k(\vct t) :=m_k(t_k)-\frac{1}{|K|-1} \sum_{\ell\ne k}\bigl(m_\ell(t_\ell)-\mu_\ell\bigr).
\end{equation}
We verify separately that this transformation preserves every interim payment and balances at every report profile.

Fix an agent $k$ and any report $t_k\in[0,1]$, including a report that has zero probability under the prior.
Because types are mutually independent, the product distribution used in the interim expectation over $\vct t_{-k}$ does not depend on this fixed report.
Hence
\begin{equation*}
  \E_{\vct t_{-k}} \bigl[m_\ell(t_\ell)-\mu_\ell\bigr]=0 \qquad\text{for every }\ell\ne k.
\end{equation*}
Taking expectations over $\vct t_{-k}$ in~\eqref{eq:balancing-transfer},
\begin{equation}\label{eq:interim-preservation}
  \E_{\vct t_{-k}}[\tau_k(t_k,\vct t_{-k})]=m_k(t_k).
\end{equation}
Hence, relative to the post-rebate interim net-payment functions, the balancing transformation preserves the full-domain interim net payment of every agent.
Since the allocation rule is unchanged, it preserves the post-rebate interim utilities and all BIC and interim-IR inequalities.

Next sum~\eqref{eq:balancing-transfer} over $k$.
Each centered term $m_\ell(t_\ell)-\mu_\ell$ appears in the inner sum for exactly $|K|-1$ agents.
Therefore, using~\eqref{eq:mean-net-zero},
\begin{align*}
  \sum_{k\in K}\tau_k(\vct t) &=\sum_{k\in K}m_k(t_k) -\sum_{\ell\in K}\bigl(m_\ell(t_\ell)-\mu_\ell\bigr)\\
  &=\sum_{\ell\in K}\mu_\ell=0
\end{align*}
for every report profile $\vct t$.
Finally, define the new buyer payment by $p_i'(\vct t)=\tau_i(\vct t)$ and the new seller receipt by $r_j'(\vct t)=-\tau_j(\vct t)$.
These transfers are deterministic conditional on reports, and the last display is exactly
\begin{equation*}
  \sum_{i\in B}p_i'(\vct t)=\sum_{j\in S}r_j'(\vct t) \qquad\text{for every }\vct t. \qedhere
\end{equation*}
\end{proof}

\begin{remark}[Where the assumptions enter]
Independence is used in~\eqref{eq:interim-preservation}; with correlated types, the centered payment of another agent need not have zero interim expectation after one agent's report is fixed.
Allowing signed transfers is important for this particular linear correction: the term in~\eqref{eq:balancing-transfer} need not respect one-sided payment constraints or ex-post individual rationality.
\end{remark}

\section{Proof of the finite-support reduction}
\label{app:finite-support-reduction}

This appendix gives the details behind \cref{prop:finite-support-reduction}.
We first bound the first-best loss from one-sided rounding and then lift a mechanism for the rounded instance to the original Borel instance.

Let
\begin{equation*}
  R_{\max}:=\max_{M\in\cF}|M|
\end{equation*}
be the maximum number of trades in a feasible matching.
For $n\ge1$, set
\begin{equation*}
  \delta_n:=2^{-n}, \qquad \Gamma_n:=\{0,\delta_n,2\delta_n,\ldots,1\}.
\end{equation*}
For $v,c\in[0,1]$, define
\begin{equation*}
  \underline v_n :=\delta_n\left\lfloor\frac{v}{\delta_n}\right\rfloor, \qquad \overline c_n :=\min\left\{1, \delta_n\left\lceil\frac{c}{\delta_n}\right\rceil \right\}.
\end{equation*}
Apply these maps coordinatewise, and let $\mathcal I_n$ be the instance with the same sets $B,S,E$ and the same feasible family $\cF$, but with buyer and seller types
\begin{equation*}
  \underline v_{i,n} \quad\text{and}\quad \overline c_{j,n},
\end{equation*}
respectively.
Since the rounding maps act coordinatewise, the rounded priors remain mutually independent and have finite support.
Write
\begin{equation*}
  \FB_n:=\FB(\mathcal I_n), \qquad \SB_n:=\SB(\mathcal I_n).
\end{equation*}
\begin{lemma}[First-best loss under one-sided rounding]
\label{lem:finite-reduction-first-best}
For every $n\ge1$,
\begin{equation}\label{eq:finite-reduction-first-best}
  \FB(\mathcal I)-2R_{\max}\delta_n \le \FB_n \le \FB(\mathcal I).
\end{equation}
In particular, $\FB_n\to\FB(\mathcal I)$.
\end{lemma}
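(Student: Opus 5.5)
The argument is a pointwise comparison at every type profile $\vct t$, followed by taking expectations. Write $W(M;\vct t):=\sum_{(i,j)\in M}(v_i-c_j)$, and let $\underline{\vct t}_n$ be the rounded profile $(\underline{\vct v}_n,\overline{\vct c}_n)$. Since $\mathcal I_n$ is exactly the pushforward of $\mathcal I$ under the coordinatewise rounding map,
\begin{equation*}
  \FB_n=\E_{\vct t\sim\pi}\Bigl[\max_{M\in\cF}W(M;\underline{\vct t}_n)\Bigr].
\end{equation*}
It therefore suffices to compare $\max_M W(M;\underline{\vct t}_n)$ with $\max_M W(M;\vct t)$ at each fixed $\vct t$.

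For the upper bound, I use the one-sided orientation of the rounding. We have $\underline v_{i,n}\le v_i$, and $\overline c_{j,n}\ge c_j$; the latter still holds after the cap at $1$, because $c_j\le1$. So every edge surplus weakly decreases under rounding, and $W(M;\underline{\vct t}_n)\le W(M;\vct t)$ for every $M\in\cF$. Maximizing over $\cF$ and taking expectations gives $\FB_n\le\FB(\mathcal I)$.

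For the lower bound, each rounding moves a coordinate by less than $\delta_n$:
\begin{equation*}
  v_i-\underline v_{i,n}<\delta_n,\qquad \overline c_{j,n}-c_j\le\delta_n.
\end{equation*}
Hence every edge loses at most $2\delta_n$ of surplus. Take a first-best matching $M^\circ$ for $\vct t$; it remains feasible in $\mathcal I_n$ because $\cF$ is unchanged, and it has at most $R_{\max}$ edges. This gives
\begin{equation*}
  \max_{M}W(M;\underline{\vct t}_n)\ge W(M^\circ;\underline{\vct t}_n)\ge W(M^\circ;\vct t)-2R_{\max}\delta_n .
\end{equation*}
Taking expectations yields $\FB_n\ge\FB(\mathcal I)-2R_{\max}\delta_n$. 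Combining the two bounds with $\delta_n=2^{-n}\to0$ gives $\FB_n\to\FB(\mathcal I)$.

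The only point needing care is measurability: $\max_M W(M;\vct t)$ is a maximum over the finite family $\cF$ of continuous functions of $\vct t$, composed with Borel rounding maps, so the expectations are well defined and bounded by $R_{\max}$. The step I would check most carefully is that the cap at $1$ in $\overline c_n$ does not break the inequality $\overline c_{j,n}\ge c_j$. It does not, since $c_j\le 1$.
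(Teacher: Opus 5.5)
Your proposal is correct and follows essentially the same argument as the paper. The upper bound comes from the pointwise one-sided inequality $0\le (v_i-c_j)-(\underline v_{i,n}-\overline c_{j,n})\le 2\delta_n$, and the lower bound from evaluating the original first-best matching (at most $R_{\max}$ edges) in the rounded instance. Your added checks on the cap at $1$ and on measurability are harmless extra care that the paper leaves implicit.
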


\begin{proof}
For every edge $(i,j)$ and every type profile,
\begin{equation*}
  0 \le (v_i-c_j) -(\underline v_{i,n}-\overline c_{j,n}) \le 2\delta_n.
\end{equation*}
Hence the rounded welfare of every feasible matching is at most its welfare under the original types.
Maximizing over $\cF$ and then taking expectation gives $\FB_n\le\FB(\mathcal I)$.

Conversely, evaluate the rounded objective at the tie-broken first-best matching $M^*(\vct t)$ of the original profile.
Since $|M^*(\vct t)|\le R_{\max}$, rounding loses at most $2R_{\max}\delta_n$ on this matching.
Therefore, at every type profile,
\begin{equation*}
  \max_{M\in\cF} \sum_{(i,j)\in M} (\underline v_{i,n}-\overline c_{j,n}) \ge \sum_{(i,j)\in M^*(\vct t)}(v_i-c_j) -2R_{\max}\delta_n.
\end{equation*}
Taking expectation proves the lower bound in~\eqref{eq:finite-reduction-first-best}.
\end{proof}

\begin{lemma}[Lifting a rounded mechanism]
\label{lem:finite-reduction-lift}
Let $(q^n,p^n,r^n)$ be a full-domain BIC and interim-IR mechanism for $\mathcal I_n$ whose transfers balance exactly at every report profile.
Then $\mathcal I$ admits a full-domain BIC and interim-IR mechanism with exact profile-wise strong budget balance and GFT at least the GFT of $q^n$ in $\mathcal I_n$.
\end{lemma}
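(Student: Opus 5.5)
Given $(q^n,p^n,r^n)$ for $\mathcal I_n$, we will build a mechanism for $\mathcal I$ by \emph{pre-composing with the rounding maps}. When agents in $\mathcal I$ report $(\vct v,\vct c)$, the mechanism computes $(\underline{\vct v}_n,\overline{\vct c}_n)$ and runs the rounded mechanism on these rounded reports. The allocation is $q(\vct v,\vct c):=q^n(\underline{\vct v}_n,\overline{\vct c}_n)$, which is a Borel kernel into $\Delta(\cF)$ and therefore feasible at every profile.

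For transfers we will not copy $p^n,r^n$ directly, since a buyer with value $v>\underline v_n$ would then get utility $v\,x(\cdot)-p$ computed at the wrong type. Instead we first pass to interim quantities. Because the rounding is coordinatewise and types are independent, the law of $\vct t_{-k}$ after rounding is exactly the rounded prior. Hence buyer $i$'s interim service in the lifted mechanism at report $z$ equals $x^n_i(\underline z_n)$, and seller $j$'s equals $y^n_j(\overline z_n)$.

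We then set interim payments $p_i(z):=p^n_i(\underline z_n)$ and $r_j(z):=r^n_j(\overline z_n)$, realized as own-report-only transfers.

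\textbf{BIC.} A buyer with true value $v$ who reports $z$ obtains $v\,x^n_i(\underline z_n)-p^n_i(\underline z_n)$. Write $w:=\underline z_n$ and split this as
\[
\bigl[\underline v_n x^n_i(w)-p^n_i(w)\bigr]+(v-\underline v_n)\,x^n_i(w).
\]
The bracket is maximized at $w=\underline v_n$ by full-domain BIC of the rounded mechanism. The second term is nondecreasing in $x^n_i(w)$, since the interim service of a BIC mechanism is monotone, so it is also maximized at $w=\underline v_n$. Hence truthful reporting is optimal. Interim IR holds because $v x-p\ge \underline v_n x-p\ge0$. The seller case is symmetric: use $\overline c_n\ge c$ and the fact that $y^n_j$ is nonincreasing.

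This monotonicity step is the one point needing care. It is the reason the rounding is one-sided, with buyers rounded down and sellers rounded up.

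\textbf{Budget.} By independence and the identical pushforward, each agent's expected transfer equals that in $\mathcal I_n$. Therefore $\sum_i\E[p_i]-\sum_j\E[r_j]=\E[\sum_i p^n_i-\sum_j r^n_j]=0$, so the lifted mechanism is ex-ante balanced. Applying \cref{lem:balancing} (its full-domain version) then yields exact profile-wise strong budget balance while preserving BIC and interim IR.

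\textbf{GFT.} Since $v_i-c_j\ge \underline v_{i,n}-\overline c_{j,n}$ pointwise and edge marginals are nonnegative, we get
\[
\GFT(q)=\E\Bigl[\sum_{e}(v_i-c_j)\,q^n_{ij}(\underline{\vct v}_n,\overline{\vct c}_n)\Bigr]\ge \E\Bigl[\sum_e(\underline v_{i,n}-\overline c_{j,n})\,q^n_{ij}\Bigr],
\]
and the right-hand side is the GFT of $q^n$ in $\mathcal I_n$.
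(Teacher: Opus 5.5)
\textbf{Gap in the BIC step.} Your BIC argument does not go through. In the split $\bigl[\underline v_n x^n_i(w)-p^n_i(w)\bigr]+(v-\underline v_n)x^n_i(w)$, the second term is nondecreasing in $w$ because $x^n_i$ is. So it is maximized at the \emph{largest} report, not at $w=\underline v_n$, and monotonicity only rules out downward deviations.

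Upward deviations into the next rounding cell can be strictly profitable. The rounded mechanism's BIC constraints permit $p^n_i(a_{k+1})-p^n_i(a_k)=a_k\bigl(x^n_i(a_{k+1})-x^n_i(a_k)\bigr)$, i.e.\ type $a_k$ is indifferent. But deterring a true value $v\in(a_k,a_{k+1})$ requires the increment to be at least $v\bigl(x^n_i(a_{k+1})-x^n_i(a_k)\bigr)$.

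Here is a concrete counterexample. Take $n=1$, one buyer with $v\sim\mathrm{Unif}[0,1]$, and one seller with cost $0$. Consider the rounded mechanism that trades at price $0$ whenever the buyer reports a positive value and the seller reports $0$, with all transfers zero. It is full-domain BIC, interim IR and exactly balanced in $\mathcal I_1$. Your lift gives the item for free to every report $z\ge 1/2$, so each $v\in(0,1/2)$ gains by reporting $1/2$. Sellers have the symmetric problem with downward deviations.

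\textbf{The missing idea.} You need to discard $p^n,r^n$ and use the minimal-rent transfers for the grid services:
$P^0_{i,k}=a_kx_{i,k}-\sum_{h<k}\delta_nx_{i,h}$ and $R^0_{j,k}=a_ky_{j,k}+\sum_{h>k}\delta_ny_{j,h}$.
\begin{itemize}
\item \emph{BIC and IR.} These coincide with $v\widetilde x_i(v)-\int_0^v\widetilde x_i$ and $c\widetilde y_j(c)+\int_c^1\widetilde y_j$ for the lifted step services. Hence they are constant on cells, and the lifted mechanism is full-domain BIC and interim IR by \cref{lem:bic-subgradient}. Now $P^0_{i,k+1}-P^0_{i,k}=a_{k+1}(x_{i,k+1}-x_{i,k})$, so type $a_{k+1}$ rather than $a_k$ is the indifferent one. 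This is exactly what protects the interior of cell $k$.
\item \emph{Budget.} Your budget identity must become an inequality. Adjacent BIC constraints plus IR at the lowest buyer and highest seller grid types give $a_kx_{i,k}-p^n_i(a_k)\ge\sum_{h<k}\delta_nx_{i,h}$ and $r^n_j(a_k)-a_ky_{j,k}\ge\sum_{h>k}\delta_ny_{j,h}$. Hence $P^0\ge p^n$ and $R^0\le r^n$, so the lifted mechanism is ex-ante \emph{weakly} balanced. Then \cref{lem:balancing}, including its constant rebate, gives exact profile-wise balance.
\end{itemize}

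Your allocation lift, interim-service computation, and GFT comparison are fine as written.
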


\begin{proof}
Fix $n$, write $N:=2^n$, and let
\begin{equation*}
  a_k:=k\delta_n, \qquad k=0,1,\ldots,N.
\end{equation*}
Let $x_i^n$ and $y_j^n$ be the interim service rules induced by $q^n$ under the rounded product prior, and abbreviate
\begin{equation*}
  x_{i,k}:=x_i^n(a_k), \qquad y_{j,k}:=y_j^n(a_k).
\end{equation*}
BIC implies that $x_{i,0}\le\cdots\le x_{i,N}$ for every buyer and $y_{j,0}\ge\cdots\ge y_{j,N}$ for every seller.

We first identify normalized grid transfers that weakly increase the expected difference between buyer payments and seller receipts.
Let
\begin{equation*}
  P_i^n(a_k) \quad\text{and}\quad R_j^n(a_k)
\end{equation*}
be the interim buyer payment and seller receipt in the rounded mechanism.
Define the corresponding truthful utilities
\begin{equation*}
  U_{i,k}^B :=a_kx_{i,k}-P_i^n(a_k), \qquad U_{j,k}^S :=R_j^n(a_k)-a_ky_{j,k}.
\end{equation*}
For a buyer, the adjacent BIC inequality in which type $a_{k+1}$ does not imitate $a_k$ gives
\begin{equation*}
  U_{i,k+1}^B \ge U_{i,k}^B+\delta_n x_{i,k}.
\end{equation*}
Together with $U_{i,0}^B\ge0$, this yields
\begin{equation}\label{eq:finite-reduction-buyer-rent}
  U_{i,k}^B \ge \sum_{h=0}^{k-1}\delta_n x_{i,h}.
\end{equation}
For a seller, the adjacent BIC inequality in which type $a_k$ does not imitate $a_{k+1}$ gives
\begin{equation*}
  U_{j,k}^S \ge U_{j,k+1}^S+\delta_n y_{j,k+1}.
\end{equation*}
Together with $U_{j,N}^S\ge0$, this yields
\begin{equation}\label{eq:finite-reduction-seller-rent}
  U_{j,k}^S \ge \sum_{h=k+1}^{N}\delta_n y_{j,h}.
\end{equation}
Empty sums are interpreted as zero.

Define the normalized grid transfers
\begin{align*}
  P_{i,k}^0 &:=a_kx_{i,k} -\sum_{h=0}^{k-1}\delta_n x_{i,h}, \\
  R_{j,k}^0 &:=a_ky_{j,k} +\sum_{h=k+1}^{N}\delta_n y_{j,h}.
\end{align*}
Equations~\eqref{eq:finite-reduction-buyer-rent} and~\eqref{eq:finite-reduction-seller-rent} imply
\begin{equation}\label{eq:finite-reduction-budget-comparison}
  P_{i,k}^0\ge P_i^n(a_k), \qquad R_{j,k}^0\le R_j^n(a_k).
\end{equation}
Since the original rounded mechanism balances exactly at every report profile, it balances in expectation.
Therefore~\eqref{eq:finite-reduction-budget-comparison} gives
\begin{equation}\label{eq:finite-reduction-normalized-budget}
  \sum_{i\in B}\E[P_{i,K_i}^0] -\sum_{j\in S}\E[R_{j,L_j}^0] \ge0,
\end{equation}
where $a_{K_i}=\underline v_{i,n}$ and $a_{L_j}=\overline c_{j,n}$.
The expectations in~\eqref{eq:finite-reduction-normalized-budget} are taken under the rounded priors.

We now lift the allocation to the original report domain.
Define
\begin{equation*}
  \widetilde q^{\,n}(\vct v,\vct c) :=q^n(\underline{\vct v}_n,\overline{\vct c}_n).
\end{equation*}
Feasibility is immediate: at every original report profile, the lifted rule uses a lottery over $\cF$ that is feasible at the corresponding rounded profile.
Mutual independence and the definition of the pushforward priors imply that the lifted interim services are
\begin{equation}\label{eq:finite-reduction-lifted-services}
  \widetilde x_i^{\,n}(v) =x_i^n(\underline v_n), \qquad \widetilde y_j^{\,n}(c) =y_j^n(\overline c_n).
\end{equation}
Thus every lifted buyer service is nondecreasing and every lifted seller service is nonincreasing.

Define utility potentials on the full report domain by
\begin{equation*}
  \widetilde U_i^{\,n}(v) :=\int_0^v\widetilde x_i^{\,n}(z)\,\dd z, \qquad \widetilde U_j^{\,n}(c) :=\int_c^1\widetilde y_j^{\,n}(z)\,\dd z,
\end{equation*}
and define interim transfers by
\begin{equation}\label{eq:finite-reduction-lifted-transfers}
  \widetilde P_i^{\,n}(v) :=v\widetilde x_i^{\,n}(v)-\widetilde U_i^{\,n}(v), \qquad \widetilde R_j^{\,n}(c) :=c\widetilde y_j^{\,n}(c)+\widetilde U_j^{\,n}(c).
\end{equation}
The monotonicity in~\eqref{eq:finite-reduction-lifted-services} makes the buyer potentials convex and nondecreasing and the seller potentials convex and nonincreasing.
At every jump point, the selected values $\widetilde x_i^{\,n}(v)$ and $-\widetilde y_j^{\,n}(c)$ belong to the relative subdifferentials of their respective potentials.
Hence \cref{lem:bic-subgradient} implies full-domain BIC and interim IR.

The one-sided cell conventions make the transfers in~\eqref{eq:finite-reduction-lifted-transfers} constant on rounding cells.
Specifically, for $k<N$ and $v\in[a_k,a_{k+1})$,
\begin{equation*}
  \widetilde P_i^{\,n}(v)=P_{i,k}^0,
\end{equation*}
with the same identity at $v=1$ for $k=N$.
Likewise, for $k\ge1$ and $c\in(a_{k-1},a_k]$,
\begin{equation*}
  \widetilde R_j^{\,n}(c)=R_{j,k}^0,
\end{equation*}
with the same identity at $c=0$ for $k=0$.
Consequently, taking expectation under the original priors is equivalent to taking expectation of the normalized grid transfers under the rounded priors.
By~\eqref{eq:finite-reduction-normalized-budget},
\begin{equation*}
  \sum_{i\in B}\E[\widetilde P_i^{\,n}(v_i)] -\sum_{j\in S}\E[\widetilde R_j^{\,n}(c_j)] \ge0.
\end{equation*}
Thus the lifted allocation has a full-domain BIC and interim-IR implementation that is ex-ante weakly budget balanced.
Taking the preliminary transfers to depend only on each agent's own report realizes the interim transfers above.
Applying \cref{lem:balancing} then gives deterministic transfers that balance exactly at every report profile, without changing the allocation or GFT.

It remains to compare gains from trade.
Whenever edge $(i,j)$ is selected at an original profile,
\begin{equation*}
  v_i-c_j \ge \underline v_{i,n}-\overline c_{j,n}.
\end{equation*}
Therefore, at every type profile,
\begin{equation*}
  \GFT(\widetilde q^{\,n};\vct v,\vct c) \ge \GFT(q^n;\underline{\vct v}_n,\overline{\vct c}_n).
\end{equation*}
The rounded profile on the right has exactly the product distribution of $\mathcal I_n$.
Taking expectation proves that the lifted mechanism has GFT at least the GFT of $q^n$ in the rounded instance.
\end{proof}

\begin{proof}[Proof of \cref{prop:finite-support-reduction}]
Fix an arbitrary Borel-prior instance $\mathcal I$.
If $R_{\max}=0$, then $\FB(\mathcal I)=0$ and the zero mechanism proves the claim.
Assume henceforth that $R_{\max}\ge1$.

For every $n$, the rounded instance $\mathcal I_n$ has finite-support priors.
By assumption,
\begin{equation*}
  \SB_n\ge\beta\FB_n.
\end{equation*}
Fix $\varepsilon>0$.
By the definition of $\SB_n$ as a supremum, there exists a full-domain BIC and interim-IR mechanism $(q^{n,\varepsilon},p^{n,\varepsilon},r^{n,\varepsilon})$ for $\mathcal I_n$ with exact profile-wise balance such that
\begin{equation*}
  \GFT(q^{n,\varepsilon})\ge\SB_n-\varepsilon.
\end{equation*}
Applying \cref{lem:finite-reduction-lift} to this mechanism gives a feasible mechanism for $\mathcal I$ with GFT at least $\SB_n-\varepsilon$.
Therefore,
\begin{equation*}
  \SB(\mathcal I)\ge\SB_n-\varepsilon.
\end{equation*}
Letting $\varepsilon\downarrow0$ gives $\SB(\mathcal I)\ge\SB_n$.
Consequently,
\begin{equation*}
  \SB(\mathcal I) \ge \SB_n \ge \beta\FB_n \ge \beta\bigl(\FB(\mathcal I)-2R_{\max}\delta_n\bigr),
\end{equation*}
where the last inequality is \cref{lem:finite-reduction-first-best}.
Letting $n\to\infty$ gives
\begin{equation*}
  \SB(\mathcal I)\ge\beta\FB(\mathcal I).
\end{equation*}
Every finite-support prior is a Borel prior, so the Borel worst-case ratio is at most the finite-support worst-case ratio.
Applying the implication just proved with $\beta$ equal to the finite-support infimum gives the reverse inequality.
\end{proof}

\section{Proof of the finite-support budget characterization}
\label{app:finite-budget-characterization}

This appendix proves \cref{lem:finite-budget-identity}.
The support-level payment identity is the discrete counterpart of Br\"ustle et al.~\cite[Theorem~7 and Lemma~12]{BrustleCaiWuZhao2017}; the final step applies our full-domain completion.

\begin{proof}[Proof of \cref{lem:finite-budget-identity}]
For one buyer, suppress the agent index and write the support as $v_1<\cdots<v_L$, probabilities $f_\ell$, and services $x_1\le\cdots\le x_L$.
BIC and interim IR imply
\begin{equation*}
  U_1\ge0,\qquad U_{\ell+1}\ge U_\ell+(v_{\ell+1}-v_\ell)x_\ell.
\end{equation*}
The smallest truthful utility vector sets $U_1=0$ and makes these upward adjacent constraints bind.
Monotonicity of the service rule ensures that the resulting utility and payment vectors satisfy all BIC constraints.
The corresponding expected payment is
\begin{align*}
  \sum_{\ell=1}^L f_\ell(v_\ell x_\ell-U_\ell)
  &=\sum_{\ell=1}^L f_\ell v_\ell x_\ell
    -\sum_{h=1}^{L-1}(v_{h+1}-v_h)x_h\sum_{\ell>h}f_\ell\\
  &=\sum_{\ell=1}^L f_\ell\phi(v_\ell)x_\ell.
\end{align*}
For one seller with support $c_1<\cdots<c_H$, probabilities $g_h$, and services $y_1\ge\cdots\ge y_H$, the smallest truthful utility vector sets $U_H=0$ and makes the downward adjacent constraints bind:
\begin{equation*}
  U_h=\sum_{\ell=h}^{H-1}(c_{\ell+1}-c_\ell)y_{\ell+1}.
\end{equation*}
Monotonicity again ensures that all BIC constraints hold.
The corresponding expected receipt is
\begin{equation*}
  \sum_{h=1}^H g_h(c_hy_h+U_h)=\sum_{h=1}^H g_h\psi(c_h)y_h.
\end{equation*}
Summing buyer payments and subtracting seller receipts, then applying iterated expectation to the interim service rules, gives exactly $\cB(q)$.
Any other BIC and interim-IR implementation weakly increases the relevant truthful utilities and therefore weakly decreases this difference.
By \cref{lem:null-completion}, the normalized allocation and transfers extend to the full report domain without changing any support-type payment or expected budget, while every full-domain mechanism restricts to the support-level constraints above.
This proves the maximal-budget claim and the final equivalence.
\end{proof}

\section{Proof of the seller contraction reduction}
\label{app:seller-contraction}
This appendix proves the seller side of the contraction lemma \cref{lem:seller-contraction}.

\begin{proof}[Proof of \cref{lem:seller-contraction}]
Because \(\mathcal D\) is a down-set, every vertical section is a prefix.
Define
\begin{equation*}
  \ell_S(\rho) := \sup\bigl\{ \sigma\in[0,1]:(\rho,\sigma)\in\mathcal D \bigr\},
\end{equation*}
with the supremum of an empty section interpreted as zero.

For a fixed realization of \(Z\), write \((\rho',\sigma'):=(\theta\rho,\eta\sigma)\) for the coordinates in the contracted region \(\mathcal D_Z\).
The change of variables gives
\begin{align*}
  &\int_{\mathcal D_Z} \bigl(Q_j(\sigma')+\alpha\psi_j(Q_j(\sigma'))\bigr) \,\dd\rho'\,\dd\sigma'\\
  &\qquad= \theta\eta \int_0^1\int_0^{\ell_S(\rho)} \bigl(Q_j(\eta\sigma)+\alpha\psi_j(Q_j(\eta\sigma))\bigr) \,\dd\sigma\,\dd\rho\\
  &\qquad= \theta \int_0^1\int_0^{\eta\ell_S(\rho)} \bigl(Q_j(w)+\alpha\psi_j(Q_j(w))\bigr) \,\dd w\,\dd\rho,
\end{align*}
where the last equality uses \(w=\eta\sigma\).

We first establish the seller prefix inequality
\begin{equation*}
  \int_0^z\psi_j(Q_j(r))\,\dd r \le zQ_j(z) \qquad\text{for every }z\in[0,1].
\end{equation*}
By the support notation introduced in \cref{subsec:finite-lagrangian},
\begin{equation*}
  Q_j(\sigma)=c_{j,h} \qquad \text{for } \sigma\in(G_{j,h-1},G_{j,h}].
\end{equation*}
For the lowest seller type, $\psi_j(c_{j,1})=c_{j,1}$, so the prefix inequality holds with equality throughout $(0,G_{j,1}]$ and also at $z=0$.
For $h\ge2$, the definition of the virtual cost gives
\begin{equation*}
  g_{j,h}\psi_j(c_{j,h}) = G_{j,h}c_{j,h} -G_{j,h-1}c_{j,h-1}.
\end{equation*}
These identities telescope to
\begin{equation*}
  \int_0^{G_{j,h}}\psi_j(Q_j(r))\,\dd r =G_{j,h}c_{j,h}.
\end{equation*}
For $h\ge2$, the difference
\begin{equation*}
  zQ_j(z)-\int_0^z\psi_j(Q_j(r))\,\dd r
\end{equation*}
is affine on $(G_{j,h-1},G_{j,h}]$, has right-hand limit $G_{j,h-1}(c_{j,h}-c_{j,h-1})\ge0$ at the left boundary, and equals zero at the right endpoint.
At the left boundary itself, the preceding quantile interval gives value zero.
This proves the seller prefix inequality on all of $[0,1]$, and consequently
\begin{equation*}
  \int_0^z \bigl(Q_j(r)+\alpha\psi_j(Q_j(r))\bigr)\,\dd r \le \int_0^zQ_j(r)\,\dd r+ \alpha zQ_j(z).
\end{equation*}

Apply this inequality with $z=\eta\ell_S(\rho)$.
We use the symmetric form of the scale identity established in the proof of \cref{lem:buyer-contraction}, obtained by interchanging \(\theta\) and \(\eta\).
Thus, \eqref{eq:scale-identity} gives
\begin{equation*}
  \E_Z\left[ \theta\left\{ \int_0^{\eta\ell_S(\rho)}Q_j(w)\,\dd w +\alpha\eta\ell_S(\rho)Q_j(\eta\ell_S(\rho)) \right\} \right] =g_\alpha\int_0^{\ell_S(\rho)}Q_j(w)\,\dd w.
\end{equation*}
Integrating over $\rho$ proves the stated inequality.
\end{proof}

\end{document}